\DeclareMathOperator{\VCSP}{VCSP}
\DeclareMathOperator{\VCSPp}{VCSP_p}
\DeclareMathOperator{\Feas}{Feas}
\DeclareMathOperator{\Opt}{Opt}
\DeclareMathOperator{\mnrt}{Mn}
\DeclareMathOperator{\mjrt}{Mj}
\DeclareMathOperator{\Pol}{Pol}
\DeclareMathOperator{\ar}{ar}
\newcommand{\mmorp}[1]{\langle #1 \rangle}
\newcommand{\wCloneg}[1]{\operatorname{wClone}(#1)}
\newcommand{\wClonep}[1]{\operatorname{wClone_p}(#1)}
\newcommand{\base}[2]{\tup{e}_{#1}^{#2}}
\newcommand{\negate}[1]{\overline{#1}}
\newcommand{\class}[1]{\textsc{#1}}
\newcommand{\tuple}[1]{\ensuremath{(#1)}}
\newcommand{\tup}[1]{\ensuremath{\mathbf #1}}
\newcommand{\Q}{\mathbb{Q}}
\newcommand{\Qn}{\ensuremath{\mathbb{Q}_{\geq 0}}}
\newcommand{\QInfty}{\overline{\Q}}
\newcommand{\proj}[2]{\ensuremath{\pi^{(#1)}_{#2}}}
\newcommand{\closure}[1]{#1^*}
\newcommand{\clGamma}{\closure{\Gamma}}
\newcommand{\Projection}[2]{\operatorname{Pr}_{#1}(#2)}
\newcommand{\STP}{\mmorp{\sqcap, \sqcup}}
\newcommand{\MJN}{\mmorp{\operatorname{Mj}_1, \operatorname{Mj}_2, \operatorname{Mn}_3}}
\newcommand{\join}{\cdot}
\newcommand{\joinIJ}[2]{\tup{#1}_I \join \tup{#2}_J}
\newcommand{\joinIJp}[2]{\tup{#1}_I \join \tup{#2}_{J'}}
\newcommand{\joinIJpk}[3]{\tup{#1}_I \join \tup{#2}_{J'} \join #3_k}
\newcommand{\PrGamma}{\Projection{i,j}{\Feas(\gamma)}}
\newtheorem{theorem}{Theorem}
\newtheorem{lemma}[theorem]{Lemma}
\theoremstyle{definition}
\newtheorem{definition}[theorem]{Definition}
\newtheorem{example}[theorem]{Example}
\theoremstyle{remark}
\begin{document}
\title{On Planar Valued CSPs\thanks{An extended abstract of part of this work
appeared in the 
\emph{Proceedings of the 41st International Symposium on Mathematical
Foundations of Computer Science (MFCS)}, 2016~\cite{fz16:mfcs}. The authors
were supported by a Royal Society Research Grant. The work was partly done while the
authors were visiting the Simons Institute for the Theory of Computing at UC
Berkeley. Stanislav \v{Z}ivn\'y was
supported by a Royal Society University Research Fellowship.
This project has received funding from the European Research Council (ERC) under
the European Union's Horizon 2020 research and innovation programme (grant
agreement No 714532). The paper reflects only the authors' views and not the
views of the ERC or the European Commission. The European Union is not liable
for any use that may be made of the information contained therein.}}

\author{
Peter Fulla\\
University of Oxford, UK\\
\texttt{peter.fulla@cs.ox.ac.uk}
\and
Stanislav \v{Z}ivn\'{y}\\
University of Oxford, UK\\
\texttt{standa.zivny@cs.ox.ac.uk}
}

\date{}
\maketitle

\begin{abstract}

We study the computational complexity of planar valued constraint satisfaction
problems (VCSPs), which require the \emph{incidence} graph of the instance be planar. 
First, we show that intractable \emph{Boolean} VCSPs have to
be \emph{self-complementary} to be tractable in the planar setting, thus
extending a corresponding result of Dvo\v{r}\'ak and Kupec [ICALP'15] from CSPs
to VCSPs. Second, we give a complete complexity classification of
\emph{conservative} planar VCSPs on arbitrary finite domains. 
In this case planarity does not lead to any new tractable cases and thus our
classification is a sharpening of the classification of conservative VCSPs by
Kolmogorov and \v{Z}ivn\'y [JACM'13].

\end{abstract}

\section{Introduction}
\label{sec:intro}

The valued constraint satisfaction problem (VCSP) is a far-reaching
generalisation of many natural satisfiability, colouring, minimum-cost
homomorphism, and min-cut problems~\cite{Hell08:survey,kz17:survey}. It is
naturally parametrised by its domain and a valued constraint language. A
\emph{domain} $D$ is an arbitrary finite set. A \emph{valued constraint
language}, or just a language, $\Gamma$ is a (usually finite) set of
weighted relations; each weighted relation $\gamma\in\Gamma$ is a function
$\gamma:D^{\ar(\gamma)}\to\QInfty$, where $\ar(\gamma)\in\mathbb{N}^+$ is the
\emph{arity} of $\gamma$ and $\QInfty=\mathbb{Q}\cup\{\infty\}$ is the set of
extended rationals. 

An \emph{instance} $I=(V,D,C)$ of the VCSP on domain $D$
is given by a finite set of $n$ variables $V=\{x_1,\ldots,x_n\}$ and an
objective function $C:D^n\to\QInfty$ expressed as a weighted sum of \emph{valued constraints}
over $V$, i.e.\ $C(x_1,\ldots,x_n)=\sum_{i=1}^q w_i\cdot\gamma_i(\tup{x}_i)$,
where $\gamma_i$ is a weighted relation, $w_i\in\Qn$ is the \emph{weight} and
$\tup{x}_i\in V^{\ar(\gamma_i)}$ the \emph{scope} of the $i$th valued
constraint. (We note that we allow zero weights and for $w_i=0$ we define
$w_i\cdot\infty=\infty$.) Given an instance $I$,
the goal is to find an assignment $s:V\to D$ of domain labels to the variables
that \emph{minimises} $C$. Given a language $\Gamma$, we denote by
$\VCSP(\Gamma)$ the class of all instances $I$ that use only weighted relations
from $\Gamma$ in their objective function.

We now provide a few examples of languages on $D=\{0,1\}$. If $\Gamma_{\sf
nae}=\{\rho\}$ with $\rho(x,y,z)=\infty$ if $x=y=z$ and $\rho(x,y,z)=0$
otherwise, then $\VCSP(\Gamma_{\sf nae})$ captures precisely the
\class{NAE-$3$-Sat} (Not-All-Equal $3$-Satisfiability) problem. To see this,
observe that any instance of $\VCSP(\Gamma_{\sf nae})$ is equivalent to an
instance of \class{NAE-$3$-Sat} over the same variables, each constraint giving
a ternary clause (weights are without effect in this case).
If $\Gamma_{\sf cut}=\{\gamma\}$ with $\gamma(x,y)=1$ if $x=y$ and
$\gamma(x,y)=0$ otherwise, then $\VCSP(\Gamma_{\sf cut})$ captures precisely the weighted
\class{Min-UnCut} problem. If $\Gamma_{\sf is}=\{\rho,\gamma\}$ with
$\rho(x,y)=\infty$ if $x=y=1$ and $\rho(x,y)=0$ otherwise, and $\gamma(x)=1-x$,
then $\VCSP(\Gamma_{\sf is})$ captures precisely the weighted \class{Maximum Independent
Set} problem. Minimisation of bounded-arity submodular functions (or
equivalently, submodular pseudo-Boolean polynomials of bounded degree)
corresponds to $\VCSP(\Gamma_{\sf sub})$ for $\Gamma_{\sf sub}$ consisting of
all weighted relations $\gamma$ that satisfy
$\gamma(\min(\tup{x},\tup{y}))+\gamma(\max(\tup{x},\tup{y}))\leq\gamma(\tup{x})+\gamma(\tup{y})$,
where $\min$ and $\max$ are applied componentwise.

We will be concerned with \emph{exact} solvability of VCSPs. A language
$\Gamma$ is called \emph{tractable} if $\VCSP(\Gamma')$ can be solved (to
optimality) in polynomial time for every finite subset $\Gamma'\subseteq\Gamma$,
and $\Gamma$ is called \emph{intractable} if $\VCSP(\Gamma')$ is NP-hard for
some finite $\Gamma'\subseteq\Gamma$. For instance, $\Gamma_{\sf sub}$ is
tractable~\cite{Cohen06:complexitysoft} whereas $\Gamma_{\sf nae}$, $\Gamma_{\sf
cut}$, $\Gamma_{\sf is}$ are intractable~\cite{Garey79:intractability}.

\subsection{Contribution}

Languages on a two-element domain are called \emph{Boolean}. The
complexity of Boolean valued constraint languages is well understood and eight
tractable cases have been identified~\cite{Cohen06:complexitysoft}. Suppose that
a Boolean language $\Gamma$ is intractable. We are interested in restrictions that can be
imposed on input instances of $\VCSP(\Gamma)$ that make the problem
tractable. A natural way is to restrict the \emph{incidence graph} of the
instance (the precise definition is given in Section~\ref{sec:prelim}). In this
paper we initiate the study of the \emph{planar} variant of the VCSP.

We denote by $\VCSPp(\Gamma)$ the class of instances $I$ of $\VCSP(\Gamma)$ with
planar incidence graph (with an additional requirement that leads to a finer
classification, as discussed in detail in Section~\ref{sec:prelim}). Language $\Gamma$ is called
\emph{planarly-tractable} if $\VCSPp(\Gamma')$ can be solved (to optimality) in
polynomial time for every finite subset $\Gamma'\subseteq\Gamma$, and it
is called \emph{planarly-intractable} if $\VCSPp(\Gamma')$ is NP-hard for some
finite $\Gamma'\subseteq\Gamma$. For instance, while $\Gamma_{\sf nae}$,
$\Gamma_{\sf cut}$, and $\Gamma_{\sf is}$ are
intractable, it is known that $\Gamma_{\sf nae}$
and $\Gamma_{\sf cut}$ are
planarly-tractable~\cite{Moret88:sigact,Hadlock75:sicomp} whereas $\Gamma_{\sf
is}$ is planarly-intractable~\cite{Garey77:siam}.
The problem of classifying all intractable languages as planarly-tractable and
planarly-intractable is challenging and open even for Boolean valued
constraint languages.

A Boolean valued constraint language $\Gamma$ is called
\emph{self-complementary} if every $\gamma\in\Gamma$ satisfies
$\gamma(\tup{x})=\gamma(\negate{\tup{x}})$ for every $\tup{x}\in
D^{\ar(\gamma)}$, where $\negate{\tup{x}}=\tuple{1-x_1,\ldots,1-x_{\ar(\gamma)}}$
for $\tup{x}=\tuple{x_1,\ldots,x_{\ar(\gamma)}}$. As our first contribution, we
show in Section~\ref{sec:Bool} that intractable Boolean valued constraint languages that are \emph{not}
self-complementary are planarly-intractable. We prove this by carefully
constructing planar NP-hardness gadgets for any intractable Boolean valued
constraint language that is not self-complementary, relying on the fact that all
tractable Boolean valued constraint languages are
known~\cite{Cohen06:complexitysoft}. Our result subsumes the analogous
result obtained for $\{0,\infty\}$-valued languages~\cite{Dvorak15:icalp}. We
remark that focusing on Boolean languages is natural as it avoids a number of
difficulties intrinsic to the planar setting. Let $\Gamma_{\sf
col}=\{\gamma\}$ with $\gamma(x,y)=0$ if $x\neq y$ and $\gamma(x,y)=\infty$ otherwise.
Then $\Gamma_{\sf col}$ on domain $D$ with $|D|=3$ is planarly intractable (since
$\VCSPp(\Gamma_{\sf col})$ captures precisely the \class{$3$-Colouring} problem on
planar graphs)~\cite{Garey79:intractability} but is tractable on $D$ with
$|D|=4$ for highly nontrivial reasons, namely the Four Colour Theorem.

A valued constraint language $\Gamma$ on $D$ is called \emph{conservative} if
$\Gamma$ contains all $\{0,1\}$-valued unary weighted relations. The complexity
of conservative valued constraint languages is well understood: a complete
complexity classification has been obtained in~\cite{kz13:jacm}, with a recent
simplification of both the algorithmic and the hardness part~\cite{tz15:icalp,tz17:sicomp}.
As our second contribution, we give a complete complexity classification of
conservative valued constraint languages on arbitrary finite domains with
respect to planar-tractability. In particular, we show that every intractable
conservative valued constraint language is also planarly-intractable. Hence
there are no new tractable cases in the conservative planar setting. This may
seem unsurprising but the proof is not trivial.
We remark that conservative (V)CSPs constitute a large and important fragment of
CSPs~\cite{Bulatov11:conservative} and VCSPs~\cite{kz13:jacm}. In fact, in practice most (V)CSPs are conservative~\cite{Rossi06:handbook}.

Note that for Boolean valued constraint languages that are conservative the
claim follows immediately from our first result: any intractable Boolean
language containing both $\gamma_0(x)=x$ and $\gamma_1(x)=1-x$ (guaranteed by
the conservativity assumption) is not self-complementary, and thus is
planarly-intractable. This shows that $\Gamma=\Gamma_{\sf
cut}\cup\{\gamma_0,\gamma_1\}$ is intractable, a result originally obtained
in~\cite{Barahona82:max-cut} since $\VCSPp(\Gamma)$ captures precisely the
planar \class{Min-UnCut} problem with unary weights. (In fact, the same argument
shows that both $\Gamma_{\sf cut}\cup\{\gamma_0\}$ and $\Gamma_{\sf
cut}\cup\{\gamma_1\}$ are planarly-intractable.)

As it is common in the world of CSPs, dealing with non-Boolean domains
is considerably more difficult than the case of Boolean domains. For
valued constraint languages we have a Galois connection with certain algebraic
objects~\cite{cccjz13:sicomp,fz16:toct} but no Galois connection is known
for valued constraint languages in the planar setting. Moreover, it is
unclear how to use the recent relatively simple proof of the
complexity classification of conservative valued constraint
languages~\cite{tz15:icalp} and make it work in the planar setting since the
proof depends on linear programming duality. (This is related to the lack of a Galois
connection in the planar setting. In particular, \cite[Lemma~2]{tz15:icalp},
which relates (non-planar) expressibility and operator $\Opt$, is proved via LP
duality, and it is unclear how to prove it in the planar setting.)

Our approach is to follow the original
proof of the classification of conservative valued constraint
languages~\cite{kz13:jacm}. In order to adapt the proof for the planar
setting, we significantly simplify it and generalise necessary parts.
Details on proof differences as well as challenges that
we needed to overcome to make the proof work are outlined in
Section~\ref{sec:cons}. We believe that our proof techniques, and in particular the
now simplified and generalised technique from~\cite{kz13:jacm}, will be useful
in future work on planar (V)CSPs.

\subsection{Related work}

VCSPs with $\{0,\infty\}$-valued weighted relations are just (ordinary) decision
CSPs~\cite{Feder98:monotone}. There has been a lot of work on decision CSPs,
see~\cite{Carbonnel16:constraints} for a recent survey. Most results have been
obtained for CSPs parametrised by a constraint language,
see~\cite{Barto14:survey} for a recent survey. Some of the algebraic methods
developed for CSPs~\cite{Bulatov05:classifying} have been extended to
VCSPs~\cite{cccjz13:sicomp,tz15:sidma,fz16:toct,Kozik15:icalp} and successfully
used in classifying various fragments of
VCSPs~\cite{hkp14:sicomp,ktz15:sicomp,tz16:jacm,Kolmogorov15:focs,tz15:icalp}.
However, it is unclear how to use algebraic methods for instance-restricted
classes of VCSPs (sometimes called
\emph{hybrid}~\cite{Carbonnel16:constraints}), even though there are some recent
investigations in this direction~\cite{Kolmogorov15:isaac,Takhanov15:arxiv}.

Following~\cite{Dvorak15:icalp}, we define planar VCSPs by requiring the \emph{incidence} graph be
planar. We note that an alternative option when structurally restricting classes
of (V)CSPs is to consider the \emph{Gaifman} graph, as was done for
CSPs~\cite{Grohe07:jacm}, counting CSPs~\cite{Dalmau04:side}, special cases of
VCSPs~\cite{Farnqvist07:isaac} and also in the setting of parametrised
counting~\cite{Meeks16:dam}. However, we believe that the incidence graph is the
more natural option for the planarity requirement since restricting the Gaifman graph would
exclude (V)CSPs with, for instance, any constraint of arity at least 5.

Planar restrictions have been studied for Boolean (decision)
CSPs~\cite{Dvorak15:icalp,Kazda17:soda}, for Boolean symmetric counting CSPs with
real~\cite{Cai10:focs-planar} and complex~\cite{Guo13:icalp} weights, and also
for Boolean CSPs with respect to polynomial-time approximation
schemes~\cite{Khanna96:stoc,Creignouetal:siam01}.

\section{Preliminaries}
\label{sec:prelim}

\subsection{Planar VCSPs}

Let $I$ be a VCSP instance with variables $V$ and valued constraints $S$. The
\emph{incidence graph} of $I$ is the bipartite multigraph with vertex set $S\cup
V$ and edges $\tuple{\gamma,x_i}$ for every $\gamma(x_1,\ldots,x_{\ar(\gamma)})
\in S$ and $1\leq i\leq \ar(\gamma)$.

We are interested in VCSP instances with \emph{planar} incidence graphs.
Following~\cite{Dvorak15:icalp}, we additionally require the order of edges
around constraint vertices in the plane drawing of the incidence graph
respect the order of arguments of the corresponding constraint. Note that
the variant without this additional restriction can be easily modelled by
replacing each weighted relation $\gamma$ in a language by all weighted relations obtained
from $\gamma$ by permuting the order of its inputs. Hence, this choice leads to
a finer classification.

Following~\cite{Dvorak15:icalp}, rather than working with the incidence graph,
we equivalently define the problem in terms of a related plane graph where
variables correspond to vertices and valued constraints to faces. We note that
our graphs are allowed to have loops, possibly several at a single vertex, and
parallel edges.

For a connected plane graph $G$, we denote by $F(G)$ the set of its faces. For
any face $f \in F(G)$, let $b(f)$ denote a closed walk bounding $f$,
enumerated in the clockwise order around $f$.

\begin{definition}\label{def:planeVCSP}
A \emph{plane $\VCSP$ instance} $\tuple{I,G,\phi}$ is given by a $\VCSP$ instance
$I$ with variables $V$ and objective function $C$ with $q$ valued constraints,
a connected plane graph $G$ over vertices $V$,
and an injective mapping $\phi : \{ 1, \dots, q \} \to F(G)$ such that for every
valued constraint $\gamma_i(x_1, x_2, \dots, x_{\ar(\gamma_i)})$ it holds $b(\phi(i)) = x_1
x_2 \dots{} x_{\ar(\gamma_i)} x_1$.
\end{definition}

\begin{example}\label{ex:inst}
Let $V=\{x_1,x_2,x_3,x_4\}$ and $C(x_1,x_2,x_3,x_4)=2\cdot\gamma_1(x_1)+0\cdot\gamma_2(x_2,x_3,x_1)+\gamma_3(x_3,x_2)+\frac{5}{3}\cdot\gamma_4(x_3,x_4)$.
The (non-planar drawing of the planar) incidence graph of this instance is depicted in Figure~\ref{fig:inst}(a).
The plane graph of the instance from Definition~\ref{def:planeVCSP} is depicted
in Figure~\ref{fig:inst}(b).
\end{example}

\begin{figure}[th]
    \begin{center}
    \begin{tikzpicture}[scale=1,node distance = 1.5cm]
    \tikzstyle{vertex}=[fill=black, draw=black, circle, inner sep=2pt]
    \tikzstyle{dist}  =[fill=white, draw=black, circle, inner sep=2pt]

    \begin{scope}[shift={(-1.0,-0.5)}] 
        \node[vertex] (x1) at (0,3) [label=180:$x_1$] {};
        \node[vertex] (x2) at (0,2) [label=180:$x_2$] {};
        \node[vertex] (x3) at (0,1) [label=180:$x_3$] {};
        \node[vertex] (x4) at (0,0) [label=180:$x_4$] {};
        
        \node[vertex] (g1) at (2.5,3) [label=0:$\gamma_1$] {};
        \node[vertex] (g2) at (2.5,2) [label=0:$\gamma_2$] {};
        \node[vertex] (g3) at (2.5,1) [label=0:$\gamma_3$] {};
        \node[vertex] (g4) at (2.5,0) [label=0:$\gamma_4$] {};

        \draw (g1) -- (x1); \draw (g2) -- (x1); \draw (g2) -- (x2); \draw (g2) -- (x3); \draw (g3) -- (x2); \draw (g3) -- (x3); \draw (g4) -- (x3); \draw (g4) -- (x4);

        \node at (1.25,-1) {a)};
    \end{scope}

    \begin{scope}[shift={(8.0,-0.5)}] 
        \node[vertex] (x1) at (1.25,2.0) [label=180:$x_1$] {};
        \node[vertex] (x2) at (2.5,0.5) [label={[xshift=+0.1cm]$x_2$}] {};
        \node[vertex] (x3) at (0,0.5) [label={[xshift=-0.1cm]$x_3$}] {};
        \node[vertex] (x4) at (-2.5,0.5) [label=$x_4$] {};

        \draw (x1) -- (x2); \draw (x1) -- (x3); 
        \draw (x2) -- (x3);
        \draw (x1).. controls (0,3.5) and (2.5,3.5) .. (x1); 
        \draw (x2).. controls (1.25,-0.5) .. (x3); 
        \draw (x3) -- (x4);
        \draw (x3).. controls (-1.25,-0.5) .. (x4); 

        \node[] (g1) at (1.25,2.60) [] {$\gamma_1$};
        \node[] (g2) at (1.25,1.05) [] {$\gamma_2$};
        \node[] (g3) at (1.25,0.1) [] {$\gamma_3$};
        \node[] (g4) at (-1.25,0.1) [] {$\gamma_4$};

        \node at (0,-1) {b)};
    \end{scope}
    \end{tikzpicture}
    \end{center}
\caption{Graphs from Example~\ref{ex:inst}.}
\label{fig:inst}
\end{figure}
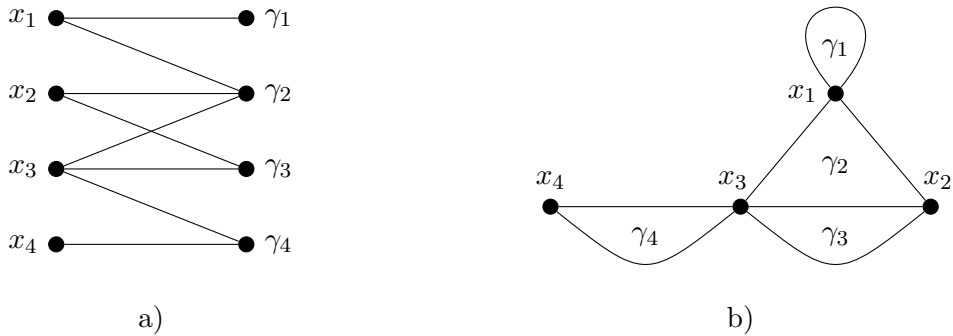

We note that the definition of a \emph{planar} $\VCSP$ instance, in which case
the graph $G$ and mapping $\phi$ are not given, is equivalent to
Definition~\ref{def:planeVCSP}. This is because, as mentioned
in~\cite{Dvorak15:icalp}, checking whether a VCSP instance  $I$ has a planar
representation, and if so then finding $\tuple{I,G,\phi}$, can be done in
polynomial time~\cite{Hopcroft74:planarity}. For simplicity of presentation, we will
assume that graph $G$ and mapping $\phi$ are given.

We denote by $\VCSPp(\Gamma)$ the class of plane $\VCSP$ instances over the language $\Gamma$.

\subsection{Planar Weighted Relational Clones}

In this section, we define planar weighted relational clones, which are closures
of valued constraint languages that do not change the tractability of
corresponding planar VCSPs.

We define \emph{relations} as a special case of weighted relations (also called
\emph{crisp}) with range $\{0, \infty\}$, where value $0$ is assigned to tuples
that are elements of the relation in the conventional sense. For a weighted
relation $\gamma:D^r\to\QInfty$, we denote by
$\Feas(\gamma)=\{\tup{x}\in D^r~|~\gamma(\tup{x})<\infty\}$ the underlying
\emph{feasibility relation}, and by $\Opt(\gamma)=\{\tup{x}\in
\Feas(\gamma)~|~\gamma(\tup{x})\leq \gamma(\tup{y}) \mbox{ for every
}\tup{y}\in D^r\}$ the relation of minimal-value (or \emph{optimal}) tuples. We
also write $\Feas(\gamma)=0\cdot\gamma$ and see the $\Feas$ operator as scaling
a weighted relation by zero, where we define $0\cdot\infty=\infty$.

An assignment $s:V\to D$ for a VCSP instance $(V,D,C)$ with
$V=\{x_1,\ldots,x_n\}$ is called \emph{feasible} if $C(s(x_1),\ldots,s(x_n))<\infty$.

\begin{definition}
Let $\tuple{I,G,\phi}$ be a plane $\VCSP$ instance such that $\phi$ does not map any $i$ to the
outer face $f_o$ of $G$, and let $\tup{v} = (v_1, \dots, v_r)$ be an $r$-tuple
of variables from $V$ such that $b(f_o) = v_r v_{r-1} \dots v_1 v_r$. 
We denote by $\pi_\tup{v}(I)$ the $r$-ary weighted relation mapping any $\tup{x}
\in D^r$ to the minimum objective value obtained by feasible assignments $s$ of
$I$ with $s(\tup{v}) = \tup{x}$, or $\infty$ if no such feasible assignment
exists.

An $r$-ary weighted relation $\gamma$ is \emph{planarly expressible} from a
valued constraint language $\Gamma$ if there exists a plane instance $I$ over
$\Gamma$ and an $r$-tuple $\tup{v}$ of its variables such that $\pi_\tup{v}(I) =
\gamma$.
\end{definition}

\begin{example}\label{ex:inst2}
Let $V=\{x_1,x_3,x_3,z\}$, $D=\{0,1\}$, and $C(x_1,x_3,x_3,z)=\gamma(x_1,z)+\gamma(x_2,z)+\gamma(x_3,z)$ be
a plane VCSP instance $(I,G,\phi)$ depicted in Figure~\ref{fig:inst2}, where $\gamma$ is the
binary ``cut'' weighted relation from Section~\ref{sec:intro}; i.e.,
$\gamma(x,y)=1$ if $x=y$ and $\gamma(x,y)=0$ otherwise. Then
$\rho=\pi_{(x_1,x_2,x_3)}(I)$ is a ternary weighted relation planarly
expressible from $\{\gamma\}$, where $\rho(x,y,z)=0$ if $x=y=z$ and
$\rho(x,y,z)=1$ otherwise.
\end{example}

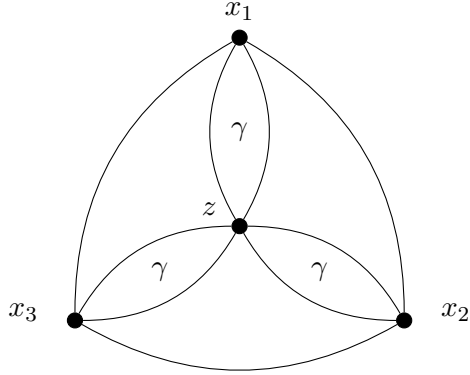
\begin{figure}[h]
    \begin{center}
    \begin{tikzpicture}[scale=1,node distance = 1.5cm]
    \tikzstyle{vertex}=[fill=black, draw=black, circle, inner sep=2pt]
    \tikzstyle{dist}  =[fill=white, draw=black, circle, inner sep=2pt]

        \node[vertex] (z) at (0,0) [label={[xshift=-0.4cm,yshift=-0.1cm]$z$}] {};
        \node[vertex] (x1) at (0,2.5) [label=90:$x_1$] {};
        \node[vertex] (x2) at (2.165,-1.25) [label={[right,xshift=+0.35cm]$x_2$}] {};
        \node[vertex] (x3) at (-2.165,-1.25) [label={[left,xshift=-0.35cm]$x_3$}] {};

        \draw (x1) to [bend right] (z);
        \draw (x1) to [bend left] (z);
        \draw (z) to [bend left] (x2);
        \draw (z) to [bend right] (x2);
        \draw (z) to [bend left] (x3);
        \draw (z) to [bend right] (x3);

        \draw (x1) to [bend right] (x3);
        \draw (x1) to [bend left] (x2);
        \draw (x2) to [bend left] (x3);

        \node[] (r1) at (0,1.25) [] {$\gamma$};
        \node[] (r2) at (1.05,-0.6) [] {$\gamma$};
        \node[] (r2) at (-1.05,-0.6) [] {$\gamma$};

    \end{tikzpicture}
    \end{center}
\caption{Instance from Example~\ref{ex:inst2}.}
\label{fig:inst2}
\end{figure}

To see that planar expressibility is a proper restriction of (unrestricted) 
expressibility~\cite{Cohen06:complexitysoft}, consider relations $\rho_{=} = \{(0,0), (1,1)\}$
and $\rho_\text{cross} = \{(0,0,0,0), (0,1,0,1), (1,0,1,0), (1,1,1,1)\}$ on
domain $D = \{0, 1\}$. Relation $\rho_\text{cross}$ is expressible from the
binary equality relation $\rho_{=}$, because $\rho_\text{cross}(x_1,x_2,x_3,x_4)
= \rho_{=}(x_1,x_3) + \rho_{=}(x_2,x_4)$. However, it is not planarly
expressible. This can be proved unconditionally but here we give a simpler
argument assuming P $\neq$ NP:

Relation $\rho_{=}$ can be included in any valued constraint language without
affecting its complexity (see Lemma~\ref{lmEqInClone} and
Theorem~\ref{thmpClone} below). On the other hand, relation $\rho_\text{cross}$
enables bypassing the planarity restriction; languages from which
$\rho_\text{cross}$ is planarly expressible have the same complexity in the
planar setting as in general \cite{Dvorak15:icalp}. Consequently, if
$\rho_\text{cross}$ is planarly expressible from $\rho_{=}$ then (say) the
\class{NAE-$3$-Sat} problem on general instances can be solved in polynomial
time.

\begin{definition}
A \emph{planar weighted relational clone} is a non-empty set of weighted
relations over the same domain that is closed under planar expressibility,
scaling by non-negative rational constants, addition of rational constants,
and operator $\Opt$. We will denote the smallest planar weighted relational clone
containing a valued constraint language $\Gamma$ by $\wClonep{\Gamma}$.
\end{definition}

An analogous notion of weighted relational clones closed under \emph{general}
(i.e.\ not necessarily planar) expressibility~\cite{cccjz13:sicomp,fz16:toct}
has been used to study the complexity of VCSPs.

\begin{lemma}
\label{lmEqInClone}
For any domain $D$ and language $\Gamma$ on $D$, the binary equality relation
$\rho_{=}$ on $D$ belongs to $\wClonep{\Gamma}$.
\end{lemma}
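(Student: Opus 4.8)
The plan is to exhibit, for each domain $D$, a small plane VCSP instance over $\Gamma$ whose projection onto a suitable pair of boundary variables is exactly $\rho_{=}$, and then invoke closure under $\Opt$ and under addition/scaling of constants to clean up the construction. The natural starting point is to take any weighted relation $\gamma \in \Gamma$ (the clone is generated by a non-empty language, so one exists) and build an instance on two variables $x$ and $y$ whose only feasible assignments are the ``diagonal'' $x=y$. The subtlety is that a single copy of $\gamma$ need not have this property, so the first step is to manufacture, inside $\wClonep{\Gamma}$, a weighted relation that is finite only on a set of the form $\{(d,d) : d \in D\}$ for some nonempty subset, and from there to reach the full diagonal.

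Concretely, I would first use the $\Opt$ operator together with planar expressibility to obtain, from $\gamma$, some nonempty crisp relation: applying $\Opt$ to $\gamma$ (after identifying all but one coordinate, which is a legal planar operation since collapsing consecutive boundary arguments keeps the drawing planar) yields a unary crisp relation $\mu$ with $\emptyset \neq \Feas(\mu) \subseteq D$. Next, from $\mu$ I would construct the binary relation $\rho_{\le_\mu}$ that is the ``equality on $\Feas(\mu)$'' relation: place two variables $x,y$ on the outer face, add a constraint forcing $x \in \Feas(\mu)$ and one forcing $y \in \Feas(\mu)$, plus a binary copy of a relation that equates them — but the circularity is that equating them is exactly what we are trying to build. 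The clean way around this is the standard trick: since the outer-face tuple $\tup v = (x,y)$ can have $x$ and $y$ be the \emph{same} physical vertex visited twice by the boundary walk, the projection $\pi_{(x,x)}(I)$ of any instance already lands on the diagonal. So the real content is to first produce a relation supported on all of $D^1$ (i.e. a constant, which we get for free via addition of constants), place a single vertex $v$ with a trivial incident constraint, and read off $\pi_{(v,v)}$, which is precisely $\rho_{=}$ on $D$ up to an additive constant that we then remove.

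I expect the main obstacle to be bookkeeping about what the boundary walk $b(f_o)$ is allowed to look like — in particular, justifying that we may take the two coordinates of the projecting tuple to be equal, or more generally that the degenerate drawings we need (a vertex with a loop, or a two-vertex graph with parallel edges bounding the outer face) genuinely fit Definition~\ref{def:planeVCSP}. Once that is pinned down, the argument is short: take a plane instance whose underlying graph is a single vertex $v$ carrying one loop labelled by the unary constant weighted relation $\gamma_0 \equiv 0$ obtained from addition of constants applied to $\mu$ shifted and then $\Opt$-ed down, note its inner face carries that constraint and its outer face has boundary walk $v\,v\,v$, and project onto $\tup v = (v,v)$; the result assigns $0$ to $(d,d)$ for all $d$ and $\infty$ off the diagonal, which is exactly $\rho_{=}$. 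Closure of $\wClonep{\Gamma}$ under planar expressibility then gives $\rho_{=} \in \wClonep{\Gamma}$, completing the proof.
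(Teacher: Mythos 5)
Your final construction is essentially the paper's (a one-vertex plane instance, reading off the diagonal via the outer face with $\tup{v}=(v,v)$), but the two details you yourself flagged as the crux are exactly where the write-up goes wrong. First, with a \emph{single} self-loop the outer face is bounded by the closed walk $v\,v$ (one edge), not $v\,v\,v$; Definition~\ref{def:planeVCSP} requires $b(f_o)=v_2v_1v_2$ for a binary projection, i.e.\ a walk of length two, so you need \emph{two} self-loops at $v$, drawn side by side so that both bound the outer face --- this is precisely the paper's instance. Second, the unary weighted relation you propose to place in the inner face is not the constant-$0$ relation: $\Opt(\mu)$ equals $\infty$ off the minimisers of $\mu$ (and your $\mu$, obtained from an arbitrary $\gamma\in\Gamma$, could even have empty support if $\gamma\equiv\infty$), and neither scaling nor addition of rational constants can turn an $\infty$ into a finite value. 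Used as stated, your instance would yield equality restricted to $\Feas(\Opt(\mu))$ rather than $\rho_{=}$ on all of $D$. The clean fix is to use \emph{no} valued constraints at all: $\phi$ in Definition~\ref{def:planeVCSP} is injective but not required to be surjective, the empty objective function is identically $0$, and the projection of the two-loop, zero-constraint instance onto $(v,v)$ is exactly $\rho_{=}$. With these two corrections your argument collapses to the paper's one-line proof, and the entire preceding detour through $\gamma$, $\mu$, and $\Opt$ is unnecessary.
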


\begin{proof}
Relation $\rho_{=}$ is planarly expressible by a plane instance consisting of a
single variable $x$ with two self-loops, and $\tup{v} = (x, x)$.
\end{proof}

\begin{theorem}
\label{thmpClone}
For any valued constraint language $\Gamma$, $\Gamma$ is planarly-tractable if,
and only if, $\wClonep{\Gamma}$ is planarly-tractable, and $\Gamma$ is
planarly-intractable if, and only if, $\wClonep{\Gamma}$ is planarly-intractable.
\end{theorem}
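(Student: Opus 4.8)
The plan is to prove the equivalence by showing two things: first, that adding to $\Gamma$ any weighted relation from $\wClonep{\Gamma}$ does not change planar-tractability; and second, that this suffices since $\wClonep{\Gamma}$ is the union of all such extensions. More precisely, it is enough to prove that for any finite $\Gamma' \subseteq \wClonep{\Gamma}$ there is a finite $\Gamma'' \subseteq \Gamma$ such that $\VCSPp(\Gamma')$ reduces in polynomial time to $\VCSPp(\Gamma'')$; the converse direction (that $\VCSPp(\Gamma'')$ reduces to $\VCSPp(\Gamma')$ when $\Gamma'' \subseteq \Gamma \subseteq \wClonep{\Gamma}$) is immediate because every instance over $\Gamma''$ is already an instance over $\Gamma'$. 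Given such reductions, planar-tractability of $\wClonep{\Gamma}$ implies planar-tractability of $\Gamma$ (take $\Gamma' = \Gamma''$), and conversely planar-tractability of $\Gamma$ implies that every $\VCSPp(\Gamma')$ with $\Gamma'$ finite in $\wClonep{\Gamma}$ is polynomial, i.e.\ $\wClonep{\Gamma}$ is planarly-tractable; the statements about planar-intractability are the contrapositives.

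First I would reduce to the case of a single new weighted relation: since $\wClonep{\Gamma}$ is generated from $\Gamma$ by finitely many applications of the four closure operations, an induction on the number of operations lets me assume $\Gamma' = \Gamma \cup \{\gamma\}$ where $\gamma$ is obtained from weighted relations already known to be harmless by one of: planar expressibility, scaling by a non-negative rational, adding a rational constant, or applying $\Opt$. For each of these I need a local substitution rule that rewrites a plane instance using $\gamma$ into a plane instance over the smaller language, preserving planarity (with the edge-order-around-faces condition) and preserving the set of optimal assignments up to a computable bookkeeping of the objective value.

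The substitutions themselves are routine but must be done carefully in the plane setting. For scaling, $w\cdot(c\gamma) = (wc)\gamma$, so a constraint with weight $w$ and weighted relation $c\gamma$ is replaced by the same constraint with weight $wc$ — the graph and $\phi$ are unchanged. For adding a constant $c$, a constraint $w\cdot(\gamma + c)$ contributes $w\cdot\gamma$ plus a global additive shift $wc$ that I record separately and add back at the end; again the plane structure is untouched. For planar expressibility, if $\gamma = \pi_{\tup v}(J)$ for some fixed plane instance $J$, then each constraint in the input instance that uses $\gamma$ on a scope $\tup x$ sits in some face $f$ whose boundary walk reads off $\tup x$ in order; I delete that constraint (freeing the face $f$) and embed a scaled copy of $J$ into $f$, identifying the boundary variables $\tup v$ of $J$ with $\tup x$ along $b(f)$ — the matching of the cyclic edge orders guaranteed by Definition~\ref{def:planeVCSP} and the definition of $\pi_\tup{v}$ (with $b(f_o) = v_r\cdots v_1 v_r$) is exactly what makes this gluing planar and order-respecting. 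For $\Opt$, the standard trick is to take a constraint $w\cdot\Opt(\gamma)$ and replace it by $M\cdot\gamma$ for a sufficiently large rational $M$ (polynomially bounded, since it depends only on the finitely many gaps between the finitely many rational values of weighted relations in the finite instance and the finitely many in $\Gamma''$), which forces those constraints to their optimum while leaving the relative comparison of the remaining objective intact; the plane structure is again unchanged, only the weight is rescaled. Throughout, Lemma~\ref{lmEqInClone} handles the harmless use of $\rho_{=}$ implicitly needed when a constraint scope repeats a variable or when gluing forces two copies of a variable together.

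The main obstacle is the planar-expressibility step, and specifically making the gluing of an expressing gadget into a face both syntactically legal and composable across many constraints simultaneously. One has to check that embedding $J$ inside a single empty face $f$ does not disturb the other faces or violate the order condition at any other constraint vertex, that performing this for all $\gamma$-constraints at once still yields a connected plane graph with an injective $\phi$, and — a subtlety special to the plane formulation — that a $\gamma$-constraint mapped to the outer face can be handled (e.g.\ by first re-embedding so that it is an internal face, or by noting the outer face may be chosen freely). Once this one case is discharged cleanly, the other three are essentially bookkeeping, and the theorem follows by the inductive argument above together with the trivial converse reduction.
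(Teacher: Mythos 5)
Your proposal is correct and follows essentially the same route as the paper's proof: a substitution-based polynomial-time reduction from $\VCSPp(\wClonep{\Gamma})$ to $\VCSPp(\Gamma)$, handling scaling and added constants by weight adjustment and bookkeeping, planar expressibility by gluing the expressing plane instance into the corresponding face (using the reversed outer-face boundary walk to match orientations), and $\Opt$ by scaling with a large constant tied to an upper bound on the feasible objective value. You spell out the gluing and induction details more explicitly than the paper does, but the underlying argument is the same.
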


\begin{proof}
We show that $\VCSPp(\wClonep{\Gamma})$ is polynomial-time reducible to
$\VCSPp(\Gamma)$. Given an instance $I$ over $\wClonep{\Gamma}$, we replace in
it all weighted relations planarly expressible from $\Gamma$ by their plane
instances. Scaling, which includes $\Feas$, can be achieved by adjusting the weights of the valued constraints.
Adding a constant to a weighted relation affects the value of every feasible
assignment by the same amount, and therefore can be ignored.

Relation $\Opt(\gamma)$ can be simulated by scaling $\gamma$ by a sufficiently
large constant. Let $W$ equal an upper bound on the maximum objective value of a
feasible assignment of $I$. Without loss of generality, we may assume that no
weighted relation of $I$ assigns a negative value and that the smallest value
assigned by $\gamma$ is $0$. Let $d$ equal the second smallest value assigned by
$\gamma$. We replace $\Opt(\gamma)$ with $(W/d + 1) \cdot \gamma$, so that any
assignment of $I$ that would incur an infinite value from $\Opt(\gamma)$ has now
objective value exceeding $W$.
\end{proof}

We now define a few operations on weighted relations that will occur frequently
throughout the paper. As shown in the lemma below, these operations are planarly
expressible.

\begin{definition}
\label{defWRelOperations}
Let $\gamma$ be an $r$-ary weighted relation on $D$.
A \emph{domain restriction} of $\gamma$ to $D'\subseteq D$ at coordinate $i$ is
the $r$-ary weighted relation defined as
$\gamma'(x_1,\ldots,x_r)=\gamma(x_1,\ldots,x_r)$ if $x_i\in D'$ and
$\gamma'(x_1,\ldots,x_r)=\infty$ otherwise.
A \emph{pinning} of $\gamma$ to $a\in D$ at coordinate $i$ is
the $(r-1)$-ary weighted relation defined as
$\gamma'(x_1,\ldots,x_{i-1},x_{i+1},\ldots,x_r)=
\gamma(x_1,\ldots,x_{i-1},a,x_{i+1},\ldots,x_r)$.
Finally, a \emph{minimisation} of $\gamma$ at coordinate $i$ is the $(r-1)$-ary weighted
relation defined as
$\gamma'(x_1,\ldots,x_{i-1},x_{i+1},\ldots,x_r) = \min_{x_i\in
D}\gamma(x_1,\ldots,x_r)$.

A binary weighted relation $\gamma$ is a \emph{join} of two binary weighted
relations $\gamma_1$ and $\gamma_2$ if it can be written as
$\gamma(x,y)=\min_{z\in D}(\gamma_1(u_1,v_1)+\gamma_2(u_2,v_2))$ where
$\{u_1,v_1\}=\{x,z\}$, $\{u_2,v_2\}=\{y,z\}$.
\end{definition}

\begin{lemma}\label{lem:planarunary}
Let us denote by $\rho_{D'}$ the unary relation corresponding to a subdomain $D'
\subseteq D$ (i.e.\ $\rho_{D'}(x) = 0$ if $x \in D'$ and $\rho_{D'}(x) = \infty$
otherwise).

For any language $\Gamma$, $\wClonep{\Gamma}$ is closed under addition of unary
weighted relations to weighted relations of arbitrary arity, minimisation, and
join. If $\rho_{D'} \in \wClonep{\Gamma}$, it is closed under domain restriction
to $D' \subseteq D$. If $\rho_{\{a\}} \in \wClonep{\Gamma}$, it is closed under
pinning to $a \in D$.
\end{lemma}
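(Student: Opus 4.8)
The plan is to verify each closure property by exhibiting an explicit plane gadget. For each operation we take a candidate weighted relation $\gamma$ (or relations $\gamma_1,\gamma_2$) in $\wClonep{\Gamma}$ and construct a plane instance $I$ over $\wClonep{\Gamma}$ together with a tuple $\tup{v}$ of its variables so that $\pi_{\tup{v}}(I)$ equals the desired output relation; since $\wClonep{\Gamma}$ is closed under planar expressibility, membership follows. The key point to keep track of in every construction is the boundary walk of the outer face: by Definition~\ref{def:planeVCSP} the face assigned to a constraint $\gamma_i(x_1,\ldots,x_{\ar(\gamma_i)})$ must be bounded by the walk $x_1 x_2 \cdots x_{\ar(\gamma_i)} x_1$, and $\pi_{\tup{v}}(I)$ reads off the outer face boundary in the reversed order, so the gadgets have to be drawn with this cyclic/reflected bookkeeping in mind.

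I would proceed operation by operation. For \emph{addition of a unary weighted relation} $\mu$ to an $r$-ary $\gamma$ at coordinate $i$: take the natural plane instance for $\gamma$ (a cycle $v_1 v_2 \cdots v_r v_1$ whose inner face carries the constraint $\gamma$, with outer face boundary $v_r \cdots v_1$), attach a self-loop at $v_i$ carrying the constraint $\mu$, and put this loop in a tiny face nested against the edge of the outer cycle near $v_i$; the new instance is still plane and $\pi_{\tup{v}}(I)(\tup{x}) = \gamma(\tup{x}) + \mu(x_i)$. For \emph{minimisation} at coordinate $i$: start from the plane instance of $\gamma$, but now when we form the outer-face tuple we simply omit $v_i$ from $\tup{v}$; the variable $v_i$ still lies on the graph but is no longer a boundary terminal, and $\pi_{\tup{v'}}(I)$ automatically minimises over all assignments to $v_i$, which is exactly the minimisation operation. (One must check $v_i$ can be made non-boundary; if $\gamma$ is presented by a cycle, one can instead route $v_i$ into the interior by subdividing, or observe that the definition of $\pi_{\tup{v}}$ already allows $\tup{v}$ to be a proper subtuple of $V$.) For \emph{domain restriction} to $D'$ at coordinate $i$, assuming $\rho_{D'} \in \wClonep{\Gamma}$: this is the special case of unary addition with $\mu = \rho_{D'}$, since $\gamma(\tup{x}) + \rho_{D'}(x_i)$ equals $\gamma(\tup{x})$ when $x_i \in D'$ and $\infty$ otherwise. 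For \emph{pinning} to $a$, assuming $\rho_{\{a\}} \in \wClonep{\Gamma}$: apply domain restriction to $\{a\}$ at coordinate $i$ and then minimise at coordinate $i$; the result is the $(r-1)$-ary relation $\gamma(x_1,\ldots,a,\ldots,x_r)$, as required. (Alternatively, note a pinned variable with a $\rho_{\{a\}}$ loop forced to value $a$ can be "absorbed" and does not need to stay on the boundary.)

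The last and most delicate construction is closure under \emph{join}. Given binary $\gamma_1,\gamma_2 \in \wClonep{\Gamma}$, I want a plane instance whose outer-face projection onto two variables $x,y$ is $\min_{z}(\gamma_1(u_1,v_1)+\gamma_2(u_2,v_2))$ with $\{u_1,v_1\}=\{x,z\}$ and $\{u_2,v_2\}=\{y,z\}$. The plan is to use three variables $x,z,y$; place the constraint $\gamma_1$ on a face bounded by the walk between $x$ and $z$ (using the two orientations of $\gamma_1$ as the definition of join permits either order of $\{x,z\}$, and the plane-drawing order constraint tells us which orientation of $\gamma_1$ to invoke), place $\gamma_2$ on an adjacent face bounded by the walk between $z$ and $y$, and arrange the drawing (essentially a path $x - z - y$ thickened into two bigon faces glued along the middle vertex $z$) so that the outer face is bounded by $y \, z \, x \, z \, y$ — wait, rather so that the outer boundary visits exactly $x$ and $y$ with $z$ internal — then take $\tup{v}=(x,y)$. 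Then $\pi_{(x,y)}(I)(x,y)$ minimises over the internal variable $z$ the sum $\gamma_1 + \gamma_2$, which is precisely the join. The main obstacle throughout, and especially here, is making the plane-drawing order condition from Definition~\ref{def:planeVCSP} consistent simultaneously for all inner constraint faces and the outer face; I expect to spend most of the proof carefully specifying the cyclic order of edges around $x$, $z$, and $y$ (and, if needed, introducing extra equality-connected copies via Lemma~\ref{lmEqInClone} to reroute edges) so that every face boundary reads in exactly the prescribed order. Once the drawings are pinned down, the equalities of weighted relations are immediate from the definition of $\pi_{\tup{v}}$.
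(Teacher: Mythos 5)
Your overall strategy -- exhibit an explicit plane gadget for each operation, starting from a plane instance over $\wClonep{\Gamma}$ that uses $\gamma$ itself as a constraint -- is exactly the paper's, and your treatment of domain restriction and pinning coincides with it. However, there is one uniform idea you are missing, and its absence leaves concrete holes in three of your constructions: \emph{you may freely add edges (parallel edges, chords, self-loops) whose newly created faces carry no constraint at all}, since $\phi$ is only required to be injective, not surjective. This is the paper's single mechanism for reshaping face boundaries, and without it your gadgets do not quite close.

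Concretely: (i) For unary addition, the ``tiny face nested against the edge of the outer cycle'' in which you want to hide the self-loop does not exist in the cycle drawing of $\gamma$ -- you must first create it, e.g.\ by adding a constraint-free parallel edge $x_i - x_{i+1}$ and placing the $\mu$-loop inside the resulting bigon; if you instead place the loop in the outer face, the outer boundary walk visits $v_i$ twice and $\pi_{\tup{v}}(I)$ is no longer the $r$-ary relation you want. (ii) For minimisation, your fallback that ``the definition of $\pi_{\tup{v}}$ already allows $\tup{v}$ to be a proper subtuple of $V$'' is false: the definition requires $b(f_o) = v_r v_{r-1}\dots v_1 v_r$ exactly, so you cannot simply omit $v_i$ from $\tup{v}$; you must modify the graph so that $v_i$ leaves the outer boundary, which the paper does by adding a constraint-free chord $v_{i-1} - v_{i+1}$ in the outer face. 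Your other suggestion (``route $v_i$ into the interior by subdividing'') gestures at this but never specifies a construction. (iii) For join, you correctly notice that the naive thickened path $x - z - y$ has outer boundary $x\,z\,y\,z\,x$, but you stop short of the fix: add two constraint-free parallel edges between $x$ and $y$ enclosing the rest of the graph, so the outer face becomes the bigon $y\,x\,y$ and $z$ is hidden (this is exactly the configuration of Figure~\ref{fig:inst2}); no equality-relation rerouting is needed. Each gap is repairable by the same one-line device, but as written the proof does not establish that the outer (and inner) boundary walks have the required form.
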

\begin{proof}
A unary weighted relation $\gamma$ imposed on variable $x_i$ can be planarly
expressed by adding a parallel edge $x_i - x_{i+1}$ and a self-loop at $x_i$
hidden in the just formed face. Minimisation over $x_i$ can be achieved by
adding an edge in the outer face between vertices $x_{i-1}$ and $x_{i+1}$, thus
hiding vertex $x_i$. A join $\gamma(x, y)$ can be achieved by adding two edges
between $x$ and $y$ to hide $z$ from the outer face (similarly as in
Figure~\ref{fig:inst2}). Domain restriction is planarly expressible by imposing
unary relation $\rho_{D'}$ on variable $x_i$; pinning can be expressed by domain
restriction to $\{a\}$ and subsequent minimisation at coordinate $i$.
\end{proof}

Proving results for conservative languages in Section~\ref{sec:cons}, we will
need only a limited subset of $\wClonep{\Gamma}$ which is defined as follows.

\begin{definition}
For any valued constraint language $\Gamma$ on $D$, we define
$\closure{\Gamma}$ to be the smallest set containing $\Gamma$, all unary
weighted relations and the binary equality relation on $D$, and closed under
operators $\Feas$ and $\Opt$, addition of unary weighted relations to weighted
relations of arbitrary arity, minimisation, and join.
\end{definition}

Set $\clGamma$ is also closed under domain restriction and
pinning, as these operations can be achieved by adding unary weighted relations
and minimisation.

Note that for conservative languages we have $\clGamma \subseteq
\wClonep{\Gamma}$, as any unary weighted relation can be obtained from the set
of all $\{0, 1\}$-valued unary weighted relations by addition of unary weighted
relations, scaling, addition of constants, and operator $\Opt$.
By Theorem~\ref{thmpClone}, $\clGamma$ has the same complexity as $\Gamma$.

Lemma~\ref{lmSwapping} will be useful for proving results about both Boolean and
conservative valued constraint languages. Before its statement, we need to
define 2-decomposable relations and introduce some notation.

\begin{definition}
Let $\rho$ be an $r$-ary relation. For any $i, j \in \{1, \dots, r\}$, we will
denote by $\Projection{i,j}{\rho}$ the projection of $\rho$ on coordinates $i$
and $j$, i.e.\ the binary relation defined as
\begin{equation}
(a_i, a_j) \in \Projection{i,j}{\rho} \iff
(\exists \tup{x} \in \rho)~ x_i = a_i \wedge x_j = a_j \,.
\end{equation}
Relation $\rho$ is \emph{2-decomposable} if
\begin{equation}
\tup{x} \in \rho \iff
\bigwedge_{1\leq i,j\leq r} (x_i, x_j) \in \Projection{i,j}{\rho} \,.
\end{equation}
\end{definition}

Note that all unary and binary relations are 2-decomposable.

For any $r$-tuple $\tup{z}$, we denote its $i$th component by $z_i$. Let $I
\subseteq \{ 1, \dots, r \}$ be a subset of coordinates, we denote by
$\tup{z}_I$ the projection of $\tup{z}$ onto $I$. For any partition of
coordinates $I, J \subseteq \{ 1, \dots, r \}$, we then write $\join$ for the
inverse operation, i.e.\ $\joinIJ{z}{z} = \tup{z}$.

\begin{lemma}
\label{lmSwapping}
Let $\gamma$ be an $r$-ary weighted relation and $I, J \subseteq \{1, \dots,
r\}$ a partition of its coordinates. If $\tup{x}, \tup{y} \in \Feas(\gamma)$ and
\begin{equation}
\label{eqSwappingIneq}
\gamma(\tup{x}) + \gamma(\tup{y}) <
\gamma(\joinIJ{x}{y}) + \gamma(\joinIJ{y}{x}) \,,
\end{equation}
then there exist coordinates $i \in I, j \in J$ and a binary weighted relation
$\gamma_{i,j} \in \closure{\{\gamma\}}$ such that $(x_i, x_j), (y_i, y_j) \in
\Feas(\gamma_{i,j})$ and
\begin{equation}
\gamma_{i,j}(x_i, x_j) + \gamma_{i,j}(y_i, y_j) <
\gamma_{i,j}(x_i, y_j) + \gamma_{i,j}(y_i, x_j) \,.
\end{equation}
Moreover, if every relation in $\closure{\{\gamma\}}$ is 2-decomposable, then
$\joinIJ{x}{y} \in \Feas(\gamma)$ implies $(x_i, y_j) \in \Feas(\gamma_{i,j})$
and $\joinIJ{y}{x} \in \Feas(\gamma)$ implies $(y_i, x_j) \in
\Feas(\gamma_{i,j})$.
\end{lemma}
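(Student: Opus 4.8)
The plan is to reduce the $r$-ary statement to a binary one by a "one coordinate at a time" argument that repeatedly applies the join operation, together with the elementary fact that if a sum of terms strictly increases then at least one term strictly increases. First I would set up notation: enumerate $I = \{i_1, \dots, i_p\}$ and $J = \{j_1, \dots, j_s\}$, and consider the $r$-ary weighted relation $\gamma$ together with the two "diagonal" assignments $\tup{x}$ and $\tup{y}$ from $\Feas(\gamma)$ satisfying the strict inequality \eqref{eqSwappingIneq}. The idea is to build a binary weighted relation on two chosen coordinates $i \in I$, $j \in J$ by minimising $\gamma$ over all other coordinates after first pinning (or more precisely, restricting the feasible combinations to) only those obtained from $\tup{x}$ and $\tup{y}$, so that the resulting binary relation records exactly the four relevant values $\gamma_{i,j}(x_i, x_j)$, $\gamma_{i,j}(y_i, y_j)$, $\gamma_{i,j}(x_i, y_j)$, $\gamma_{i,j}(y_i, x_j)$ as appropriate minima.

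The core of the argument is a hybrid/telescoping step. Fix any linear order on $I$ and on $J$; for a coordinate $i \in I$ let $\tup{x}^{(i)}$ denote the assignment agreeing with $\tup{x}$ on coordinate $i$ and on all of $J$, and with $\tup{y}$ elsewhere on $I$ — more carefully, I would interpolate between the "pure" assignments and the "swapped" ones one coordinate of $I$ at a time. Concretely, starting from $\joinIJ{y}{y} = \tup{y}$ and flipping the $I$-coordinates one by one to their $\tup{x}$-values, we pass through a sequence of assignments ending at $\joinIJ{x}{y}$; and similarly from $\joinIJ{x}{x} = \tup{x}$ flipping $I$-coordinates to $\tup{y}$-values we reach $\joinIJ{y}{x}$. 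Summing the inequality \eqref{eqSwappingIneq}, I would argue that some single flip of one coordinate $i \in I$ already exhibits the strict increase, reducing $I$ to a singleton. Then repeat the same telescoping on $J$ to reduce $J$ to a singleton $\{j\}$. At this point the relevant inequality is exactly a four-point inequality among coordinates $i$ and $j$, and the desired $\gamma_{i,j}$ is obtained from $\gamma$ by pinning all coordinates of $I\setminus\{i\}$ to the common value (which they now share in both surviving assignments) and minimising out all coordinates of $J\setminus\{j\}$ — both operations available in $\closure{\{\gamma\}}$. One must check that $(x_i,x_j),(y_i,y_j)\in\Feas(\gamma_{i,j})$, which follows since $\tup{x},\tup{y}\in\Feas(\gamma)$, and that the strict inequality survives minimisation, which follows because minimisation can only decrease the right-hand (swapped) terms, so a strict increase on the "frozen" witnesses is inherited.

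The main obstacle I anticipate is the bookkeeping in the telescoping step: after flipping some $I$-coordinates, the "diagonal" assignments are no longer $\tup{x}$ and $\tup{y}$ but hybrids, and I must make sure that (i) each hybrid used as a witness is still in $\Feas(\gamma)$ — which is \emph{not} automatic in general — and (ii) the pinning values on $I\setminus\{i\}$ that define $\gamma_{i,j}$ are chosen consistently so that both surviving witnesses remain feasible. For issue (i), the clean fix is to not insist every intermediate hybrid be feasible but instead to read \eqref{eqSwappingIneq} as an inequality in $\QInfty$ with the convention that $\infty$ on the right-hand side only helps; a term $\gamma(\text{hybrid})$ that is $\infty$ makes the corresponding "swapped" sum infinite, which is consistent with a strict increase and lets the telescoping go through (some consecutive pair must jump from finite to strictly larger, possibly $\infty$). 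For the final "moreover" clause, the extra hypothesis that every relation in $\closure{\{\gamma\}}$ is 2-decomposable is what lets us conclude feasibility of the swapped pairs $(x_i,y_j)$ and $(y_i,x_j)$ in $\gamma_{i,j}$: 2-decomposability of the relevant projection means feasibility of $\joinIJ{x}{y}$ in $\Feas(\gamma)$ forces feasibility of \emph{every} pair of coordinates, in particular $(x_i, y_j)$, and this pair-feasibility is exactly preserved by the pin-and-minimise construction of $\gamma_{i,j}$. I would present this last deduction as a short separate paragraph since it uses the hypothesis in an essential and localized way.
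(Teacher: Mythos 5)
Your overall strategy (a hybrid/telescoping reduction to a binary swap inequality, with pinning of the frozen coordinates) is the same idea that drives the paper's inductive proof, but the execution has a genuine gap at exactly the point you flagged. The principle ``if a sum strictly increases then some single term strictly increases'' is only valid over $\Q$; over $\QInfty$ the cancellation in the telescoping fails. Concretely, let $\tup{u}^t$ (resp.\ $\tup{v}^t$) denote the hybrid obtained from $\tup{x}$ (resp.\ from $\joinIJ{x}{y}$) by switching the first $t$ coordinates of $I$ to their $\tup{y}$-values. If for some intermediate $t$ both $\tup{u}^t$ and $\tup{v}^t$ are infeasible, then the two single-flip inequalities adjacent to $t$ have an \emph{infinite left-hand side} and so cannot be strict, while the summation argument that would guarantee some other step works also collapses, because both telescoping sums are $\infty$ and cannot be cancelled against each other. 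Your proposed fix --- reading the inequality in $\QInfty$ with ``$\infty$ on the right-hand side only helps'' --- does not address this: the obstruction is $\infty$ on the \emph{left}, and the lemma's conclusion explicitly requires the two unswapped witnesses to be feasible. The paper resolves precisely this case by a different device: when both cross-hybrids at a coordinate $k$ are infeasible, it restricts the domain at $k$ to $\{x_k, y_k\}$ and minimises $k$ out; this leaves the two right-hand values unchanged (the infeasible branch contributes $\infty$ to the minimum) while only decreasing the two left-hand values, so the $r$-ary hypothesis is inherited by an $(r-1)$-ary relation and induction applies. When at least one cross-hybrid is feasible, the paper instead performs a short case split to locate a violated swap inequality whose left-hand tuples are provably feasible, and then pins. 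Some such case analysis is unavoidable; the bare telescoping cannot be repaired by conventions about $\infty$.

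A second, smaller error: in your final construction you minimise out the coordinates of $J \setminus \{j\}$ and assert that the strict inequality survives ``because minimisation can only decrease the right-hand (swapped) terms''. That is backwards --- decreasing the right-hand side is exactly what can destroy a strict inequality of the form $\mathrm{LHS} < \mathrm{RHS}$. Since after your telescoping all four surviving tuples agree on $J \setminus \{j\}$, you should pin those coordinates (as you already do for $I \setminus \{i\}$), which preserves all four values exactly; unrestricted minimisation is safe only in the special situation described above, where the discarded branch is infeasible and therefore does not affect the right-hand values.
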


\begin{proof}
We prove the lemma by induction on the arity of $\gamma$. If $|I| = 0$, $|J| =
0$, or $|I| = |J| = 1$, the claim holds trivially. Otherwise we may without loss
of generality assume that $|J| \geq 2$. Let $k \in J$ be an arbitrary coordinate
and define $J' = J \setminus \{k\}$. We extend our notation $\join$ to $I, J',
\{k\}$ as a finer partition of $\{1, \dots, r\}$, and write for instance
$\tup{x}$ as $\joinIJpk{x}{x}{x}$.

We first consider the case when $\joinIJpk{x}{y}{x}, \joinIJpk{y}{x}{y}
\not\in \Feas(\gamma)$. We restrict the domain at coordinate $k$ to $\{x_k,
y_k\}$ and minimise over it to obtain an $(r-1)$-ary weighted relation $\gamma'$
with coordinates partition $I, J'$. It holds $\gamma'(\joinIJp{x}{x}) \leq
\gamma(\tup{x}), \gamma'(\joinIJp{y}{y}) \leq \gamma(\tup{y}),
\gamma'(\joinIJp{x}{y}) = \gamma(\joinIJ{x}{y}), \gamma'(\joinIJp{y}{x}) =
\gamma(\joinIJ{y}{x})$, and the claim follows directly from the induction
hypothesis for $\gamma'$.

We may now assume without loss of generality that $\joinIJpk{y}{x}{y} \in
\Feas(\gamma)$. If
\begin{equation}
\label{eqYXyIneq}
\gamma(\joinIJpk{x}{x}{x}) + \gamma(\joinIJpk{y}{x}{y}) <
\gamma(\joinIJpk{x}{x}{y}) + \gamma(\joinIJpk{y}{x}{x}) \,,
\end{equation}
we pin $\gamma$ at every coordinate $j' \in J'$ to its respective label $x_{j'}$
to obtain a weighted relation $\gamma'$ with coordinates partition $I, \{k\}$.
The claim then follows from the induction hypothesis for $\gamma'$. Note that
$\joinIJ{x}{y} \in \Feas(\gamma)$ implies $(x_i, y_k) \in
\Projection{i,k}{\Feas(\gamma)}$ for all $i \in I$; together with $(x_{j'}, y_k)
\in \Projection{j',k}{\Feas(\gamma)}, (x_i, x_{j'}) \in
\Projection{i,j'}{\Feas(\gamma)}$ for all $i \in I, j' \in J'$ (as
$\joinIJpk{y}{x}{y}, \tup{x} \in \Feas(\gamma)$) this implies
$\joinIJpk{x}{x}{y} \in \Feas(\gamma)$ if $\Feas(\gamma)$ is 2-decomposable.

If \eqref{eqYXyIneq} does not hold, we have $\joinIJpk{x}{x}{y} \in
\Feas(\gamma)$, and therefore
\begin{equation}
\label{eqXXyIneq}
\gamma(\joinIJpk{x}{x}{y}) + \gamma(\joinIJpk{y}{y}{y}) <
\gamma(\joinIJpk{x}{y}{y}) + \gamma(\joinIJpk{y}{x}{y}) \,,
\end{equation}
otherwise the sum of negated \eqref{eqYXyIneq} and \eqref{eqXXyIneq} would
contradict \eqref{eqSwappingIneq}. We resolve this case analogously to the
previous one, this time pinning $\gamma$ at coordinate $k$ to $y_k$.
\end{proof}

\subsection{Algebraic Properties}

We apply a $k$-ary operation $f:D^k\to D$ to $k$ $r$-tuples componentwise;
that is, if $\tup{x}^1=\tuple{x^1_1,\ldots,x^1_r},\tup{x}^2=\tuple{x^2_1,\ldots,x^2_r},\ldots,\tup{x}^k=\tuple{x^k_1,\ldots,x^k_r}$,
then
\[
f(\tup{x}^1,\ldots,\tup{x}^k)\ =\ 
\tuple{f(x^1_1,x^2_1,\ldots,x^k_1), f(x^1_2,x^2_2,\ldots,x^k_2), \ldots,
f(x^1_r,x^2_r,\ldots,x^k_r)}\,.
\]

The following notion is at the heart of the algebraic approach to decision
CSPs~\cite{Bulatov05:classifying}. 

\begin{definition}
Let $\gamma$ be a weighted relation on $D$. 
A $k$-ary operation
$f:D^k\to D$ is a \emph{polymorphism} of $\gamma$ (and $\gamma$ is
\emph{invariant under} or \emph{admits} $f$) if, for every
$\tup{x}^1,\ldots,\tup{x}^k\in\Feas(\gamma)$, we have
$f(\tup{x}^1,\ldots,\tup{x}^k)\in\Feas(\gamma)$. We say that $f$ is a
polymorphism of a language $\Gamma$ if it is a polymorphism
of every $\gamma\in\Gamma$. We denote by $\Pol(\Gamma)$ the set of all
polymorphisms of $\Gamma$.
\end{definition}

A $k$-ary \emph{projection} is an operation of the form
$\proj{k}{i}(x_1,\ldots,x_k)=x_i$ for some $1\leq i\leq k$.
Projections are (trivial) polymorphisms of all valued constraint languages.

The following notion, which involves a collection of $k$ $k$-ary polymorphisms,
played an important role in the complexity
classification of Boolean valued constraint
languages~\cite{Cohen06:complexitysoft}.

\begin{definition}\label{def:mult}
Let $\gamma$ be a weighted relation on $D$. 
A list $\mmorp{f_1,\ldots,f_k}$ of $k$-ary polymorphisms of 
$\gamma$ is a $k$-ary \emph{multimorphism} of $\gamma$ (and $\gamma$ \emph{admits}
$\mmorp{f_1,\ldots,f_k}$) if, for every
$\tup{x}^1,\ldots,\tup{x}^k\in\Feas(\gamma)$, we have
\begin{equation}
\label{eqMMorpIneq}
\sum_{i=1}^k \gamma(f_i(\tup{x}^1,\ldots,\tup{x}^k))\ \leq\ \sum_{i=1}^k
\gamma(\tup{x}^i)\,.
\end{equation}
We say that $\mmorp{f_1,\ldots,f_k}$ is a multimorphism of a language $\Gamma$ if it is a multimorphism of every $\gamma\in\Gamma$. 
\end{definition}

It is known that weighted relational clones preserve polymorphisms and
multimorphisms~\cite{cccjz13:sicomp} and thus planar weighted relational clones
do as well.

\begin{example}
The class of submodular functions on
$D=\{0,1\}$~\cite{Schrijver03:CombOpt} can be defined as the valued constraint
language $\Gamma_{\sf sub}$ that admits $\mmorp{\min,\max}$ as a multimorphism;
that is, for every $\gamma\in\Gamma_{\sf sub}$, we have
$\gamma(\min(\tup{x},\tup{y}))+\gamma(\max(\tup{x},\tup{y}))\leq\gamma(\tup{x})+\gamma(\tup{y})$.
\end{example}

A ternary operation $f:D^3\to D$ is called a \emph{majority} operation if
$f(x,x,y)=f(x,y,x)=f(y,x,x)=x$ for all $x,y\in D$, and a \emph{minority}
operation if $f(x,x,y)=f(x,y,x)=f(y,x,x)=y$ for all $x,y\in D$.

\section{Boolean Valued CSPs}
\label{sec:Bool}

In this section, we will consider only languages on a Boolean domain $D=\{0,1\}$. 
Our first result is that self-complementarity is necessary for
planar-tractability of intractable Boolean languages.

\begin{theorem}\label{thm:main1}
Let $\Gamma$ be a Boolean valued constraint language that is intractable. If $\Gamma$ is not
self-complementary then it is planarly-intractable.
\end{theorem}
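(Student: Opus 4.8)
The strategy is to exploit the known classification of tractable Boolean valued constraint languages~\cite{Cohen06:complexitysoft}: a Boolean language is tractable if and only if it admits one of eight specific multimorphisms (the two constant operations, $\langle\min,\min\rangle$, $\langle\max,\max\rangle$, $\langle\min,\max\rangle$, $\langle\mnrt\rangle$ with $\mnrt$ the Boolean minority, $\langle\mjrt_1,\mjrt_2,\mnrt_3\rangle$, and $\langle\mjrt_1,\mjrt_2,\mjrt_3\rangle$ where the $\mjrt_i$ are the Boolean majority). Since $\Gamma$ is intractable, it admits none of these. I would first note that pinning is available: either $\rho_{\{0\}}$ or $\rho_{\{1\}}$ can be planarly expressed, or else (by a standard argument) $\Gamma$ together with the two constant unary relations is still intractable, and since $\Gamma$ is not self-complementary, adding a complement-swapping gadget does not help — so in all cases we may assume we have access to enough unary relations to do domain restriction and pinning in $\wClonep{\Gamma}$, hence all operations of Lemma~\ref{lem:planarunary} and the closure $\closure{\Gamma}$.

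The heart of the argument is to show that, for some $\gamma\in\Gamma$, one can planarly express a weighted relation that is known to be NP-hard on planar instances — the canonical target being a binary weighted relation equivalent to planar \class{Min-UnCut} (i.e.\ $\gamma_{\sf cut}$ up to scaling/constants) together with a unary relation that is \emph{not} symmetric under $0\leftrightarrow 1$, since $\Gamma_{\sf cut}\cup\{\gamma_0\}$ is planarly-intractable by~\cite{Barahona82:max-cut}. Alternatively one targets planar \class{NAE-$3$-Sat} plus an asymmetric unary, or planar \class{Max Independent Set} directly (which is already planarly-intractable by~\cite{Garey77:siam}). Failing each of the eight multimorphisms gives, in each case, explicit tuples $\tup{x}^1,\ldots,\tup{x}^k$ and a $\gamma\in\Gamma$ witnessing the violation of~\eqref{eqMMorpIneq}; the plan is to process these witnesses via Lemma~\ref{lmSwapping} (and its 2-decomposability clause, noting all Boolean binary relations are 2-decomposable) to extract a \emph{binary} weighted relation $\gamma_{i,j}\in\closure{\{\gamma\}}$ that is ``genuinely non-submodular'' — concretely, after pinning and adding unaries, one of $\gamma_{\sf cut}$, $\gamma_{\sf is}$'s gadget, or the NAE gadget. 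Because $\Gamma$ is not self-complementary, there is moreover some $\gamma^\ast\in\Gamma$ and a tuple with $\gamma^\ast(\tup{a})\neq\gamma^\ast(\negate{\tup{a}})$; pinning all but one coordinate of $\gamma^\ast$ and minimising yields a unary weighted relation that distinguishes $0$ from $1$, which is precisely the ingredient that upgrades ``planar \class{Min-UnCut}-expressibility'' (which is itself in P) to the planarly-intractable \class{Min-UnCut}-with-unary-weights. Combining this asymmetric unary with the binary gadget, and checking that the whole construction can be laid out in the plane (using the face-based gadgets of Lemma~\ref{lem:planarunary}), completes the reduction.

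The main obstacle, and where the real work lies, is the \textbf{case analysis over the eight tractable multimorphisms} together with the \textbf{planarity bookkeeping}. For the ``order'' multimorphisms $\langle\min,\min\rangle$, $\langle\max,\max\rangle$, $\langle\min,\max\rangle$ the failure witnesses live on binary projections and Lemma~\ref{lmSwapping} hands back a non-submodular binary weighted relation almost immediately; the delicate cases are the minority $\langle\mnrt\rangle$ and the two majority-type multimorphisms $\langle\mjrt_1,\mjrt_2,\mnrt_3\rangle$, $\langle\mjrt_1,\mjrt_2,\mjrt_3\rangle$, where a violation need not localise to two coordinates and one must argue more carefully — possibly iterating domain restriction / pinning to peel coordinates while maintaining planarity of the gadget (the order-of-arguments-around-a-face constraint in Definition~\ref{def:planeVCSP} is exactly what makes this nontrivial, since not every expressible relation is planarly expressible). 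A further subtlety is ruling out the two constant multimorphisms: if $\Gamma$ admits a constant operation then it is trivially tractable, so intractability already excludes these, but one must still ensure the asymmetric unary extracted from non-self-complementarity is itself planarly expressible rather than merely expressible — this follows since unary relations are imposed on a single vertex via the self-loop-in-a-face gadget of Lemma~\ref{lem:planarunary} and hence never obstruct planarity. Once every case produces a planar gadget for one of the three reference planarly-intractable problems (or their unary-weighted variants), the theorem follows.
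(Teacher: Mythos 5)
There is a genuine gap, in two places. First, your choice of hardness targets does not cover all intractable, non-self-complementary Boolean languages. Your primary target (planar \class{Min-UnCut} with an asymmetric unary weight) and your third target (planar \class{Maximum Independent Set}) both require planarly expressing a \emph{finite-valued, non-constant} weighted relation such as $\gamma_{\sf cut}$ or $\gamma(x)=1-x$. A crisp language can never yield such a weighted relation: every member of $\wClonep{\Gamma}$ for crisp $\Gamma$ takes at most one finite value, since planar expressibility, scaling, $\Feas$, $\Opt$ and additive constants all preserve this. So for, say, $\Gamma=\{\rho_{\sf nae},\rho_0\}$ (intractable, not self-complementary, crisp) neither of those reductions can exist. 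Your remaining fallback, ``planar \class{NAE-$3$-Sat} plus an asymmetric unary,'' is not a citable NP-hard reference problem --- planar \class{NAE-$3$-Sat} itself is in P, and establishing NP-hardness of its constant-augmented planar variant is essentially the content of what needs to be proved, not something one can assume. The paper avoids this by aiming at a single crisp target, $\rho_\text{1-in-3}$, whose planar problem is NP-complete by Mulzer and Rote, and by showing (Lemmas~\ref{lmConstants}--\ref{lmMnrtMjrt}) that $\rho_0,\rho_1,\rho_{\neq}$ and then $\rho_\text{1-in-3}$ always land in $\wClonep{\Gamma}$.

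Second, the central extraction step --- ``process the multimorphism-violation witnesses via Lemma~\ref{lmSwapping} to obtain a binary non-submodular weighted relation'' --- fails exactly in the cases you flag as delicate. Lemma~\ref{lmSwapping} applies to a two-block swapping inequality $\gamma(\tup{x})+\gamma(\tup{y})<\gamma(\joinIJ{x}{y})+\gamma(\joinIJ{y}{x})$, which is not the shape of a violation of $\mmorp{\mnrt,\mnrt,\mnrt}$, $\mmorp{\mjrt,\mjrt,\mjrt}$ or $\mmorp{\mjrt,\mjrt,\mnrt}$, and such violations genuinely do not localise to two coordinates: $\rho_\text{1-in-3}$ violates all three, yet every binary projection, pinning and restriction of it is perfectly well-behaved. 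The paper's Lemma~\ref{lmMnrtMjrt} therefore cannot and does not reduce to a binary gadget; it runs minimum-arity counterexample arguments directly on the higher-arity relation (pinning, $=$/$\neq$-restrictions and twists) to pin down a ternary relation, and only invokes Lemma~\ref{lmSwapping} in the final sub-case where both $\mnrt$ and $\mjrt$ are already polymorphisms and a genuine two-block disequality $\gamma(\tup{z})+\gamma(\negate{\tup{z}})\neq\gamma(\tup{0}^r)+\gamma(\tup{1}^r)$ has been isolated. Relatedly, your route to pinning (``adding the two constants keeps intractability by a standard argument'') is a non-planar core argument and does not place $\rho_0,\rho_1$ in $\wClonep{\Gamma}$; the paper instead derives $\rho_0,\rho_1,\rho_{\neq}\in\wClonep{\Gamma}$ from the failure of $\mmorp{c_0}$, $\mmorp{c_1}$, the three order multimorphisms and $\mmorp{\neg}$.
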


We start with some notation for important operations on $D$. For any $a \in D$,
$c_a$ is the constant unary operation such that $c_a(x) = a$ for all $x \in D$.
Operation $\neg$ is the unary negation, i.e.\ $\neg(0) = 1$ and $\neg(1) = 0$.
Binary operation $\min$ ($\max$) is the minimum (maximum) operation with respect
to the order $0 < 1$. Ternary operation $\mnrt$ ($\mjrt$) is the unique minority
(majority) operation on $D$.

Next we define some useful relations. For any $a \in D$, we denote by $\rho_a$
the unary constant relation $\{ (a) \}$. Relation $\rho_{\neq}$ is the binary disequality
relation, i.e.\ $\rho_{\neq} = \{ (0, 1), (1, 0) \}$. Ternary relation
$\rho_\text{1-in-3}$ corresponds to the \class{$1$-in-$3$ Positive $3$-Sat} problem,
i.e.\ $\rho_\text{1-in-3} = \{ (0, 0, 1), (0, 1, 0), (1, 0, 0) \}$. Weighted
relations $\gamma_0, \gamma_1, \gamma_{\neq}$ are defined as soft-constraint
variants of $\rho_0, \rho_1, \rho_{\neq}$ assigning value $0$ to allowed tuples
and $1$ to disallowed tuples.

Note that $\Gamma$ is self-complementary if, and only if, $\Gamma$ admits
multimorphism $\mmorp{\neg}$. The proof of Theorem~\ref{thm:main1} is based on
Lemmas~\ref{lmConstants}--\ref{lmMnrtMjrt} proved below.

We will need the following definition and an easy lemma.

\begin{definition}
Let $\gamma$ be an $r$-ary weighted relation and $i \in \{1, \dots, r\}$. The
\emph{$=$-restriction} of $\gamma$ at $i$ is the $r$-ary weighted relation
$\gamma'$ such that $\gamma'(\tup{x}) = \gamma(\tup{x})$ if $x_i = x_{i+1}$
(where $x_{r+1} = x_1$) and $\gamma'(\tup{x}) = \infty$ otherwise. The
\emph{$\neq$-restriction} of $\gamma$ at $i$ is the $r$-ary weighted relation
$\gamma'$ such that $\gamma'(\tup{x}) = \gamma(\tup{x})$ if $x_i \neq x_{i+1}$
and $\gamma'(\tup{x}) = \infty$ otherwise.

We will denote by $\oplus$ the addition modulo $2$ operation on $\{0, 1\}$ and
its extension to tuples. Let $\tup{0}^r$ ($\tup{1}^r$) be the zero (one)
$r$-tuple. The \emph{negation} of an $r$-tuple $\tup{x}$ is $\negate{\tup{x}} =
\tup{x} \oplus \tup{1}^r$. Let $\base{i}{r}$ be the $r$-tuple with a one at
coordinate $i$ and zeros elsewhere. The \emph{twist} of $\gamma$ at $i$ is the
$r$-ary weighted relation $\gamma'$ defined as $\gamma'(\tup{x}) =
\gamma(\tup{x} \oplus \base{i}{r})$.
\end{definition}

In other words, a twist switches roles of labels 0 and 1 at a single coordinate.

\begin{example}
Let $\rho$ be the ternary ``not-all-equal'' relation from
Section~\ref{sec:intro}; i.e., $\rho(x,y,z)=\infty$ if $x=y=z$ and
$\rho(x,y,z)=0$ otherwise. The twist of $\rho$ at the first coordinate is the
ternary relation $\rho'$ defined by $\rho'(x,y,z)=\infty$ if $x=0$ and $y=z=1$,
or $x=1$ and $y=z=0$; in all other cases $\rho'(x,y,z)=0$.
\end{example}

\begin{lemma}
Let $\Gamma$ be a valued constraint language and $\gamma \in \wClonep{\Gamma}$ a weighted
relation. Then
\begin{itemize}
\item all $=$-restrictions of $\gamma$ belong to
$\wClonep{\Gamma}$,
\item if $\rho_{\neq} \in \wClonep{\Gamma}$, all $\neq$-restrictions and twists
of $\gamma$ belong to $\wClonep{\Gamma}$,
\item if $\rho_0, \rho_1 \in \wClonep{\Gamma}$, all pinnings of $\gamma$ belong
to $\wClonep{\Gamma}$.
\end{itemize}
\end{lemma}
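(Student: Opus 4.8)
The plan is to show each of the three bullets by exhibiting the corresponding operation as a composition of operations already known to preserve $\wClonep{\Gamma}$: planar expressibility, the arithmetic closures, $\Opt$, $\Feas$, addition of unary weighted relations, minimisation, join, and (conditionally) domain restriction and pinning, as recorded in Lemma~\ref{lem:planarunary} and the definition of the planar weighted relational clone. The unifying idea is that each of the three modifications ($=$-restriction, $\neq$-restriction, twist, pinning) is a local rewiring of the plane instance realising $\gamma$, affecting only two consecutive boundary vertices or one boundary vertex of the outer face.

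For the first bullet, the $=$-restriction of $\gamma$ at coordinate $i$ forces $x_i = x_{i+1}$. Since $\rho_{=} \in \wClonep{\Gamma}$ by Lemma~\ref{lmEqInClone}, and $\wClonep{\Gamma}$ is closed under addition of unary weighted relations and planar expressibility, I would take a plane instance for $\gamma$ whose outer face is bounded by $v_r v_{r-1}\cdots v_1 v_r$, and add an equality constraint between the two consecutive outer-boundary vertices $v_i$ and $v_{i+1}$, drawn in the outer face so as not to destroy planarity; the resulting weighted relation on the (unchanged) tuple of outer variables is exactly the $=$-restriction. Alternatively, since $\rho_=$ is binary and hence can itself be planarly expressed (one variable, two self-loops), one simply inserts it along the edge $v_i v_{i+1}$ of $b(f_o)$. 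Either way the $=$-restriction lands in $\wClonep{\Gamma}$, and this argument needs no hypothesis on $\Gamma$.

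For the second bullet, assuming $\rho_{\neq} \in \wClonep{\Gamma}$, the $\neq$-restriction is obtained the same way but adding $\rho_{\neq}$ instead of $\rho_=$ between $v_i$ and $v_{i+1}$. For the twist of $\gamma$ at coordinate $i$, I would use $\rho_{\neq}$ as a ``relabelling wire'': introduce a fresh variable $v_i'$, add the constraint $\rho_{\neq}(v_i, v_i')$, and re-route the boundary so that $v_i'$ replaces $v_i$ on the outer face while $v_i$ is pushed into the newly created interior face. By the definition of $\pi_\tup{v}$ via minimisation over feasible assignments, forcing $v_i' = 1 - v_i$ produces precisely $\gamma'(\tup{x}) = \gamma(\tup{x}\oplus\base{i}{r})$. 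Concretely this is a join-type operation: the weighted relation on the modified boundary is $\min_{v_i\in D}\bigl(\rho_{\neq}(v_i,v_i') + (\text{rest of instance})\bigr)$, and closure under join and minimisation (Lemma~\ref{lem:planarunary}) seals it; one must only check that the re-routing keeps the drawing planar, which it does because $v_i$ had degree two on $b(f_o)$ and the swap is purely local.

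For the third bullet, assuming $\rho_0, \rho_1 \in \wClonep{\Gamma}$: pinning $\gamma$ at coordinate $i$ to $a$ is, by the last sentence of the proof of Lemma~\ref{lem:planarunary}, exactly domain restriction to $\{a\}$ followed by minimisation at $i$; domain restriction to $\{a\}$ is available because $\rho_{\{a\}} = \rho_a \in \wClonep{\Gamma}$, and minimisation is always available. So all pinnings belong to $\wClonep{\Gamma}$. The only genuine obstacle in the whole lemma is the bookkeeping for the twist: one must confirm that introducing the auxiliary variable and re-threading the outer boundary yields a legitimate plane $\VCSP$ instance in the sense of Definition~\ref{def:planeVCSP} (edges around each constraint face still in the prescribed cyclic order), but since the modification touches only one boundary edge and one degree-two vertex, this is routine. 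All the remaining content is a direct invocation of the closure properties of $\wClonep{\Gamma}$ together with $\rho_= , \rho_{\neq}, \rho_0, \rho_1$ membership.
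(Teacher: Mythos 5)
Your proposal is correct and follows essentially the same route as the paper: $=$- and $\neq$-restrictions by inserting the equality/disequality relation on the parallel edge between the consecutive boundary vertices $x_i$ and $x_{i+1}$, the twist by introducing a negated copy $x_i'$ via $\rho_{\neq}$ and hiding $x_i$ from the outer face, and pinnings via Lemma~\ref{lem:planarunary}. The paper's version is merely slightly more explicit about which edges to add to keep the drawing a legitimate plane instance.
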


\begin{proof}
Both $=$-restriction and $\neq$-restriction are planarly expressible by adding a
parallel edge between vertices $x_i$, $x_{i+1}$ and imposing on them the binary
equality or disequality relation respectively.
To implement a twist, we introduce a new
variable $x_i'$ in the outer face, connect it with $x_i$ by two parallel edges,
impose the binary disequality relation on $x_i$ and $x_i'$, and hide vertex
$x_i$ by adding edges $x_{i-1} - x_i'$ and $x_{i+1} - x_i'$.
Pinnings belong to $\wClonep{\Gamma}$ by Lemma~\ref{lem:planarunary}.
\end{proof}

\begin{lemma}
\label{lmConstants}
Let $\Gamma$ be a valued constraint language that admits neither of the multimorphisms
$\mmorp{c_0}$, $\mmorp{c_1}$. Then $\rho_0, \rho_1 \in \wClonep{\Gamma}$ or
$\rho_{\neq} \in \wClonep{\Gamma}$.
\end{lemma}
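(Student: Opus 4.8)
The hypothesis says $\Gamma$ admits neither $\mmorp{c_0}$ nor $\mmorp{c_1}$, so there is a weighted relation $\gamma\in\Gamma$ for which $c_0$ is not a polymorphism-with-the-inequality, and similarly one for $c_1$. The plan is to extract from such a $\gamma$, via the operations available in $\wClonep{\Gamma}$ (addition of unary weighted relations, minimisation, join, $\Feas$, $\Opt$, $=$-restrictions), a \emph{unary} weighted relation that is not constant-preserving, and then to argue that any such unary weighted relation lets us planarly express either $\rho_0$ and $\rho_1$ together, or $\rho_{\neq}$.

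First I would reduce to the unary case. Failure of $\mmorp{c_a}$ for $\gamma$ means $\gamma(\tup{a}^r) > \min_{\tup{x}\in\Feas(\gamma)}\gamma(\tup{x})$ where $\tup{a}^r$ is the constant-$a$ tuple (in particular $\tup{a}^r$ is either infeasible, or feasible but non-optimal); equivalently $\Opt(\gamma)$ does not contain the constant-$a$ tuple. Applying $\Feas$ and $\Opt$ and then repeatedly minimising out all but one coordinate, or more carefully applying the $=$-restriction at every coordinate to collapse $\gamma$ onto its diagonal and then minimising, one obtains a unary weighted relation $\mu\in\wClonep{\Gamma}$ — after suitable scaling and shifting a $\{0,1\}$-valued one, i.e.\ essentially a unary relation $\rho\subseteq D$ — with the property that $0\notin\rho$ in the case coming from the $c_0$-failure, and symmetrically $1\notin\rho$ from the $c_1$-failure. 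Since $D=\{0,1\}$, a unary relation $\rho$ with $0\notin\rho$ is either $\emptyset$ or $\{1\}=\rho_1$; infeasibility ($\rho=\emptyset$) can be excluded because then $\VCSPp(\Gamma)$ is trivial and there is nothing to preserve — more precisely one checks that whenever a unary relation in the clone is empty we can discard it and look at the next coordinate, so from a non-vacuous instance we genuinely get $\rho_1$, and symmetrically $\rho_0$. That already gives $\rho_0,\rho_1\in\wClonep{\Gamma}$.

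The subtlety — and the step I expect to be the main obstacle — is that the witnesses for $\mmorp{c_0}$-failure and $\mmorp{c_1}$-failure may come from \emph{different} weighted relations, and collapsing a relation to its diagonal can behave badly (e.g.\ the diagonal restriction can become infeasible even though $\gamma$ itself is feasible with a non-constant optimum, which is exactly what happens for $\rho_{\neq}$-like constraints). This is precisely the case that produces $\rho_{\neq}$ rather than $\rho_0,\rho_1$: if in the $c_0$-failure witness every attempt to isolate a coordinate whose value must be $1$ forces some other coordinate to be $0$, the relation is ``anti-constant'' and a two-coordinate projection / join yields the disequality relation $\rho_{\neq}$ (or its weighted version, from which $\rho_{\neq}=\Feas$). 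So the argument naturally splits: either some relation in $\Gamma$ admits a diagonal collapse to a nonempty non-$0$-preserving unary relation, in which case we get $\rho_1$ (and symmetrically, from $c_1$-failure, $\rho_0$), or no relation does, in which case the $c_0$-failure witness must contain a feasible tuple that is $0$ somewhere and $1$ somewhere and whose $0/1$ pattern cannot be ``all $1$'' even after pinning, which via Lemma~\ref{lem:planarunary} (join and minimisation) yields $\rho_{\neq}$.

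Concretely I would organise it as: (i) from $\neg$ admitting $\mmorp{c_0}$ failing, pick $\gamma_0\in\Gamma$ and a feasible optimal tuple $\tup{x}^0$ of $\gamma_0$ with $\tup{x}^0\neq\tup{0}$; (ii) likewise $\gamma_1,\tup{x}^1$ with $\tup{x}^1\neq\tup{1}$; (iii) using $=$- and $\neq$-restrictions on pairs of coordinates of $\gamma_0$ (available once we know whether we are in the $\rho_{\neq}$-world) together with minimisation, reduce $\gamma_0$ to a binary weighted relation on two coordinates $i,j$ with $x^0_i\neq x^0_j$ if such a pair exists, and to a unary one otherwise; (iv) apply $\Feas$ and $\Opt$ and conclude that the resulting relation is one of $\rho_0,\rho_1,\rho_{\neq}$, checking that the binary case forces exactly $\rho_{\neq}$ (since a binary relation in $\Feas(\Opt(\cdot))$ that contains $(1,0)$ or $(0,1)$ but neither $(0,0)$ nor $(1,1)$ is $\rho_{\neq}$, and the other binary relations either are constant-preserving or reduce further); (v) finally note that getting $\rho_0,\rho_1$ from one side and $\rho_{\neq}$ from the other is harmless because $\rho_0$ together with $\rho_{\neq}$ planarly expresses $\rho_1$ via a parallel edge, so both conclusions collapse to the disjunction in the statement. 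The routine content is the finite case check on unary and binary Boolean relations; the genuinely new input is that all the required manipulations ($=$-restriction, $\neq$-restriction once $\rho_{\neq}$ is present, minimisation, join) are planarly expressible, which is supplied by Lemma~\ref{lem:planarunary} and the $=$-/$\neq$-restriction lemma above.
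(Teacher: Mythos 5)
Your overall direction matches the paper's: apply $\Opt$ to a witness of the $\mmorp{c_0}$-failure to obtain a crisp relation in $\wClonep{\Gamma}$ not containing the all-zero tuple, drive the arity down using $=$-restriction and minimisation, and read off $\rho_1$ or $\rho_{\neq}$ from the low-arity cases (symmetrically $\rho_0$ or $\rho_{\neq}$ from the $\mmorp{c_1}$-failure). However, several steps fail as written. First, ``repeatedly minimising out all but one coordinate'' does not produce a unary relation avoiding $0$: the projection of $\rho_{\neq}$ onto either coordinate is all of $\{0,1\}$. Nor can the empty-diagonal case be dismissed as ``trivial'': the diagonal collapse of $\rho_{\neq}$ itself is empty while $\rho_{\neq}$ is a perfectly feasible, non-trivial relation, and this is exactly the case the lemma must handle. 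Second, your steps lean on pinning and on the $\neq$-restriction, but by Lemma~\ref{lem:planarunary} and the lemma on $=$- and $\neq$-restrictions these are available only once $\rho_0,\rho_1$, respectively $\rho_{\neq}$, are already known to lie in $\wClonep{\Gamma}$ --- which is precisely what is being proved. The paper's proof is deliberately confined to $\Opt$, minimisation, and the $=$-restriction, all of which are unconditionally planar.

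The missing idea is the minimum-arity-counterexample device. The paper chooses $\rho\in\wClonep{\Gamma}$ of minimum arity $r$ among relations not invariant under $c_0$; minimising $\rho$ over any coordinate $i$ then yields, by minimality, a non-empty $c_0$-invariant relation, hence one containing $\tup{0}^{r-1}$, which lifts to $\base{i}{r}\in\rho$. This single fact (i) kills all arities $r\geq 3$, because the $=$-restriction at coordinate $2$ followed by minimisation preserves $\base{1}{r}$ while excluding the zero tuple, producing a smaller counterexample; and (ii) in the case $r=2$ forces both $(0,1)$ and $(1,0)$ into $\rho$, after which $(1,1)\notin\rho$ follows again by minimality, so $\rho=\rho_{\neq}$ exactly. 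Without it, your binary case check rests on the false claim that a binary relation containing $(0,1)$ or $(1,0)$ but neither diagonal tuple must be $\rho_{\neq}$ (the relation $\{(0,1)\}$ is a counterexample), and your ``reduce further'' clause is left unspecified. The finite case analysis over $\{0,1\}$ could in principle be completed by hand using only the unconditional operations, but as it stands the proposal has a genuine gap.
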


\begin{proof}
If $\Gamma$ does not admit $\mmorp{c_0}$, it contains a weighted relation
assigning to the zero tuple a value larger than the optimum. Applying $\Opt$, we
have that $\wClonep{\Gamma}$ contains a \emph{relation} that is not invariant
under $c_0$. We denote by $\rho$ such a relation of minimum arity and by $r$ its
arity. Relation $\rho$ is non-empty, but $\tup{0}^r \not\in \rho$. If $r = 1$,
then $\rho = \rho_1 \in \wClonep{\Gamma}$.

Otherwise, $\base{i}{r} \in \rho$ for all $i$, because the minimisation of
$\rho$ over coordinate $i$ produces a non-empty relation invariant under $c_0$
(by the choice of $\rho$) and hence
containing $\tup{0}^{r-1}$. If $r \geq 3$, the $=$-restriction of $\rho$ at
coordinate $2$ followed by the minimisation results in an $(r-1)$-ary relation
$\rho'$ with $\base{1}{r-1} \in \rho'$ and $\tup{0}^{r-1} \not\in \rho'$, which
contradicts the choice of $\rho$. Therefore, $r = 2$. If $(1, 1) \in \rho$, we
would again get a contradiction by applying the $=$-restriction and minimisation
at coordinate $1$. Hence we have $\rho = \rho_{\neq} \in \wClonep{\Gamma}$.

By the analogous argument for multimorphism $\mmorp{c_1}$ we get $\rho_0 \in
\wClonep{\Gamma}$ or $\rho_{\neq} \in \wClonep{\Gamma}$.
\end{proof}

\begin{lemma}
\label{lmMinMax}
Let $\Gamma$ be a valued constraint language that admits neither of the multimorphisms
$\mmorp{\min, \min}$, $\mmorp{\max, \max}$, $\mmorp{\min, \max}$. If $\rho_0,
\rho_1 \in \wClonep{\Gamma}$, then $\rho_{\neq} \in \wClonep{\Gamma}$.
\end{lemma}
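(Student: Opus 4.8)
The plan is to mimic the structure of the proof of Lemma~\ref{lmConstants}: from the failure of the three multimorphisms, extract via $\Opt$ a \emph{relation} in $\wClonep{\Gamma}$ witnessing some violation, take one of minimum arity, and argue that it must essentially be $\rho_{\neq}$ (or allow us to derive $\rho_{\neq}$). Since $\rho_0, \rho_1 \in \wClonep{\Gamma}$, by the previous lemma pinnings to $0$ and $1$ are available, as are $=$-restrictions and minimisations; this gives us plenty of room to cut down arities. First I would observe that a binary weighted relation $\gamma$ on $\{0,1\}$ fails $\mmorp{\min,\max}$ exactly when it is supermodular on some pair of feasible tuples, and it is a standard fact (going back to the Boolean VCSP classification) that a relation failing all three of $\mmorp{\min,\min}$, $\mmorp{\max,\max}$, $\mmorp{\min,\max}$ cannot be ``essentially unary'' or a min/max-closed relation; applying $\Opt$ to a weighted relation witnessing one of these failures yields a crisp relation $\rho \in \wClonep{\Gamma}$ that is not invariant under at least one of $\min$, $\max$.

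Next I would take such a $\rho$ of minimum arity $r$, and argue $r = 2$. If $r = 1$ there is nothing to violate (every unary relation is min- and max-closed), so $r \geq 2$. As in Lemma~\ref{lmConstants}, for each coordinate $i$ the minimisation of $\rho$ over $i$ is min- or max-closed by minimality; combined with $=$-restrictions at various coordinates followed by minimisation, one shows that if $r \geq 3$ we can produce a strictly shorter relation in $\wClonep{\Gamma}$ still violating min- or max-invariance, contradicting minimality. (The $=$-restriction followed by minimisation is exactly the tool that ``merges'' two coordinates, and the key point is that a min/max violation on a longer tuple survives such a merge, because $\min$ and $\max$ act coordinatewise.) Hence $r = 2$. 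The binary relations on $\{0,1\}$ are easy to enumerate; those that are not simultaneously closed under $\min$ and $\max$, up to the coordinate symmetry, are essentially $\rho_{\neq}$, the relation $\{(0,0),(1,1)\} = \rho_{=}$ (which \emph{is} min- and max-closed, so excluded), and relations like $\{(0,1)\}$ or $\{(0,0),(0,1),(1,1)\}$ — but the latter \emph{are} closed under both $\min$ and $\max$ or are pinnable to a constant. Using pinning by $\rho_0, \rho_1$ to eliminate the degenerate cases, I would conclude that $\rho$ must (after possibly a pinning or a coordinate swap, which $\wClonep{\Gamma}$ supports) equal $\rho_{\neq}$, so $\rho_{\neq} \in \wClonep{\Gamma}$.

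The main obstacle I anticipate is the careful bookkeeping in the arity-reduction step: one must check that the operations available (minimisation, $=$-restriction, pinning to $0$ and $1$) really do preserve \emph{some} min- or max-violation when passing to a shorter relation, and in particular that the violation is not accidentally destroyed — this requires tracking \emph{which} of $\min$, $\max$ is violated and on which pair of tuples, rather than treating the three multimorphisms symmetrically. A secondary subtlety is that failing $\mmorp{\min,\max}$ is a genuinely different (and weaker-looking, being a ``single-function'' inequality on a pair) condition than failing $\mmorp{\min,\min}$ or $\mmorp{\max,\max}$; I would handle this by first using $\Opt$ to reduce everything to crisp relations, after which all three failures become statements purely about closure of a relation under $\min$ or $\max$ (a relation violating $\mmorp{\min,\max}$ at the crisp level is one not closed under $\min$ \emph{and} not closed under $\max$ on the relevant tuples), making the case analysis on binary relations finite and routine.
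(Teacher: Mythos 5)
Your plan hinges on the claim that applying $\Opt$ to a weighted relation violating one of the three multimorphisms always yields a \emph{crisp} relation in $\wClonep{\Gamma}$ that is not closed under $\min$ or $\max$, so that the whole argument can be run at the level of relations. This step is false, and it fails precisely in the hardest case of the lemma. Consider the binary weighted relation $\gamma$ with $\gamma(0,0)=5$, $\gamma(0,1)=\gamma(1,0)=1$, $\gamma(1,1)=0$: it violates $\mmorp{\min,\max}$ (since $\gamma(0,0)+\gamma(1,1)=5>2=\gamma(0,1)+\gamma(1,0)$), yet $\Feas(\gamma)=D^2$ and $\Opt(\gamma)=\{(1,1)\}$ are both closed under $\min$ and $\max$. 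More generally, the lemma must handle the case in which both $\min$ and $\max$ \emph{are} polymorphisms of $\wClonep{\Gamma}$, so that \emph{every} crisp relation in the clone is closed under both operations and no application of $\Opt$ or $\Feas$ can produce a crisp witness; the violation of $\mmorp{\min,\max}$ is then an irreducibly soft phenomenon. The paper's proof deals with this by first extracting the soft unary weighted relations $\gamma_1$ and $\gamma_0$ from the failures of $\mmorp{\min,\min}$ and $\mmorp{\max,\max}$ (which make $\wClonep{\Gamma}$ conservative on $\{0,1\}$), then invoking Lemma~\ref{lmSwapping} to reduce the supermodularity violation to a binary weighted relation with all four values finite, and finally shifting that binary function by unary weighted relations so that its $\Opt$ is exactly $\rho_{\neq}$. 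None of this machinery appears in your proposal, and the crisp-only route cannot replace it.

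A secondary gap: even in the cases where $\min$ (say) fails to be a polymorphism, the minimum-arity crisp witness is a binary relation $\rho'$ with $\rho_{\neq}\subseteq\rho'\subseteq\rho_{\neq}\cup\{(1,1)\}$, and when $\rho'=\{(0,1),(1,0),(1,1)\}$ no pinning or coordinate permutation turns it into $\rho_{\neq}$: pinning either coordinate yields a unary relation, and $\rho'$ is symmetric under swapping its coordinates. One must intersect it with the analogous relation coming from the failure of $\max$, or apply $\Opt$ after adding the soft unary relation $\gamma_0$ at both coordinates; either way the second half of the analysis that your proposal omits is again needed.
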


\begin{proof}
If $\min \not\in \Pol(\wClonep{\Gamma})$, we choose a minimum-arity relation
$\rho_{\vee}' \in \wClonep{\Gamma}$ that is not invariant under $\min$;
its arity $r$ is at least $2$. Let $\tup{x}, \tup{y} \in \rho_{\vee}'$ be
$r$-tuples such that $\min(\tup{x}, \tup{y}) \not\in \rho_{\vee}'$. Tuples
$\tup{x}, \tup{y}$ differ at every coordinate, otherwise we would obtain a
contradiction with the choice of $\rho_{\vee}'$ by taking a pinning instead. Therefore, $\min(\tup{x},
\tup{y}) = \tup{0}^r \not\in \rho_{\vee}'$ and, by the same argument as in Lemma~\ref{lmConstants},
we have $\base{i}{r} \in \rho_{\vee}'$ for all $i$. But then $r = 2$, otherwise we could
take as $\tup{x}, \tup{y}$ tuples $\base{2}{r}, \base{3}{r}$ which agree at
the first coordinate, and obtain a smaller counterexample by pinning. Hence we have $\rho_{\neq} \subseteq \rho_{\vee}'
\subseteq \rho_{\neq} \cup \{ (1,1) \}$.

If $\min \in \Pol(\wClonep{\Gamma})$, then we choose a minimum-arity weighted
relation $\gamma \in \wClonep{\Gamma}$ that does not admit multimorphism
$\mmorp{\min, \min}$ and denote its arity by $r$. Let $\tup{x}, \tup{y} \in
\Feas(\gamma)$ be $r$-tuples such that $\gamma(\tup{x}) + \gamma(\tup{y}) < 2
\cdot \gamma(\min(\tup{x}, \tup{y}))$. Without loss of generality, we have
$\gamma(\tup{x}) < \gamma(\min(\tup{x}, \tup{y}))$ and may assume that $\tup{y}
= \min(\tup{x}, \tup{y})$. Again, $\tup{x}$ and $\tup{y}$ must differ at every
coordinate, which implies $\tup{x} = \tup{1}^r, \tup{y} = \tup{0}^r$. If $r \geq
2$, we would obtain a contradiction by applying the $=$-restriction and
minimisation at coordinate $1$. Hence, $r = 1$ and by scaling and adding a
constant to $\gamma$ we get $\gamma_1 \in \wClonep{\Gamma}$.

Analogously, if $\max \not\in \Pol(\wClonep{\Gamma})$, we get $\rho_{\uparrow}'
\in \wClonep{\Gamma}$ where $\rho_{\uparrow}'$ is a binary relation such that
$\rho_{\neq} \subseteq \rho_{\uparrow}' \subseteq \rho_{\neq} \cup \{ (0, 0)
\}$. Otherwise, $\gamma_0 \in \wClonep{\Gamma}$. It holds
\begin{align}
\rho_{\neq}(x, y)
&= \rho_{\vee}'(x, y) + \rho_{\uparrow}'(x, y) \\
&= \Opt \left( \rho_{\vee}'(x, y) + \gamma_0(x) + \gamma_0(y) \right) \\
&= \Opt \left( \rho_{\uparrow}'(x, y) + \gamma_1(x) + \gamma_1(y) \right) \,,
\end{align}
so $\rho_{\neq}$ can be constructed with a planar gadget if at least one of
$\min$, $\max$ is not a polymorphism of $\wClonep{\Gamma}$.

Finally, consider the case when $\min, \max \in \Pol(\wClonep{\Gamma})$ and
hence $\gamma_0, \gamma_1 \in \wClonep{\Gamma}$. Set
$\wClonep{\Gamma}$ is then a conservative language, so we have
$\closure{\wClonep{\Gamma}} = \wClonep{\Gamma}$. We choose a minimum-arity
weighted relation $\gamma \in \wClonep{\Gamma}$ that does not admit
multimorphism $\mmorp{\min, \max}$ and denote its arity by $r$. Let $\tup{x},
\tup{y} \in \Feas(\gamma)$ be tuples such that $\gamma(\tup{x}) +
\gamma(\tup{y}) < \gamma(\min(\tup{x}, \tup{y})) + \gamma(\max(\tup{x},
\tup{y}))$. Note that $\min(\tup{x}, \tup{y}), \max(\tup{x}, \tup{y}) \in
\Feas(\gamma)$. By the choice of $\gamma$, tuples $\tup{x}, \tup{y}$ must
differ at every coordinate, and hence
$\tup{y} = \negate{\tup{x}}$, $\min(\tup{x}, \tup{y}) = \tup{0}^r$,
$\max(\tup{x}, \tup{y}) = \tup{1}^r$. We partition coordinates $\{1, \dots,
r\}$ into $I = \{ i ~|~ x_i = 0 \}$ and $J = \{ j ~|~ x_j = 1 \}$. By
Lemma~\ref{lmSwapping}, $\closure{\{\gamma\}} \subseteq \wClonep{\Gamma}$ contains a \emph{binary} weighted relation that
does not admit multimorphism $\mmorp{\min, \max}$, and hence $r = 2$. It holds
$\gamma(0, 1) + \gamma(1, 0) < \gamma(0, 0) + \gamma(1, 1)$, where all the
values are finite. We may assume that $\gamma(0, 0) + \gamma(1, 1) - \gamma(0,
1) - \gamma(1, 0) = 2$ and $\gamma(0, 0) = 1$ (this can be achieved by scaling
and adding a constant). We define unary weighted relations $\mu_1, \mu_2 \in
\wClonep{\Gamma}$ as $\mu_1(0) = \mu_2(0) = 0$, $\mu_1(1) = -\gamma(1, 0)$,
$\mu_2(1) = -\gamma(0, 1)$. By adding $\mu_1$ and $\mu_2$ to $\gamma$ at the
first and second coordinate respectively we get $\gamma_{\neq}$, and therefore
$\rho_{\neq} = \Opt(\gamma_{\neq}) \in \wClonep{\Gamma}$.
\end{proof}

\begin{lemma}
\label{lmNegation}
Let $\Gamma$ be a valued constraint language that does not admit multimorphism $\mmorp{\neg}$. If
$\rho_{\neq} \in \wClonep{\Gamma}$, then $\rho_0, \rho_1 \in \wClonep{\Gamma}$.
\end{lemma}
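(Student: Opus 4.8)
The plan is to manufacture, somewhere inside $\wClonep{\Gamma}$, a \emph{unary} weighted relation that strictly prefers $0$ to $1$, and then read off $\rho_0$ by applying $\Opt$. First I would unpack the hypothesis: $\Gamma$ fails to admit $\mmorp{\neg}$ exactly when some $\gamma\in\Gamma$ has a feasible tuple $\tup{x}$ with either $\negate{\tup{x}}\notin\Feas(\gamma)$ or $\gamma(\negate{\tup{x}})>\gamma(\tup{x})$; in both cases $\gamma(\tup{x})<\gamma(\negate{\tup{x}})$ in $\QInfty$ (with the convention that any finite value is below $\infty$). Fix such a $\gamma\in\Gamma\subseteq\wClonep{\Gamma}$ of arity $r$, such a tuple $\tup{x}$, and put $J=\{\,j\mid x_j=1\,\}$.

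Next I would move $\tup{x}$ to $\tup{0}^r$ by a twist. Since $\rho_{\neq}\in\wClonep{\Gamma}$, the lemma stated just before Lemma~\ref{lmConstants} gives that $\wClonep{\Gamma}$ is closed under twists; applying the twist at every coordinate of $J$ in turn yields $\sigma\in\wClonep{\Gamma}$ with $\sigma(\tup{y})=\gamma(\tup{y}\oplus\mathbf{1}_J)$, where $\mathbf{1}_J$ denotes the indicator tuple of $J$. The point is that $\mathbf{1}_J=\tup{x}$ and $\tup{1}^r\oplus\mathbf{1}_J=\negate{\tup{x}}$, so this single twist sends $\tup{x}$ to $\tup{0}^r$ and $\negate{\tup{x}}$ to $\tup{1}^r$ simultaneously; hence $\sigma(\tup{0}^r)=\gamma(\tup{x})<\gamma(\negate{\tup{x}})=\sigma(\tup{1}^r)$.

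Then I would collapse all coordinates of $\sigma$ onto one. Since $\wClonep{\Gamma}$ is closed under $=$-restrictions (same lemma) and under minimisation (Lemma~\ref{lem:planarunary}), it is closed under the composite operation ``take the $=$-restriction at the first coordinate, then minimise over it'', which merges the first two coordinates of a weighted relation. Iterating this $r-1$ times produces a unary $\hat{\sigma}\in\wClonep{\Gamma}$ with $\hat{\sigma}(a)=\sigma(a,a,\dots,a)$ for $a\in\{0,1\}$, so $\hat{\sigma}(0)<\hat{\sigma}(1)$ and therefore $\rho_0=\Opt(\hat{\sigma})\in\wClonep{\Gamma}$. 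Finally, having $\rho_0=\rho_{\{0\}}\in\wClonep{\Gamma}$, I would invoke closure under pinning to $0$ (Lemma~\ref{lem:planarunary}) and pin $\rho_{\neq}$ at its first coordinate to $0$, which yields the unary relation $\{(1)\}=\rho_1$.

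I do not expect a genuine obstacle here: unlike Lemmas~\ref{lmConstants}--\ref{lmMinMax}, this argument needs no minimal-counterexample selection and no case split on polymorphisms, because the coordinate-merging trick reduces to the unary case at any arity. The only places that need care are (i) uniformly packaging the two ways $\mmorp{\neg}$ can fail into the one inequality $\gamma(\tup{x})<\gamma(\negate{\tup{x}})$, and (ii) verifying that the chosen twist carries $\tup{x}$ and $\negate{\tup{x}}$ to $\tup{0}^r$ and $\tup{1}^r$ respectively, so that strictness is preserved.
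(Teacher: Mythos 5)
Your proof is correct and takes essentially the same route as the paper's: both reduce to a unary weighted relation by merging coordinates through ($=$/$\neq$-)restrictions followed by minimisation, and then extract $\rho_0,\rho_1$ via $\Opt$ and a twist. The only cosmetic difference is that you normalise $\tup{x}$ to $\tup{0}^r$ by twisting first, so that only $=$-restrictions are needed, whereas the paper phrases the descent as a minimum-arity counterexample argument and chooses the $=$- or $\neq$-restriction according to whether $x_1=x_2$.
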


\begin{proof}
We choose a minimum-arity weighted relation $\gamma \in \wClonep{\Gamma}$ that
does not admit multimorphism $\mmorp{\neg}$ and denote its arity by $r$. Let
$\tup{x} \in \Feas(\gamma)$ be an $r$-tuple such that $\gamma(\tup{x}) \neq
\gamma(\negate{\tup{x}})$. It must hold $r = 1$, otherwise we would get a
smaller counterexample by applying the $=$-restriction or $\neq$-restriction at the first
coordinate (depending on whether $x_1 = x_2$ or $x_1 \neq x_2$) followed by
minimisation. Hence, $\Opt(\gamma) = \rho_0$ or $\Opt(\gamma) = \rho_1$.
Say $\Opt(\gamma)=\rho_0$, the other case is analogous. Then the twist
$\gamma'(x)=\gamma(x\oplus 1)$ of $\gamma$ satisfies $\Opt(\gamma')=\rho_1$.
\end{proof}

\begin{lemma}
\label{lmMnrtMjrt}
Let $\Gamma$ be a valued constraint language that admits neither of the multimorphisms
$\mmorp{\mnrt, \mnrt, \mnrt}$, $\mmorp{\mjrt, \mjrt, \mjrt}$, $\mmorp{\mjrt,
\mjrt, \mnrt}$. If $\rho_0, \rho_1, \rho_{\neq} \in \wClonep{\Gamma}$, then
$\rho_\text{1-in-3} \in \wClonep{\Gamma}$.
\end{lemma}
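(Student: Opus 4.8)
Since $\rho_0,\rho_1,\rho_{\neq}\in\wClonep{\Gamma}$, the set $\wClonep{\Gamma}$ is closed under pinning, $=$‑restriction, $\neq$‑restriction, twists, minimisation, join, addition of unary weighted relations, and of course under $\Feas$ and $\Opt$ (by Lemma~\ref{lem:planarunary} and the twist lemma). So the task reduces to two things: (i) producing somewhere in $\wClonep{\Gamma}$ a crisp relation $\rho$ that is invariant under \emph{none} of $c_0,c_1,\min,\max,\mjrt,\mnrt$; and (ii) exhibiting an \emph{explicit} plane instance using only such $\rho$ together with $\rho_0,\rho_1,\rho_{\neq}$ whose projection onto three outer‑face variables is $\rho_\text{1-in-3}$ (in the spirit of the gadget in Figure~\ref{fig:inst2}). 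The presence of $\rho_0,\rho_1,\rho_{\neq}$ already rules out $c_0$‑, $c_1$‑validity and invariance under $\min$ and $\max$ for $\rho$, so the real content of (i) is to force a relation that is simultaneously \emph{non‑affine} and \emph{non‑bijunctive}.

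For (i) I would split on whether $\mnrt,\mjrt\in\Pol(\wClonep{\Gamma})$, mimicking the bookkeeping in the proofs of Lemmas~\ref{lmConstants} and~\ref{lmMinMax} but tracking planarity. If $\mnrt\notin\Pol(\wClonep{\Gamma})$ (resp.\ $\mjrt\notin\Pol(\wClonep{\Gamma})$), pick a minimum‑arity crisp relation in $\wClonep{\Gamma}$ that is not affine (resp.\ not bijunctive); minimality together with pinning, $=$‑restriction and minimisation forces its arity down to a small value, and after applying twists one lands on one of a short list of canonical relations (the binary ``$\vee$''‑type relation $\{00,01,10\}$, the ternary not‑all‑equal relation, a ternary $1$‑in‑$3$‑type relation, etc.). If instead $\mnrt,\mjrt\in\Pol(\wClonep{\Gamma})$, then every crisp relation in $\wClonep{\Gamma}$ is affine and bijunctive, hence a conjunction of equalities, disequalities and pinnings; since $\Gamma$ still fails $\mmorp{\mjrt,\mjrt,\mnrt}$, take a minimum‑arity weighted $\gamma^{*}\in\wClonep{\Gamma}$ violating it, collapse every pair of coordinates forced equal or disequal (via $=$‑/$\neq$‑restriction plus minimisation, with twists) and pin every fixed coordinate — a genuine arity decrease by minimality — so that $\Feas(\gamma^{*})$ becomes full; then invoke Lemma~\ref{lmSwapping} to cut $\gamma^{*}$ down to a \emph{binary} weighted relation still violating $\mmorp{\mjrt,\mjrt,\mnrt}$, i.e.\ with all four values finite and $\gamma^{*}(00)+\gamma^{*}(11)\neq\gamma^{*}(01)+\gamma^{*}(10)$; after a twist (to normalise the sign) and the addition of two suitable unary weighted relations followed by $\Opt$, this yields a non‑bijunctive crisp relation in $\wClonep{\Gamma}$, bringing us back to the previous paragraph.

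For (ii), in each of the finitely many canonical situations we now have in $\wClonep{\Gamma}$ a small non‑affine, non‑bijunctive crisp relation $\rho$ together with $\rho_0,\rho_1,\rho_{\neq}$, and I would simply write down, case by case, a plane instance whose constraints are copies of $\rho,\rho_0,\rho_1,\rho_{\neq}$ and whose projection onto three chosen outer‑face variables equals $\rho_\text{1-in-3}$ (for instance, combining a not‑all‑equal relation with disequalities on auxiliary variables arranged around a single hidden vertex). The step I expect to be the main obstacle is exactly this last one together with the case analysis feeding into it: because there is no Galois connection in the planar setting, ``$\rho_\text{1-in-3}$ is expressible'' does not give ``$\rho_\text{1-in-3}$ is \emph{planarly} expressible'', so each canonical $\rho$ needs its own planar gadget; moreover one must check carefully that every reduction used to obtain $\rho$ (the $=$‑/$\neq$‑collapsing, the application of Lemma~\ref{lmSwapping}, and the unary additions followed by $\Opt$) stays inside $\wClonep{\Gamma}$ and strictly decreases arity, and in particular that the degenerate sub‑case in which the only available witness is a binary weighted ``cut''‑like relation still produces a non‑bijunctive crisp relation rather than getting stuck.
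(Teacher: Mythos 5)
Your overall architecture matches the paper's: split on whether $\mnrt$ and $\mjrt$ are polymorphisms of $\wClonep{\Gamma}$, drive minimum-arity counterexamples down by pinning, $=$/$\neq$-restriction, minimisation and twists, use Lemma~\ref{lmSwapping} to reach a binary weighted relation in the multimorphism case, and finish with explicit planar gadgets (triangles around a hidden vertex). However, there is a genuine gap in your endgame for the case where both $\mnrt$ and $\mjrt$ remain polymorphisms. You claim that the binary weighted relation obtained from Lemma~\ref{lmSwapping}, after a twist, two unary additions and $\Opt$, ``yields a non-bijunctive crisp relation'', which would let you loop back to the earlier case. This cannot work: every binary relation is $2$-decomposable and hence invariant under the (unique Boolean) majority operation, so no binary object you can produce at that point is non-bijunctive, and by assumption there is no small non-bijunctive crisp relation to be found by the minimum-arity method either. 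The paper's resolution is to \emph{not} route through an intermediate crisp relation at all: from the binary weighted relation one normalises to the soft disequality $\gamma_{\neq}$ (value $0$ on $(0,1),(1,0)$ and $1$ on $(0,0),(1,1)$), and then obtains $\rho_\text{1-in-3}$ in one step as $\Opt$ of a planar triangle of three copies of $\gamma_{\neq}$ together with three soft unaries $\gamma_0$; the failure of majority-invariance appears only at the ternary level, after applying $\Opt$ to the sum.

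A second, related omission: you never say where the soft (finite-valued, non-constant) unary weighted relations come from, yet your plan needs them for the ``unary additions followed by $\Opt$'' selection step, and the paper needs them whenever only one of the two crisp witnesses ($\rho_{\uparrow}$ or the $1$-in-$3$-type relation $\rho_\text{1-in-3}'$) is available. The crisp relations $\rho_0,\rho_1$ are useless for this purpose. The paper's argument is that if at least one of $\mnrt,\mjrt$ is a polymorphism then $\Gamma$ cannot be crisp (a crisp language with a minority or majority polymorphism admits $\mmorp{\mnrt,\mnrt,\mnrt}$ or $\mmorp{\mjrt,\mjrt,\mjrt}$), and a minimum-arity non-crisp weighted relation must be unary, which yields $\gamma_0$ and, by a twist, $\gamma_1$. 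Finally, a smaller technical point: in the $\mmorp{\mjrt,\mjrt,\mnrt}$ case you should take the minimum-arity counterexample to Inequality~\eqref{eqMMorpIneq} holding \emph{with equality}, not merely a violator of the inequality; otherwise the pinned sub-instances cannot be assumed to satisfy equality, and the cancellation that isolates the disequality $\gamma(\tup{z})+\gamma(\negate{\tup{z}})\neq\gamma(\tup{0}^r)+\gamma(\tup{1}^r)$ fed into Lemma~\ref{lmSwapping} breaks down.
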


\begin{proof}
If $\mnrt \not\in \Pol(\wClonep{\Gamma})$, we choose a minimum-arity
relation $\rho \in \wClonep{\Gamma}$ that is not invariant under $\mnrt$. Its
arity $r$ must be at least $2$; let us first assume $r \geq 3$. For any triple
of $r$-tuples from $\rho$ that agree at some coordinate, the $r$-tuple obtained
by applying $\mnrt$ to them also belongs to $\rho$ (otherwise we would get a
contradiction with the choice of $\rho$ by taking a pinning instead). Let $\tup{x}, \tup{y}, \tup{z} \in \rho$ be
$r$-tuples such that $\mnrt(\tup{x}, \tup{y}, \tup{z}) \not\in \rho$. Without
loss of generality, we may assume that $\mnrt(\tup{x}, \tup{y}, \tup{z}) =
\tup{0}^r$ (this can be achieved with twists). By the same argument as in
Lemma~\ref{lmConstants}, we have $\base{i}{r} \in \rho$ for all $i$. Let
$\tup{w} \in \rho$ be a tuple with the minimum even number of ones (such a
tuple exists as at least one of $\tup{x}, \tup{y}, \tup{z}$ contains an even
number of ones). If $\tup{w} \neq \tup{1}^r$, there are distinct coordinates
$i, j, k$ with $w_i = w_j = 1, w_k = 0$. Because $\tup{w}, \base{i}{r},
\base{j}{r}$ agree at coordinate $k$, tuple $\mnrt(\tup{w}, \base{i}{r},
\base{j}{r})$ belongs to $\rho$. However, it has two fewer ones than $\tup{w}$,
which is a contradiction. Hence, $\tup{w} = \tup{1}^r$ and $r \geq 4$.
But then $\mnrt(\tup{1}^r, \base{3}{r}, \base{4}{r}) \not\in \rho$ (as it
contains an even number of ones), and we obtain a smaller counterexample by
taking the $=$-restriction of $\rho$ at the first coordinate followed by
minimisation. Therefore, $r = 2$ and $|\rho| = 3$. Using
twists, we can get from $\rho$ relation $\rho_{\uparrow} = \{ (0, 0), (0, 1),
(1, 0) \} \in \wClonep{\Gamma}$.

If $\mjrt \not\in \Pol(\wClonep{\Gamma})$, we choose a minimum-arity relation
$\rho_\text{1-in-3}' \in \wClonep{\Gamma}$ that is not invariant under $\mjrt$.
Its arity $r$ must be at least $3$ since every unary and binary relation admits
$\mjrt$ as a polymorphism. By the same argument as for $\mnrt$, we may assume
$\tup{0}^r \not\in \rho_\text{1-in-3}'$, and it can be shown that $\base{i}{r}
\in \rho_\text{1-in-3}'$ for all $i$. If $r \geq 4$, tuples $\base{1}{r},
\base{2}{r}, \base{3}{r}$ and $\mjrt(\base{1}{r}, \base{2}{r}, \base{3}{r}) =
\tup{0}^r$ agree at coordinate $4$; we then obtain a smaller counterexample by
pinning. Therefore, $r = 3$.

If neither of $\mnrt, \mjrt$ is a polymorphism of $\wClonep{\Gamma}$, we have
\begin{equation}
\rho_\text{1-in-3}(x, y, z)
= \rho_\text{1-in-3}'(x, y, z) +
  \rho_{\uparrow}(x, y) + \rho_{\uparrow}(y, z) + \rho_{\uparrow}(z, x) \,,
\end{equation}
which can be implemented in a planar way, and hence $\rho_\text{1-in-3} \in
\wClonep{\Gamma}$. Otherwise, $\Gamma$ is not a crisp language (i.e.\ not a
$\{0,\infty\}$-valued language), because that would make it admit multimorphism
$\mmorp{\mnrt, \mnrt, \mnrt}$ or $\mmorp{\mjrt, \mjrt, \mjrt}$. Let $\mu \in
\wClonep{\Gamma}$ be a minimum-arity non-crisp weighted relation and $\tup{x},
\tup{y} \in \Feas(\mu)$ tuples such that $\mu(\tup{x}) \neq \mu(\tup{y})$.
Tuples $\tup{x}, \tup{y}$ differ at every coordinate (otherwise we could obtain
a smaller counterexample by pinning), and hence $\tup{y} = \negate{\tup{x}}$.
Moreover, $\mu$ is unary, otherwise we could apply the $=$-restriction or
$\neq$-restriction at the first coordinate (depending on whether $x_1 = x_2$ or
$x_1 \neq x_2$) followed by minimisation to obtain a smaller counterexample. If
$\mu(0) < \mu(1)$, we get $\gamma_0 \in \wClonep{\Gamma}$ by scaling $\mu$ and
adding a constant, and $\gamma_1 \in \wClonep{\Gamma}$ by twisting $\gamma_0$;
the case $\mu(0) > \mu(1)$ is symmetric.
It holds
\begin{align}
\rho_\text{1-in-3}(x, y, z)
&= \Opt \left(
      \rho_{\uparrow}(x, y) + \rho_{\uparrow}(y, z) + \rho_{\uparrow}(z, x) +
      \gamma_1(x) + \gamma_1(y) + \gamma_1(z) \right) \\
&= \Opt \left( \rho_\text{1-in-3}'(x, y, z) +
      \gamma_0(x) + \gamma_0(y) + \gamma_0(z) \right) \,.
\end{align}
Both can be implemented planarly, and therefore $\rho_\text{1-in-3} \in
\wClonep{\Gamma}$ if exactly one of $\mnrt, \mjrt$ is a polymorphism of
$\wClonep{\Gamma}$.

Finally, we consider the case when both $\mnrt, \mjrt \in \Pol(\wClonep{\Gamma})$.
Let $\gamma \in \wClonep{\Gamma}$ be an $r$-ary weighted relation of the minimum
arity for which Inequality~\eqref{eqMMorpIneq} does not hold as \emph{equality}
for multimorphism $\mmorp{\mjrt, \mjrt, \mnrt}$. Let $\tup{x}, \tup{y}, \tup{z}
\in \Feas(\gamma)$ be $r$-tuples that violate the equality. They do not agree at
any coordinate (otherwise we could obtain a smaller counterexample by pinning),
and hence $\mjrt(\tup{x}, \tup{y}, \tup{z})$ and $\mnrt(\tup{x}, \tup{y},
\tup{z})$ differ everywhere. Without loss of generality, we may assume that
$\mjrt(\tup{x}, \tup{y}, \tup{z}) = \tup{0}^r$ and $\mnrt(\tup{x}, \tup{y},
\tup{z}) = \tup{1}^r$ (this can be achieved with twists) and $\tup{z} \neq
\tup{0}^r$. Note that $\tup{0}^r, \tup{1}^r \in \Feas(\gamma)$ because $\mjrt,
\mnrt$ are polymorphisms of $\gamma$. Tuples $\tup{x}, \tup{y}, \tup{0}^r$ agree
at all coordinates $i$ where $z_i = 1$, and hence they satisfy
\eqref{eqMMorpIneq} as equality, i.e.
\begin{equation}
\gamma(\tup{x}) + \gamma(\tup{y}) + \gamma(\tup{0}^r)
= 2\cdot\gamma(\mjrt(\tup{x}, \tup{y}, \tup{0}^r))
  + \gamma(\mnrt(\tup{x}, \tup{y}, \tup{0}^r))
= 2\cdot\gamma(\tup{0}^r) + \gamma(\negate{\tup{z}}) \,.
\end{equation}
Because $\gamma(\tup{x}) + \gamma(\tup{y}) + \gamma(\tup{z}) \neq
2\cdot\gamma(\tup{0}^r) + \gamma(\tup{1}^r)$, this implies $\gamma(\tup{z}) +
\gamma(\negate{\tup{z}}) \neq \gamma(\tup{0}^r) + \gamma(\tup{1}^r)$. We are
going to apply Lemma~\ref{lmSwapping} for this disequality. Language
$\wClonep{\Gamma}$ is conservative (as it contains both $\gamma_0, \gamma_1$),
and hence $\closure{\wClonep{\Gamma}} = \wClonep{\Gamma}$. It admits a majority
polymorphism, therefore every relation in $\wClonep{\Gamma}$ is
2-decomposable~\cite{Jeavons98:consist}. We partition coordinates $\{1, \dots,
r\}$ into $I = \{ i ~|~ z_i = 0\}$ and $J = \{ j ~|~ z_j = 1\}$. By
Lemma~\ref{lmSwapping}, there is a binary weighted relation $\gamma' \in
\closure{\{\gamma\}} \subseteq \wClonep{\Gamma}$ with $\Feas(\gamma') = D^2$ and
$\gamma'(0,1) + \gamma'(1,0) \neq \gamma'(0,0) + \gamma'(1,1)$. We may assume
that $\gamma'(0,1) + \gamma'(1,0) < \gamma'(0,0) + \gamma'(1,1)$, otherwise we
apply a twist. As in the proof of Lemma~\ref{lmMinMax}, weighted relation
$\gamma_{\neq}$ can be obtained from $\gamma'$. Then we planarly construct
$\rho_\text{1-in-3} \in \wClonep{\Gamma}$ as
\begin{equation}
\rho_\text{1-in-3}(x, y, z)
= \Opt \left( \gamma_{\neq}(x, y) + \gamma_{\neq}(y, z) + \gamma_{\neq}(z, x) +
              \gamma_0(x) + \gamma_0(y) + \gamma_0(z) \right) \,.
\end{equation}
\end{proof}

\begin{proof}[Proof (of Theorem~\ref{thm:main1})]
Since $\Gamma$ is intractable we know,
by~\cite[Theorem~7.1]{Cohen06:complexitysoft}, that $\Gamma$ admits neither of
the multimorphisms $\mmorp{c_0}$, $\mmorp{c_1}$, $\mmorp{\min, \min}$,
$\mmorp{\max, \max}$, $\mmorp{\min, \max}$, $\mmorp{\mnrt, \mnrt, \mnrt}$,
$\mmorp{\mjrt, \mjrt, \mjrt}$, $\mmorp{\mjrt, \mjrt, \mnrt}$. By assumption,
$\Gamma$ is not self-complementary and hence does not admit the $\mmorp{\neg}$
multimorphism.

By Lemmas~\ref{lmConstants},~\ref{lmMinMax}, and~\ref{lmNegation}, we have $\rho_0, \rho_1, \rho_{\neq}
\in \wClonep{\Gamma}$ and hence by Lemma~\ref{lmMnrtMjrt} $\rho_\text{1-in-3} \in
\wClonep{\Gamma}$. Planar \class{$1$-in-$3$ Positive $3$-Sat}
problem is NP-complete~\cite{Mulzer08:jacm}, and therefore
$\Gamma$ is planarly-intractable by Theorem~\ref{thmpClone}.
\end{proof}

\section{Conservative Valued CSPs}
\label{sec:cons}

A valued constraint language $\Gamma$ is called \emph{conservative} if $\Gamma$
includes all $\{0,1\}$-valued unary weighted relations. As our second result, we
prove that planarity does not add any tractable cases for conservative valued
constraint languages. 

\begin{theorem}\label{thm:main2}
Let $\Gamma$ be an intractable conservative valued constraint language. Then
$\Gamma$ is planarly-intractable.
\end{theorem}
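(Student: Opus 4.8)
The plan is to follow the template of the conservative VCSP classification of Kolmogorov and \v{Z}ivn\'y~\cite{kz13:jacm}, but executed inside the closure $\clGamma$ (equivalently $\wClonep{\Gamma}$) so that every gadget we build is planar. Recall from Theorem~\ref{thmpClone} that it suffices to show $\wClonep{\Gamma}$ is planarly-intractable, and from Lemma~\ref{lem:planarunary} and the discussion after the definition of $\clGamma$ that for conservative $\Gamma$ we have $\clGamma\subseteq\wClonep{\Gamma}$, and $\clGamma$ is closed under $\Feas$, $\Opt$, addition of unary weighted relations, minimisation, pinning, domain restriction, and join. The engine of the argument is that intractability of a conservative $\Gamma$ is characterised algebraically: by the classification of~\cite{kz13:jacm,tz15:icalp}, $\Gamma$ is tractable if and only if there is a family of pairs $\{a_{uv},b_{uv}\}$ (one for each two-element subset of $D$) together with a symmetric tournament pair / STP-and-MJN multimorphism structure witnessing tractability; an intractable $\Gamma$ therefore fails this, and the standard route is to produce, by a pinning/domain-restriction argument, a two-element subdomain $\{a,b\}\subseteq D$ on which $\Gamma$ restricted there is an intractable \emph{Boolean} language that is moreover not self-complementary with respect to swapping $a\leftrightarrow b$.

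Concretely, the key steps, in order, are: (1) Use pinning and domain restriction (available in $\clGamma$ once we have the relevant unary crisp relations, which conservativity supplies) to reduce to the case where the core obstruction lives on a two-element subdomain $B=\{a,b\}$; this is exactly where the analysis of~\cite{kz13:jacm} locates the hardness, via their notion of a ``generalised weak tournament pair'' failing. (2) On $B$, the restricted language $\Gamma|_B$ is a Boolean language; its conservativity (it contains the unary relations picking out $a$ and picking out $b$) forces it to be \emph{not} self-complementary in the Boolean sense as soon as it is intractable, precisely by the remark in the introduction: an intractable Boolean language containing both $\gamma_0$ and $\gamma_1$ cannot admit $\mmorp{\neg}$. (3) Invoke Theorem~\ref{thm:main1} (our first result): a not-self-complementary intractable Boolean language is planarly-intractable, so $\Gamma|_B$ is planarly-intractable, i.e.\ $\rho_\text{1-in-3}$ (on $B$, suitably relabelled) is planarly expressible from $\Gamma|_B$. (4) Lift this back: a planar gadget over $\Gamma|_B$ is a planar gadget over $\clGamma$ because domain restriction to $B$ and pinning are planarly expressible in $\clGamma$; hence $\wClonep{\Gamma}$ planarly expresses a relation whose planar VCSP is NP-hard, and Theorem~\ref{thmpClone} finishes. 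The subtlety is step~(1): showing that intractability really does localise to a two-element subdomain in a way compatible with planar gadgets. In~\cite{kz13:jacm} this localisation uses the full (non-planar) Galois machinery and linear-programming duality, neither of which is available here; so the bulk of Section~\ref{sec:cons} will be re-proving the needed structural facts about $\clGamma$ (e.g.\ that if $\clGamma$ admits no symmetric-tournament-pair-type multimorphism then one can extract a binary weighted relation on some two-element subdomain violating the relevant inequality) using only the closure operations listed above, with Lemma~\ref{lmSwapping} playing the role that LP duality plays in~\cite{tz15:icalp} for passing from a high-arity violation to a binary one.

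I would organise the write-up by first recording the algebraic characterisation of intractable conservative languages in the form most convenient for gadget-building (a list of forbidden multimorphism patterns, as in the Boolean case), then proving a sequence of lemmas parallel to Lemmas~\ref{lmConstants}--\ref{lmMnrtMjrt}: each lemma says ``if $\clGamma$ fails such-and-such multimorphism then $\clGamma$ contains such-and-such crisp or soft relation on a two-element subdomain,'' with proofs that pick a minimum-arity witness, use pinning/$=$-restriction-analogues and minimisation to cut the arity down, and use Lemma~\ref{lmSwapping} (together with 2-decomposability when a majority polymorphism is around, exactly as in the last case of Lemma~\ref{lmMnrtMjrt}) to land on arity two. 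The endgame assembles these two-element-subdomain relations into a not-self-complementary intractable Boolean language and applies Theorem~\ref{thm:main1}.

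The main obstacle I anticipate is the arity-reduction / localisation step for non-Boolean domains: unlike the Boolean case, a minimum-arity counterexample on a general domain need not ``differ at every coordinate'' in a way that immediately forces arity one or two, because there are more than two labels, so the $=$-restriction trick must be replaced by domain restriction to carefully chosen two-element subsets together with the join operation, and one must track which two-element subdomain is active so that the final Boolean hardness instance is internally consistent across all the constraints used. Managing this bookkeeping — essentially reconstructing the ``soft edge'' / colouring structure of~\cite{kz13:jacm} without the Galois connection and making every intermediate object planarly expressible — is the technical heart of the proof, and is what the authors refer to when they say the proof ``is not trivial'' and required significant simplification and generalisation of~\cite{kz13:jacm}.
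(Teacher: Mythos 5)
Your proposal follows essentially the same route as the paper's proof. The paper makes your ``localisation to a two-element subdomain'' precise via a graph $G_\Gamma$ on pairs of labels whose soft self-loops yield exactly the non-self-complementary intractable conservative Boolean language to which Theorem~\ref{thm:main1} applies (Theorem~\ref{thm:ssl}), and whose absence yields STP and MJN multimorphisms certifying tractability (Theorems~\ref{thmSTP} and~\ref{thmMJN}), with Lemma~\ref{lmSwapping} and a direct proof of 2-decomposability of $\clGamma$ standing in for the LP-duality and Galois-connection machinery --- precisely the arity-reduction and bookkeeping issues you identify as the technical heart.
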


Consequently, we obtain a complexity classification of all conservative valued
constraint languages in the planar setting, thus sharpening the classification
of conservative valued constraint languages~\cite{kz13:jacm,tz15:icalp}. As
mentioned in Section~\ref{sec:intro}, for Boolean domains
Theorem~\ref{thm:main2} follows from Theorem~\ref{thm:main1}. Thus, the only
tractable Boolean conservative languages in the planar setting are given by
the multimorphisms $\mmorp{\min,\max}$ and
$\mmorp{\mjrt,\mjrt,\mnrt}$~\cite{Cohen06:complexitysoft}.

We now define certain special types of multimorphisms.

A $k$-ary operation $f:D^k\to D$ if called \emph{conservative} if
$f(x_1,\ldots,x_k)\in\{x_1,\ldots,x_k\}$ for every $x_1,\ldots,x_k\in D$.
A multimorphism $\mmorp{f_1,\ldots,f_k}$ is called \emph{conservative} if
applying $\mmorp{f_1,\ldots,f_k}$ to $\tuple{x_1,\ldots,x_k}$ returns a
permutation of $\tuple{x_1,\ldots,x_k}$.

\begin{definition}
A binary multimorphism $\mmorp{f,g}$ of $\Gamma$ is called a
\emph{symmetric tournament pair} (STP) if it is conservative and both $f$ and
$g$ are commutative operations.
\end{definition}

It was shown in~\cite{Cohen08:Generalising} that languages admitting an STP
multimorphism are tractable.

\begin{definition}
A ternary multimorphism $\mmorp{f,g,h}$ is called an MJN if $f$ and $g$ are
(possibly equal) majority operations and $h$ is a minority operation.
\end{definition}

It was shown in~\cite{kz13:jacm} that languages admitting an MJN multimorphism
are tractable.

\begin{theorem}[\cite{kz13:jacm}]\label{thm:cons}
Let $\Gamma$ be a conservative valued constraint language on $D$.
Then either $\Gamma$ admits a conservative binary multimorphism $\mmorp{f,g}$ and a conservative
ternary multimorphism $\mmorp{f',g',h'}$ and there is a family $M$ of 2-element subsets of $D$, such that

\begin{itemize}
\item
for every $\{a,b\}\in M$, $\mmorp{f,g}$ restricted to $\{a,b\}$ is a symmetric
tournament pair, and
\item for every $\{a,b\}\not \in M$, $\mmorp{f',g',h'}$ restricted to $\{a,b\}$
is an MJN multimorphism,
\end{itemize}

in which case $\Gamma$ is tractable,
or else $\Gamma$ is intractable.
\end{theorem}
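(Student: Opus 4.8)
The plan is to prove both directions of the stated dichotomy, the harder being that any conservative $\Gamma$ failing to admit the STP/MJN structure is intractable. I would begin with a \emph{local} analysis, one 2-element subset $\{a,b\}\subseteq D$ at a time. Since $\clGamma\subseteq\wClonep{\Gamma}$ is closed under domain restriction, restricting every $\gamma\in\Gamma$ to $\{a,b\}$ and identifying $\{a,b\}$ with $\{0,1\}$ yields a \emph{Boolean} conservative language, as the $\{0,1\}$-valued unaries on $D$ restrict to all such unaries on $\{a,b\}$. By the Boolean classification~\cite{Cohen06:complexitysoft} this restriction is tractable only if it admits one of the eight tractable multimorphisms. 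A short counting argument against the $\{0,1\}$-valued unaries shows that every multimorphism of a conservative language must return a \emph{permutation} of its inputs, and among the eight the only such conservative multimorphisms are $\mmorp{\min,\max}$ (an STP) and $\mmorp{\mjrt,\mjrt,\mnrt}$ (an MJN). Hence each pair is STP-admitting, MJN-admitting, or locally NP-hard, the last case making $\Gamma$ intractable and finishing the proof. Lemma~\ref{lmSwapping} is the engine here: a violation of $\mmorp{\min,\max}$ or of $\mmorp{\mjrt,\mjrt,\mnrt}$ on a high-arity weighted relation can always be pushed down to a \emph{binary} member of $\closure{\{\gamma\}}$, so the local classification reduces to binary weighted relations supported on $\{a,b\}$.

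Having classified the pairs, I would set $M$ to be the family of pairs on which only the STP option survives, and then attempt to assemble \emph{global} operations $f,g,f',g',h'$ on all of $D$ that simultaneously witness an STP on each pair in $M$, an MJN on each pair outside $M$, and that remain conservative multimorphisms of the \emph{full} language $\Gamma$. Conservativity forces $f(a,b),g(a,b)\in\{a,b\}$ for every pair, so $\mmorp{f,g}$ is essentially a tournament orienting the pairs in $M$, with $f$ and $g$ selecting opposite endpoints; symmetrically $\mmorp{f',g',h'}$ must have $f',g'$ majority and $h'$ minority on the non-$M$ pairs. The subtle point is that both multimorphisms must satisfy the inequality~\eqref{eqMMorpIneq} on feasibility tuples ranging over \emph{three or more} distinct domain values at once, so the per-pair choices interact and cannot be made independently.

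This gluing step is the main obstacle, and I expect essentially all the difficulty to lie here. I would attack it by an extremal argument inside the weighted clone generated by $\Gamma$: take conservative binary (respectively ternary) multimorphisms agreeing with the required local data on as many pairs as possible, and analyse any residual obstruction. Lemma~\ref{lmSwapping} again localises such an obstruction to a single binary weighted relation in $\closure{\{\gamma\}}$; when the relevant part of the language carries a majority polymorphism its relations are moreover $2$-decomposable~\cite{Jeavons98:consist}, supplying exactly the control over feasibility that the second half of Lemma~\ref{lmSwapping} provides. The key claim to establish is that a conflict is never silent: if the local STP orientations or the majority/minority patterns cannot be extended to a single global conservative multimorphism, then combining the offending relations through the closure operations of $\clGamma$ (addition of unary relations, minimisation, pinning, join) exposes on some 2-element subdomain a Boolean relation that is itself NP-hard --- a disequality-type or $1$-in-$3$-type relation such as $\rho_\text{1-in-3}$ --- whence $\Gamma$ is intractable. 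Establishing this ``no silent conflict'' claim, and organising the extremal argument so that it terminates with either a global multimorphism or such a gadget, is the crux.

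For the tractability direction I would take the structure as given. Its two ingredients are tractable in isolation --- STP languages by~\cite{Cohen08:Generalising} and MJN languages by~\cite{kz13:jacm} --- so the crux is that their combination, STP on $M$ and MJN off $M$, is tractable. I would establish this with a single algorithm that minimises over the $M$-pairs by submodular minimisation, as for a symmetric tournament pair, while exploiting the minority component of $\mmorp{f',g',h'}$, which endows the off-$M$ behaviour with an affine, Gaussian-elimination-amenable structure, to fix the remaining coordinates consistently; interleaving the two yields the global optimum in polynomial time for every finite $\Gamma'\subseteq\Gamma$.
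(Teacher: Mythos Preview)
Your local step is sound and matches the paper: restricting to each two-element subdomain and invoking the Boolean classification does force every pair into the STP / MJN / hard trichotomy, and the ``conservative multimorphisms must permute'' observation correctly singles out $\mmorp{\min,\max}$ and $\mmorp{\mjrt,\mjrt,\mnrt}$.

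The genuine gap is the gluing step, which in your proposal remains at the level of intent. Neither \cite{kz13:jacm} nor this paper proceeds by an ``extremal argument inside the weighted clone''. The actual mechanism is a concrete combinatorial object: the graph $G_\Gamma$ on \emph{ordered} pairs $(a,b)$ of distinct labels, with an edge $(a_1,b_1)-(a_2,b_2)$ precisely when some binary $\gamma\in\clGamma$ satisfies the non-submodularity inequality~\eqref{eqEdgeDefinition}, edges further classified as soft or hard. The single intractability criterion is the presence of a \emph{soft self-loop}; both your ``locally NP-hard pair'' and your ``no silent conflict'' claim collapse into this one condition. Absent a soft self-loop, Lemma~\ref{lmGraphProperties} shows $G_\Gamma$ has no $M$--$\overline{M}$ edges, no odd cycle on $M$, and no soft edge in $\overline{M}$. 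Bipartiteness of $G_\Gamma|_M$ then \emph{hands you} a 2-colouring $M_1,M_2$, and the STP is read off directly from it (cases~\eqref{eqSTP_M1}--\eqref{eqSTP_otherwise}); the MJN is likewise written down explicitly from a ternary predicate $ab|c$ extracted from $\clGamma$. There is no search and no extremal choice.

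Without $G_\Gamma$ or an equivalent, your ``no silent conflict'' claim is the theorem restated. The obstruction to gluing per-pair STP orientations into one global commutative pair is exactly an odd cycle in $G_\Gamma|_M$; that such a cycle forces a soft self-loop is the content of Lemma~\ref{lmGraphProperties}, and your proposal gives no indication how the ``offending relations'' would be located or joined without this graph. A smaller but related issue: your definition of $M$ (``pairs on which only the STP option survives'') is not the right one---a pair can admit both $\mmorp{\min,\max}$ and $\mmorp{\mjrt,\mjrt,\mnrt}$ locally, and the correct partition $M\cup\overline{M}$ is determined by self-loops in $G_\Gamma$, not by which Boolean multimorphisms the restriction happens to admit.
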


The idea of the proof of Theorem~\ref{thm:cons} is as follows: given a conservative
valued constraint language $\Gamma$, we define a certain graph $G_\Gamma$ whose
vertices are pairs of different labels from $D$ and an edge $(a,b) - (c,d)$ is present if
there is a binary weighted relation $\gamma\in\wCloneg{\Gamma}$ that is
``non-submodular with respect to the order $a<b$ and $c<d$''. The edges of
$G_\Gamma$ are then classified as soft and hard. It is shown that a soft
self-loop implies intractability of $\Gamma$. Otherwise, the vertices of
$G_\Gamma$ are partitioned into $M\cup\overline{M}$, where $M$ denotes the set
of loopless vertices and $\overline{M}$ denotes the rest (i.e.\ vertices with
hard loops). It is then shown that $G_\Gamma$ restricted to $M$ is bipartite,
which is in turn used to construct a binary multimorphism and a ternary
multimorphism of $\Gamma$ such that the binary multimorphism is an STP on $M$
and the ternary multimorphism is an MJN on $\overline{M}$. (Proving that the
constructed objects are multimorphisms of $\Gamma$ is the most
technical part of the proof.) Any such language is
then tractable via an involved algorithm from~\cite{kz13:jacm} that relies
on~\cite{Cohen08:Generalising}, or by an LP relaxation~\cite{tz15:icalp}.

Our approach is to follow the above-described proof and adapt it to the planar
setting. 
We remark that similar graphs to $G_\Gamma$ have been important in other
studies of (V)CSPs. In particular, in the classification of conservative
CSPs~\cite{Bulatov11:conservative} and in the classification of Minimum Cost
Homomorphism problems~\cite{Takhanov10:stacs}. In~\cite{Bulatov11:conservative},
the graph has labels as vertices and three types of edges depending on three
types of polymorphisms. In~\cite{Takhanov10:stacs}, the graph has, as in our
case, pairs of labels as vertices but the edges of the graph are defined,
informally, via a $\min$/$\max$ polymorphism rather than a $\mmorp{\min,\max}$
multimorphism. Also, edges in~\cite{Takhanov10:stacs} are not classified as
soft or hard.

It is natural to replace $\wCloneg{\Gamma}$ by $\wClonep{\Gamma}$ in the
definition of $G_\Gamma$. But this simple change does not immediately yield the
desired result. There are two main obstacles. First, the proof of
Theorem~\ref{thm:cons} from~\cite{kz13:jacm} heavily relies
on~\cite{Takhanov10:stacs}, which guarantees, unless in an NP-hard case, the
existence of a majority polymorphism and hence that the language is 2-decomposable. Second, some of the gadgets (and in
particular the ``$i$-expansion'' from~\cite[Section~6.4]{kz13:jacm}) are not
necessarily planar. In more detail, \cite{Takhanov10:stacs} builds a similar
graph to ours (as described above) and argues that, unless in
an NP-hard case, this graph is bipartite (part of our $G_\Gamma$ will also be
bipartite). This property is then used
in~\cite{Takhanov10:stacs} to argue about the existence of a majority
polymorphism. However, this is proved in~\cite{Takhanov10:stacs} using clones
and depends on the Galois connection between clones and relational co-clones;
such a connection is not known for planar expressibility!

To avoid these obstacles, we modify, significantly simplify, and generalise the
proof so that it works in the planar setting. The key changes are the following.
(i) We define our graph based on a language closure $\closure{\Gamma}$, which is
a subset of the planar weighted relational clone of a conservative language. (ii) We do \emph{not} rely
on Takhanov's result on the existence of a majority polymorphism~\cite{Takhanov10:stacs} but
instead prove directly that (the closure
of) $\Gamma$ is 2-decomposable. (iii) We define different STP and MJN
multimorphisms that allow us to simplify the proof that these are indeed
multimorphisms of $\Gamma$. In particular, we will prove modularity
of weighted relations on $\overline{M}$ and
show that the ternary MJN multimorphism satisfies Inequality~\eqref{eqMMorpIneq}
with equality, thus obtaining a better structural understanding of tractable
languages. The main simplification is that we define MJN as close to projection
operations as possible, and in particular not depending on the STP multimorphism as
in~\cite{kz13:jacm}.

We remark that it is not clear how to derive non-trivial properties of graph
$G_\Gamma$ used in our proofs from the related graph defined in~\cite{kz13:jacm}
apart from the obvious fact that our graph is a subgraph of the graph
from~\cite{kz13:jacm}. We believe that with more work one can derive that the
two graphs are in fact the same using techniques and proofs from this paper, but have
not done so since our goal was to obtain a complexity classification.

The rest of this section is devoted to proving Theorem~\ref{thm:main2}.

\begin{definition}
Let $\Gamma$ be a conservative language. We define an undirected graph
$G_\Gamma$ on vertices $(a, b)$ for all $a, b \in D, a \neq b$. For any vertex
$v = (a, b)$, we will denote by $\overline{v}$ vertex $(b, a)$. Graph $G_\Gamma$
is allowed to have self-loops. It contains edge $(a_1, b_1) - (a_2, b_2)$ if
there is a binary weighted relation $\gamma \in \clGamma$ such that $(a_1, b_2),
(b_1, a_2) \in \Feas(\gamma)$ and
\begin{equation}
\label{eqEdgeDefinition}
\gamma(a_1, b_2) + \gamma(b_1, a_2) < \gamma(a_1, a_2) + \gamma(b_1, b_2) \,.
\end{equation}
If there exists such a weighted relation $\gamma$ with at least one of $(a_1,
a_2), (b_1, b_2)$ belonging to $\Feas(\gamma)$, we will call the edge
\emph{soft}, otherwise the edge is \emph{hard}. We denote by $\overline{M}$ and
$M$ the set of vertices with and without self-loops respectively.
\end{definition}

The following lemma gives a useful alternative characterisation of an edge in $G_\Gamma$.

\begin{lemma}
\label{lmEdgeRedefined}
Graph $G_\Gamma$ contains edge $(a_1, b_1) - (a_2, b_2)$ if, and only if, binary
relation $\{ (a_1, b_2), (b_1, a_2) \}$ belongs to $\clGamma$. The edge is soft
if, and only if, at least one of binary relations $\{ (a_1, a_2), (a_1, b_2),
(b_1, a_2) \}$, $\{ (b_1, b_2), (a_1, b_2), (b_1, a_2) \}$ belongs to
$\clGamma$.
\end{lemma}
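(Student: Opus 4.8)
The plan is to prove both directions of each equivalence by going back and forth between the weighted relation $\gamma$ witnessing an edge and the crisp relation $\{(a_1,b_2),(b_1,a_2)\}$, using the operators available in $\clGamma$ (addition of unary weighted relations, $\Feas$, $\Opt$, minimisation, join, and the fact that all $\{0,1\}$-valued unary weighted relations are available since $\Gamma$ is conservative).

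\medskip
\noindent\textbf{Easy direction.} First I would show that if $\rho := \{(a_1,b_2),(b_1,a_2)\} \in \clGamma$, then the edge $(a_1,b_1)-(a_2,b_2)$ is present: take $\gamma = \rho$ itself, viewed as a $\{0,\infty\}$-valued weighted relation. Then $(a_1,b_2),(b_1,a_2)\in\Feas(\gamma)$ with $\gamma$-value $0$, while $(a_1,a_2),(b_1,b_2)\notin\Feas(\gamma)$, so \eqref{eqEdgeDefinition} reads $0+0 < \infty+\infty$, which holds; moreover this same $\gamma$ certifies that the edge is hard unless a better witness exists. Symmetrically, if $\{(a_1,a_2),(a_1,b_2),(b_1,a_2)\}\in\clGamma$ (say), then taking this relation as $\gamma$ gives $\gamma(a_1,b_2)+\gamma(b_1,a_2)=0<\gamma(a_1,a_2)+\gamma(b_1,b_2)=0+\infty$, and since $(a_1,a_2)\in\Feas(\gamma)$ the edge is soft; likewise with the other three-element relation.

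\medskip
\noindent\textbf{Hard direction: normalising an arbitrary witness.} Now suppose the edge $(a_1,b_1)-(a_2,b_2)$ is present, witnessed by some binary $\gamma\in\clGamma$. The goal is to massage $\gamma$ into the desired crisp relation. The first step is to pass to the $2\times 2$ "window" of interest: apply domain restriction at coordinate $1$ to $\{a_1,b_1\}$ and at coordinate $2$ to $\{a_2,b_2\}$ (available since the needed unary relations $\rho_{\{a_1,b_1\}}$, $\rho_{\{a_2,b_2\}}$ are $\{0,1\}$-valued and $\clGamma$ is closed under adding unary weighted relations), obtaining $\gamma''$ supported on $\{a_1,b_1\}\times\{a_2,b_2\}$ and still satisfying \eqref{eqEdgeDefinition}. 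Next, add unary weighted relations $\mu_1$ on coordinate $1$ and $\mu_2$ on coordinate $2$ chosen so that the two "diagonal" entries $\gamma''(a_1,b_2)$ and $\gamma''(b_1,a_2)$ both become $0$ — this is possible by solving the two linear equations $\mu_1(a_1)+\mu_2(b_2) = -\gamma''(a_1,b_2)$, $\mu_1(b_1)+\mu_2(a_2) = -\gamma''(b_1,a_2)$ (fix $\mu_1(a_1)=0$, say). After this shift the strict inequality \eqref{eqEdgeDefinition} becomes $0 < \gamma'''(a_1,a_2)+\gamma'''(b_1,b_2)$, with both summands finite or infinite but not both zero. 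Finally apply $\Opt$: the minimum value is $0$, attained exactly at $(a_1,b_2)$ and $(b_1,a_2)$ (and possibly at one of $(a_1,a_2)$, $(b_1,b_2)$ if that entry happened to be $0$ — but not both), giving a crisp relation that is either $\{(a_1,b_2),(b_1,a_2)\}$ or one of the two three-element relations. I can then unconditionally recover the two-element relation from any three-element one, e.g.\ $\{(a_1,b_2),(b_1,a_2)\}$ is the $\Opt$ of the three-element relation plus a $\{0,1\}$-valued unary weighted relation penalising the bad label at coordinate $1$. For the soft/hard refinement, I would track whether $\gamma'''(a_1,a_2)=0$ or $\gamma'''(b_1,b_2)=0$: the witness being soft means one of $(a_1,a_2),(b_1,b_2)$ is feasible for the original $\gamma$, which after restriction/shift/$\Opt$ survives as membership of that tuple in the resulting crisp relation, yielding exactly one of the two listed three-element relations.

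\medskip
\noindent\textbf{Main obstacle.} The delicate point is the soft/hard bookkeeping: I must verify that the normalisation steps (domain restriction, adding unary weighted relations, $\Opt$) neither create a previously-infeasible "corner" tuple nor destroy a feasible one among $(a_1,a_2)$ and $(b_1,b_2)$, so that "soft witness" translates precisely into "one of the two three-element crisp relations lies in $\clGamma$" and conversely. Domain restriction and shifting by finite unary weighted relations preserve the feasibility pattern exactly, and $\Opt$ only removes non-optimal tuples; since the diagonal entries are set to the minimum value $0$, a corner tuple survives $\Opt$ iff its shifted value was also $0$, which happens iff it was finite before shifting (as one can always choose the shift to make it either $0$ or keep it at its relative offset — here a small case analysis on whether the original corner values were finite is needed). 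This requires care but no deep idea; it is purely a matter of checking that the elementary operations interact as expected, so I expect the argument to be short once the normalisation is set up cleanly.
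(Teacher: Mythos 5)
Your overall strategy (restrict to the $2\times 2$ window $\{a_1,b_1\}\times\{a_2,b_2\}$, shift by unary weighted relations, apply $\Opt$) is the same as the paper's, and your easy direction is fine. But the normalisation step has a genuine error. After forcing $\gamma'''(a_1,b_2)=\gamma'''(b_1,a_2)=0$ you claim the minimum of $\gamma'''$ is $0$, attained at these two tuples and possibly at a corner whose entry ``happened to be $0$'', and later that a corner's shifted value is $0$ iff it was finite before shifting. Neither claim is true: inequality \eqref{eqEdgeDefinition} only constrains the \emph{sum} $\gamma'''(a_1,a_2)+\gamma'''(b_1,b_2)$ to be positive, so one of the two corner values can come out \emph{negative}. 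Concretely, take $\gamma''(a_1,b_2)=0$, $\gamma''(b_1,a_2)=5$, $\gamma''(a_1,a_2)=1$, $\gamma''(b_1,b_2)=10$, which satisfies \eqref{eqEdgeDefinition}; with your choice $\mu_1(a_1)=\mu_1(b_1)=0$, $\mu_2(b_2)=0$, $\mu_2(a_2)=-5$ you get $\gamma'''(a_1,a_2)=-4$, so $\Opt(\gamma''')=\{(a_1,a_2)\}$ rather than the desired relation.

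The missing idea is that the common level $\lambda$ to which the two tuples $(a_1,b_2),(b_1,a_2)$ are shifted must be chosen strictly between $\gamma(a_1,b_2)+\gamma(b_1,a_2)-\gamma(a_1,a_2)$ and $\gamma(b_1,b_2)$, an interval that is nonempty precisely because of \eqref{eqEdgeDefinition}; this is exactly the paper's parameter $\lambda$. You do have the required degree of freedom (the split between $\mu_1(b_1)$ and $\mu_2(a_2)$), but you treat it as arbitrary. The same defect infects the soft/hard bookkeeping: when both corners are feasible one must place the level exactly at $\gamma(b_1,b_2)$ so that precisely one corner joins the optimum and one of the two listed three-element relations results, and when exactly one corner is feasible the clean route is to take $\Feas(\gamma'')$ directly rather than reasoning about $\Opt$. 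With these corrections your argument becomes the paper's proof.
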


\begin{proof}
Both \emph{if} implications follow directly from the definition of $G_\Gamma$;
we need to prove the \emph{only if} part. Let $\gamma$ be a weighted relation
establishing edge $(a_1, b_1) - (a_2, b_2)$ such that $\Feas(\gamma) \subseteq
\{a_1, b_1\} \times \{a_2, b_2\}$ (this can be always achieved by domain
restriction). Note that we may add to $\gamma$ any unary finite-valued weighted
relation without invalidating \eqref{eqEdgeDefinition}. We choose any
$\lambda \in \Q$ such that $\lambda < \gamma(b_1, b_2)$ and $\gamma(a_1, b_2) +
\gamma(b_1, a_2) - \lambda < \gamma(a_1, a_2)$. Note that such $\lambda$ exists
due to \eqref{eqEdgeDefinition}. We define unary weighted relations $\gamma_1,
\gamma_2$ such that $\gamma_1(a_1) = \lambda - \gamma(a_1, b_2)$, $\gamma_2(a_2)
= \lambda - \gamma(b_1, a_2)$, and $\gamma_1(x) = \gamma_2(x) = 0$ otherwise.
Now consider binary weighted relation $\gamma'$ defined as $\gamma'(x, y) =
\gamma(x, y) + \gamma_1(x) + \gamma_2(y)$. We have $\gamma'(a_1, b_2) =
\gamma'(b_1, a_2) = \lambda$ and $\lambda < \gamma'(a_1, a_2), \gamma'(b_1,
b_2)$, so then $\Opt(\gamma') = \{ (a_1, b_2), (b_1, a_2) \} \in \clGamma$.

If the edge is soft and $(a_1, a_2), (b_1, b_2) \in \Feas(\gamma)$, we proceed
as above with $\lambda = \gamma(b_1, b_2)$, so that $\Opt(\gamma') = \{ (b_1,
b_2), (a_1, b_2), (b_1, a_2) \} \in \clGamma$. Otherwise we simply take
$\Feas(\gamma) \in \clGamma$.
\end{proof}

We show that the absence of soft self-loops is a necessary condition for
planar tractability.

\begin{theorem}\label{thm:ssl}
If $G_\Gamma$ has a soft self-loop, language $\Gamma$ is planarly-intractable.
\end{theorem}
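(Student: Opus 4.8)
The plan is to convert a soft self-loop into a crisp binary relation that, together with a unary weight, forms (a relabelled copy of) the \class{Maximum Independent Set} language $\Gamma_{\sf is}$ from Section~\ref{sec:intro} on a two-element subdomain, and then to invoke NP-hardness of that problem on planar graphs. So let $(a,b)$ be a vertex carrying a soft self-loop. By Lemma~\ref{lmEdgeRedefined}, one of $\{(a,a),(a,b),(b,a)\}$, $\{(b,b),(a,b),(b,a)\}$ belongs to $\clGamma$; after possibly swapping the names of $a$ and $b$, assume $\rho := \{(a,a),(a,b),(b,a)\} \in \clGamma$, the binary relation forbidding precisely the tuple $(b,b)$ (and every tuple with a coordinate outside $\{a,b\}$). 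Since $\clGamma$ contains all unary weighted relations, it contains the unary $\mu$ with $\mu(b)=0$, $\mu(a)=1$, and $\mu(c)=\infty$ for $c\notin\{a,b\}$. On the subdomain $\{a,b\}$ the pair $\{\rho,\mu\}$ is a relabelled copy of $\Gamma_{\sf is}$ (map $0\mapsto a$, $1\mapsto b$), and $\{\rho,\mu\}\subseteq\clGamma\subseteq\wClonep{\Gamma}$, so by Theorem~\ref{thmpClone} it suffices to prove that $\VCSPp(\{\rho,\mu\})$ is NP-hard.

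This I would obtain by a reduction from \class{Maximum Independent Set} on planar graphs, which is NP-hard~\cite{Garey77:siam}. Given a connected planar graph $H=(W,F)$ with $|F|\geq 1$ (disconnected inputs and isolated vertices are handled componentwise and trivially, respectively) together with a fixed plane embedding, build a plane graph $G$ on vertex set $W$ by replacing every edge $uv\in F$ with two parallel copies bounding a fresh digon face $f_{uv}$; this keeps $G$ planar and connected. Put the constraint $\rho(u,v)$ on $f_{uv}$ — its bounding walk is $u\,v\,u$ and, $\rho$ being symmetric, the orientation is immaterial — and then, processing the vertices one at a time, impose $\mu$ on each $u\in W$ by drawing a self-loop at $u$ inside a face that is still incident to $u$ and carries no constraint yet, placing $\mu(u)$ on the loop face (bounding walk $u\,u$). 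Every such operation only creates new faces, so a free incident face is always available and the face map $\phi$ stays injective. In the resulting plane instance $\mu$ forces every feasible assignment $s$ into $\{a,b\}$, the $\rho$-constraints say exactly that $s^{-1}(b)$ is independent in $H$, and the objective value equals $|W|-|s^{-1}(b)|$; hence its minimum is $|W|$ minus the independence number of $H$, and a polynomial-time algorithm for $\VCSPp(\{\rho,\mu\})$ would solve planar \class{Maximum Independent Set}. By Theorem~\ref{thmpClone}, $\Gamma$ is planarly-intractable.

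The one delicate point is forcing the construction into the strict plane-instance formalism of Definition~\ref{def:planeVCSP} — constraints realised as faces with matching bounding walks and an injective face map — which is exactly what the digon-and-loop gadgets above (the same moves used to prove Lemma~\ref{lem:planarunary}) accomplish; the two sub-cases of Lemma~\ref{lmEdgeRedefined} and the choice of which label $\mu$ favours are symmetric, and the remaining checks (feasibility collapsing the domain onto $\{a,b\}$ and the objective count) are routine.
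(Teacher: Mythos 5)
Your proof is correct, and it shares the paper's opening move---using Lemma~\ref{lmEdgeRedefined} to extract the crisp binary relation $\rho=\{(a,a),(a,b),(b,a)\}$ from the soft self-loop and conservativity to supply the unary weight---but the hardness argument then takes a genuinely different route. The paper packages $\rho$ together with \emph{two} unary weighted relations $\gamma_a,\gamma_b$ into a conservative Boolean language $\Gamma'$ on $\{a,b\}$, checks via \cite[Theorem~7.1]{Cohen06:complexitysoft} that $\Gamma'$ is intractable and observes it is not self-complementary, and then simply invokes Theorem~\ref{thm:main1} followed by Theorem~\ref{thmpClone}; all planar gadgetry is thereby delegated to the reduction from planar \class{$1$-in-$3$ Positive $3$-Sat} already carried out inside Theorem~\ref{thm:main1}. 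You instead recognise $\{\rho,\mu\}$ as a relabelled copy of $\Gamma_{\sf is}$ and give a direct, self-contained reduction from planar \class{Maximum Independent Set}~\cite{Garey77:siam}, spelling out the digon-and-loop embedding needed to meet Definition~\ref{def:planeVCSP}. Your route avoids both the Boolean classification and Theorem~\ref{thm:main1} entirely (so it would stand even in a treatment that never proved the Boolean result), at the cost of redoing by hand the plane-instance bookkeeping that Lemma~\ref{lem:planarunary} and Theorem~\ref{thm:main1} already encapsulate; the paper's proof is shorter precisely because it reuses its first main theorem. The embedding details you give do check out: the digon faces have bounding walk $u\,v\,u$ and $\rho$ is symmetric, a constraint-free face incident to each vertex is always available for the $\mu$-loop since adding a loop only creates new faces, and the objective value $|W|-|s^{-1}(b)|$ correctly encodes the independence number.
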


\begin{proof}
Let $(a, b)$ be a vertex of $G_\Gamma$ with a soft self-loop. Without loss of
generality, we have $\rho = \{ (a, a), (a, b), (b, a) \} \in \clGamma$ by
Lemma~\ref{lmEdgeRedefined}. We denote by $\gamma_a, \gamma_b$ the unary weighted
relations defined as $\gamma_a(a) = \gamma_b(b) = 0$, $\gamma_a(b) = \gamma_b(a)
= 1$, and $\gamma_a(x) = \gamma_b(x) = \infty$ for $x \not\in \{a, b\}$. Set
$\Gamma' = \{\rho, \gamma_a, \gamma_b\} \subseteq \clGamma$ can be viewed as a
conservative language over a Boolean domain $\{a, b\}$. 
By~\cite[Theorem~7.1]{Cohen06:complexitysoft}, $\Gamma'$ is intractable (in
particular, $\Gamma'$ does not fall into either of the two
tractable cases for Boolean conservative valued constraint
languages~\cite{Cohen06:complexitysoft} corresponding to the $\mmorp{\min,\max}$
and $\mmorp{\mjrt,\mjrt,\mnrt}$ multimorphisms).
Observe that $\Gamma'$ is not self-complementary
since neither of its weighted relations is self-complementary.
By Theorem~\ref{thm:main1}, $\Gamma'$ is planarly-intractable and thus, by
Theorem~\ref{thmpClone}, so is $\Gamma$.
\end{proof}

It remains to show that this condition is also sufficient.

\begin{theorem}\label{thm:goal}
If $G_\Gamma$ has no soft self-loop, then $\Gamma$ admits a binary
multimorphism $\STP$ that is an STP on $M$, and a ternary multimorphism $\MJN$
that is an MJN on $\overline{M}$.
\end{theorem}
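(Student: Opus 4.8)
The plan is to construct the two multimorphisms $\STP = \mmorp{\sqcap, \sqcup}$ and $\MJN = \mmorp{\mjrt_1, \mjrt_2, \mnrt_3}$ explicitly from the combinatorial structure of $G_\Gamma$, and then verify that the resulting operations really are multimorphisms of $\Gamma$. First I would establish structural facts about $G_\Gamma$ under the hypothesis that there is no soft self-loop. The key claim is that $G_\Gamma$ restricted to $M$ (the loopless vertices) is bipartite: if it contained an odd closed walk, one could compose the binary relations $\{(a_1,b_2),(b_1,a_2)\}$ witnessing its edges (via Lemma~\ref{lmEdgeRedefined}, using join and the operations closing $\clGamma$) to produce a relation witnessing a self-loop on some vertex of $M$, and one would further argue this self-loop must be soft, contradicting both the hypothesis and loopless-ness. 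A two-colouring of the $M$-part then lets me orient each pair $\{a,b\}$ with $\{a,b\}\in M$ consistently, defining commutative conservative operations $\sqcap,\sqcup$ on those pairs (the ``min''/``max'' with respect to the chosen local orders) so that $\mmorp{\sqcap,\sqcup}$ is a symmetric tournament pair on $M$. On $\overline{M}$ (vertices with hard loops) I would instead define $\mjrt_1,\mjrt_2,\mnrt_3$ to behave as projections $\proj{3}{1},\proj{3}{2},\proj{3}{3}$ restricted to each 2-element subset $\{a,b\}\notin M$ — this is the ``MJN as close to projections as possible'' simplification the authors advertise — and then patch the definitions on mixed pairs so that globally each $f_i$ is conservative and well-defined.

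Second I would prove 2-decomposability of every relation in $\clGamma$ directly, without invoking Takhanov's majority polymorphism result. The idea is that a minimal-arity non-2-decomposable feasibility relation, together with a tuple that satisfies all binary projections but is not in the relation, yields (after domain restriction, pinning, and minimisation inside $\clGamma$) a binary weighted relation exhibiting the kind of non-modular behaviour from~\eqref{eqEdgeDefinition} on a pair that must then carry a self-loop; analysing softness of that loop against the no-soft-self-loop hypothesis and the $M$/$\overline{M}$ split gives a contradiction. With 2-decomposability in hand, Lemma~\ref{lmSwapping} becomes fully effective: to check that $\mmorp{\sqcap,\sqcup}$ (resp.\ $\mmorp{\mjrt_1,\mjrt_2,\mnrt_3}$) is a multimorphism of an arbitrary $\gamma\in\Gamma$, it suffices to check the defining inequality on binary weighted relations in $\clGamma$, where it reduces to a case analysis over which of the three edge-types (non-edge, soft edge, hard edge) connects the relevant vertices.

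Third comes the verification itself. For the STP part: if $\gamma$ violated the $\mmorp{\sqcap,\sqcup}$ inequality on tuples $\tup{x},\tup{y}$, Lemma~\ref{lmSwapping} produces coordinates $i\in I$, $j\in J$ and a binary $\gamma_{i,j}\in\clGamma$ with $\gamma_{i,j}(x_i,y_j)+\gamma_{i,j}(y_i,x_j) < \gamma_{i,j}(x_i,x_j)+\gamma_{i,j}(y_i,y_j)$ and all four tuples feasible — i.e.\ a soft edge $(x_i,y_i)-(x_j,y_j)$ — and one checks this forces one of the endpoints, say $(x_i,y_i)$, into $\overline{M}$ rather than $M$ (or else contradicts the chosen bipartition/orientation), so the $\sqcap,\sqcup$ values at that coordinate were actually projections and no violation occurs. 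For the MJN part I want the stronger statement that~\eqref{eqMMorpIneq} holds with \emph{equality} on $\overline{M}$: here I would prove modularity, $\gamma(\tup{z})+\gamma(\negate{\tup{z}}) = \gamma(\tup{0}^r)+\gamma(\tup{1}^r)$-type identities, of weighted relations all of whose relevant coordinates lie in $\overline{M}$, again by reducing via Lemma~\ref{lmSwapping} to the binary case and observing that a hard loop means the off-diagonal entries are finite while equality~\eqref{eqEdgeDefinition} holds (no strict inequality in either direction, since a strict inequality one way would be an edge and the loop is hard, hence not soft, hence the diagonal entries cannot both be finite with strict inequality — one argues the modular equation from the non-existence of a strict violation in \emph{either} orientation). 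Since $\mjrt_1,\mjrt_2,\mnrt_3$ act as projections on each such pair, both sides of~\eqref{eqMMorpIneq} are literally the same sum once modularity is used to rewrite the mismatched terms.

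The main obstacle I expect is not any single one of these steps but their interaction: the whole argument is circular-looking, because bipartiteness of $G_\Gamma|_M$, 2-decomposability, and the multimorphism verifications all want to lean on each other. The delicate point is finding the right order and the right minimality hypotheses so that each reduction via Lemma~\ref{lmSwapping} lands on a \emph{binary} relation in $\clGamma$ whose edge-type is already pinned down by earlier steps — in particular proving 2-decomposability \emph{before} (or simultaneously with, by a joint minimal-counterexample argument over arity) the multimorphism checks, using only the no-soft-self-loop hypothesis and Lemma~\ref{lmEdgeRedefined}, and making sure every gadget used (join, minimisation, pinning, addition of unary weighted relations, $\Opt$, $\Feas$) stays inside $\clGamma$ so that planarity is never violated. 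A secondary technical nuisance is checking that the patched-together $\sqcap,\sqcup,\mjrt_1,\mjrt_2,\mnrt_3$ are genuinely well-defined conservative operations on all of $D$ (not just on the 2-element subsets), i.e.\ that the local orders on $M$-pairs and the projection behaviour on $\overline{M}$-pairs are mutually consistent on triples spanning both — this is where the bipartition two-colouring is used, and where~\cite{kz13:jacm}'s analysis is most intricate, though our projection-based MJN should make it lighter.
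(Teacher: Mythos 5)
Your overall architecture matches the paper's: a bipartition of $G_\Gamma$ restricted to $M$ to define $\STP$ as a commutative pair on $M$ and the identity elsewhere, a direct induction-on-arity proof that every relation in $\clGamma$ is 2-decomposable using only the no-soft-self-loop hypothesis, modularity of weighted relations across $\overline{M}$-coordinates obtained by reducing to the binary case via Lemma~\ref{lmSwapping}, and verification of the multimorphism inequalities by extracting from a minimal violation a binary weighted relation whose edge-type contradicts the structural properties of $G_\Gamma$. The STP half, the 2-decomposability argument, and the modularity step are essentially the paper's proofs.

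The gap is in the MJN half, in two respects. First, the triple of projections $\proj{3}{1},\proj{3}{2},\proj{3}{3}$ restricted to a pair $\{a,b\}\in\overline{M}$ is \emph{not} an MJN: $\proj{3}{1}(b,a,a)=b$ is the minority of its arguments, so $\proj{3}{1}$ is not a majority operation on $\{a,b\}$, and likewise $\proj{3}{3}$ is not a minority operation there. The theorem requires the first two operations to return the majority and the third the minority on every $\overline{M}$-pair, so the output must be the appropriate input-dependent \emph{permutation} of the arguments (this is what cases \eqref{eqMJN_z}--\eqref{eqMJN_x} of the paper's definition enforce); ``as close to projections as possible'' refers only to the residual freedom outside those forced cases. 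Second, and more seriously, your construction and verification say nothing about coordinates at which the three tuples carry three \emph{distinct} labels. This is the crux of the MJN argument: the paper introduces the relation $ab|c$ (witnessed by a binary relation $\{(a,s),(b,s),(c,t)\}\in\clGamma$ with $(s,t)\in\overline{M}$), proves in Lemma~\ref{lmMuProperties} that at most one of $ab|c$, $ca|b$, $bc|a$ can hold (so that the definition of $\MJN$ is consistent and conservative), and uses it to dictate the output on such triples. Without this, the check for $\MJN$ cannot be reduced to a binary edge-type case analysis: Lemma~\ref{lmSwapping} compares only two tuples at a time, whereas \eqref{eqMMorpIneq} for a ternary multimorphism involves three, and the paper's proof of Theorem~\ref{thmMJN} needs a separate minimal-counterexample argument that rules out case \eqref{eqMJN_y} by combining three instances of the modularity equality. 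Deferring all of this to ``patching on mixed pairs'' leaves the hardest part of the proof unaddressed.
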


In order to prove Theorem~\ref{thm:goal}, we now introduce several lemmas. From now on we
will assume that $G_\Gamma$ has no soft self-loop.

\begin{lemma}
\label{lmGraphProperties}
For any vertex $v$, graph $G_\Gamma$ contains edge $v - \overline{v}$. There is
no edge between $M$ and $\overline{M}$, no odd cycle in $M$, and no soft edge in
$\overline{M}$.
\end{lemma}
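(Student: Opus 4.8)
The plan is to establish the four assertions of Lemma~\ref{lmGraphProperties} in sequence, each time constructing an explicit binary weighted relation in $\clGamma$ witnessing (or forbidding) the relevant edge, and invoking Lemma~\ref{lmEdgeRedefined} to translate between edges and membership of small relations in $\clGamma$.

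\textbf{Edge $v - \overline{v}$.} For $v = (a,b)$ the claimed edge is $(a,b) - (b,a)$; by Lemma~\ref{lmEdgeRedefined} this amounts to showing that the binary relation $\{(a,a),(b,b)\}$ lies in $\clGamma$. But that relation is exactly the domain restriction of the binary equality relation $\rho_{=}$ (which is in $\clGamma$ by definition) to the subdomain $\{a,b\}$, which is available since the $\{0,1\}$-valued unary relation $\rho_{\{a,b\}}$ is in the conservative language $\Gamma \subseteq \clGamma$. So this part is immediate.

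\textbf{No edge between $M$ and $\overline{M}$, and no soft edge in $\overline{M}$.} These are the heart of the matter, and the key tool is the assumed absence of a soft self-loop together with a ``composition'' argument: if $v$ has a (hard) self-loop, i.e.\ $v \in \overline{M}$, then any edge incident to $v$ forces structure elsewhere. Concretely, suppose $v_1 = (a_1,b_1) \in \overline{M}$ and there is an edge $v_1 - v_2$ with $v_2 = (a_2,b_2)$. By Lemma~\ref{lmEdgeRedefined} the hard self-loop at $v_1$ gives $\{(a_1,b_1),(b_1,a_1)\} \in \clGamma$, i.e.\ the disequality relation $\rho_{\neq}$ on $\{a_1,b_1\}$; and the edge $v_1 - v_2$ gives $\{(a_1,b_2),(b_1,a_2)\} \in \clGamma$. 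Joining these two binary relations over the shared first coordinate (using closure under join, and domain restriction to $\{a_1,b_1\}$) should produce the relation $\{(a_1,a_2),(b_1,b_2)\}$ up to swapping, which by Lemma~\ref{lmEdgeRedefined} witnesses a self-loop at $v_2$; tracking which tuples are feasible shows this self-loop is hard (using that the $v_1$-loop is hard), so $v_2 \in \overline{M}$ as well. This rules out $M$--$\overline{M}$ edges. For ``no soft edge in $\overline{M}$'': if $v_1 - v_2$ is a soft edge with both endpoints in $\overline{M}$, combine the softness witness for the edge with the hard self-loop at one endpoint via the same join/restriction construction; the soft edge supplies an extra feasible tuple, and composing it with the hard disequality loop at $v_1$ should yield a soft self-loop at $v_2$ (or a soft edge $v_2 - v_2$), contradicting the standing hypothesis. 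The main obstacle is getting the bookkeeping of feasibility exactly right — precisely which of the four ``corner'' tuples $(a_1,a_2),(a_1,b_2),(b_1,a_2),(b_1,b_2)$ are feasible after the join — so that the resulting self-loop is genuinely \emph{soft} and thus contradicts the assumption rather than merely being a harmless hard loop.

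\textbf{No odd cycle in $M$.} This I would prove by the standard two-colouring argument combined with the edge $v - \overline{v}$ structure. The involution $v \mapsto \overline{v}$ has no fixed points and, by the first part, always creates an edge; the claim that $M$ contains no odd cycle is equivalent to saying that one cannot walk from $v$ to $\overline{v}$ along an even number of edges within $M$. Suppose for contradiction there is an odd closed walk in $M$, equivalently (combining with the $v$--$\overline v$ edges) a walk of even length from some $v$ to $\overline{v}$ staying in $M$. Composing the binary relations along this walk by iterated joins — each step $\{(p,q'),(p',q)\}$ composed with the next — collapses (by an easy induction on walk length, keeping track of the parity of how many ``crossings'' occur) to a single binary relation on $\{a,b\}$ that is either $\rho_{=}$ or $\rho_{\neq}$ on the endpoint coordinates; the even-length walk from $v$ to $\overline{v}$ forces the outcome $\{(a,a),(b,b)\}$ \emph{together with} a strict inequality of the form \eqref{eqEdgeDefinition} read at $v$ against itself, i.e.\ a self-loop at $v \in M$, contradiction. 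One must be slightly careful that the composed relation has the right feasibility pattern to actually be an edge (Lemma~\ref{lmSwapping} or a direct join computation handles this), but since we are only claiming a hard self-loop suffices for the contradiction (any self-loop at a vertex of $M$ is already absurd, by definition of $M$), this is cleaner than the soft-edge analysis above. I expect the odd-cycle argument and the $M$--$\overline{M}$ separation to share essentially the same join-composition lemma, which it may be worth isolating first.
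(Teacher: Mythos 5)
Your overall strategy --- translating edges into two-element binary relations via Lemma~\ref{lmEdgeRedefined} and composing them by joins --- is the same as the paper's, and your first part (the edge $v-\overline{v}$ from the restricted equality relation) is correct. But the central composition step is off by one join, and this is a genuine gap. Joining the hard-loop relation $\{(a_1,b_1),(b_1,a_1)\}$ with the edge relation $\{(a_1,b_2),(b_1,a_2)\}$ does give $\{(a_1,a_2),(b_1,b_2)\}$, as you compute --- but this is a relation between the label sets of $v_1$ and $v_2$, not a relation on $\{a_2,b_2\}^2$, so it does \emph{not} witness a self-loop at $v_2$ via Lemma~\ref{lmEdgeRedefined} (that would require $\{(a_2,b_2),(b_2,a_2)\}\in\clGamma$). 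You need one more join with the edge relation, i.e.\ you must compose the length-three walk $v_2 - v_1 - v_1 - v_2$. The paper isolates exactly this as its one key observation: for any four vertices $v_1,v_2,v_3,v_4$ with edges between consecutive ones, the join of the three witnessing relations $\{(a_i,b_{i+1}),(b_i,a_{i+1})\}$ equals $\{(a_1,b_4),(b_1,a_4)\}$, so $v_1-v_4$ is an edge; and replacing any one of the three two-element relations by its three-element soft witness shows $v_1-v_4$ is soft whenever any of the three edges is. This single identity disposes of all three remaining claims at once: no $M$--$\overline{M}$ edge via the walk $s-t-t-s$; no soft edge in $\overline{M}$ via the same walk with the soft witness inserted (which also settles the feasibility bookkeeping you explicitly leave as an unresolved ``obstacle''); and no odd cycle by shortcutting $v_k-v_1-v_2-v_3$ to $v_k-v_3$ in a shortest odd cycle.

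Your odd-cycle argument additionally has the parity computation wrong: composing an even number of ``crossing'' relations along a walk from $v=(a,b)$ to $\overline{v}=(b,a)$ yields $\{(a,b),(b,a)\}$, not $\{(a,a),(b,b)\}$; the latter is just the restricted equality relation, witnesses only the always-present edge $v-\overline{v}$, and gives no contradiction. With the correct outcome $\{(a,b),(b,a)\}$ the argument does close, since that relation witnesses a self-loop at $v\in M$ by Lemma~\ref{lmEdgeRedefined} directly --- no ``strict inequality of the form \eqref{eqEdgeDefinition} read at $v$ against itself'' is needed once you are working with relations rather than weighted relations. So your ideas are recoverable, but as written the proof does not go through; the missing ingredient is the explicit three-edge composition identity and its softness propagation.
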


\begin{proof}
As the binary equality relation belongs to $\clGamma$, we have edge $v -
\overline{v}$ for all vertices $v$.

Consider any sequence of vertices $v_1, v_2, v_3, v_4$ such that there is
an edge between every two consecutive ones, and denote $v_i = (a_i, b_i)$. By
Lemma~\ref{lmEdgeRedefined}, there exist binary relations $\rho_i = \{ (a_i,
b_{i+1}), (b_i, a_{i+1}) \} \in \clGamma$ for $i \in \{1, 2, 3\}$. Their join
equals $\{ (a_1, b_4), (b_1, a_4) \} \in \clGamma$, and hence $G_\Gamma$
contains edge $v_1 - v_4$. If any of edges $v_1 - v_2, v_2 - v_3, v_3 - v_4$ is
soft, we can replace the corresponding relation $\rho_i$ with $\{ (a_i,
a_{i+1}), (a_i, b_{i+1}), (b_i, a_{i+1}) \}$ or $\{ (b_i, b_{i+1}), (a_i,
b_{i+1}), (b_i, a_{i+1}) \}$ to show that $v_1 - v_4$ is also soft.

Suppose that there is an edge between $s \in M$ and $t \in \overline{M}$. Then
we have edges $s - t, t - t, t - s$, and hence also self-loop $s - s$, which is
a contradiction.

If there is an odd cycle in $M$, let us choose a shortest one and denote its
vertices $v_1, \dots, v_k$ ($k \geq 3$). We have a sequence of adjacent vertices
$v_k, v_1, v_2, v_3$, and hence $v_3$ and $v_k$ are also adjacent. But that
means there is a shorter odd cycle (or a self-loop) $v_3, \dots, v_k$; a
contradiction.

Finally, suppose that $s, t \in \overline{M}$ and edge $s - t$ is soft. Then we
have edges $s - t, t - t, t - s$, and hence a soft self-loop at $s$, which is a
contradiction.
\end{proof}

\begin{lemma}
\label{lmGamma2Decomposable}
Every relation in $\clGamma$ is 2-decomposable.
\end{lemma}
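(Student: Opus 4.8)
The plan is to argue by contradiction, extracting from a minimal non-2-decomposable relation a binary weighted relation that witnesses a soft self-loop in $G_\Gamma$, contradicting the standing assumption. Suppose some relation in $\clGamma$ is not 2-decomposable, and let $\rho \in \clGamma$ be one of minimum arity $r$; necessarily $r \geq 3$. By the definition of 2-decomposability there is a tuple $\tup{x} \in D^r$ with $(x_i, x_j) \in \Projection{i,j}{\rho}$ for all $i,j$ but $\tup{x} \notin \rho$. I would first observe that we may domain-restrict each coordinate $i$ to $\{x_i, y_i\}$ for a suitable witness tuple differing from $\tup{x}$, keeping the relation in $\clGamma$ (closure under domain restriction via unary relations and minimisation), and then exploit minimality: any proper minimisation or pinning of $\rho$ is 2-decomposable, which forces $\tup{x}$ to be ``almost in'' $\rho$.

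The core step is then to produce two coordinates $i \neq j$ and a binary weighted relation $\gamma_{i,j} \in \closure{\{\rho\}} \subseteq \clGamma$ exhibiting a strict violation of submodularity of the type in \eqref{eqEdgeDefinition}, with the crucial extra feature that one of the ``diagonal'' tuples is feasible --- making the resulting self-loop \emph{soft}. This is precisely the conclusion of Lemma~\ref{lmSwapping}: I would pick tuples $\tup{x}', \tup{y} \in \rho$ realising the projections of $\tup{x}$ on a suitable pair of coordinates, arrange a partition $I, J$ of the coordinates so that $\tup{x}' \in \rho$, $\tup{y} \in \rho$, but $\joinIJ{x'}{y}$ (or the other join) lies outside $\rho$, yielding the strict inequality \eqref{eqSwappingIneq} for the \emph{feasibility relation} $\gamma = \Feas(\rho)$ (value $0$ on $\rho$, $\infty$ off it; the failed join has value $\infty$ on one side, so the left side is $0$ and the right side is $\infty$). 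Lemma~\ref{lmSwapping} then hands us $i \in I$, $j \in J$ and $\gamma_{i,j} \in \closure{\{\rho\}}$ with $(x'_i, x'_j), (y_i, y_j) \in \Feas(\gamma_{i,j})$ and $\gamma_{i,j}(x'_i, x'_j) + \gamma_{i,j}(y_i, y_j) < \gamma_{i,j}(x'_i, y_j) + \gamma_{i,j}(y_i, x'_j)$. Comparing with the definition of $G_\Gamma$, setting $a_1 = x'_i$, $b_1 = y_i$, $a_2 = y_j$, $b_2 = x'_j$ gives an edge $(a_1,b_1) - (a_2,b_2)$; because $(a_1,b_1)$ and $(a_2,b_2)$ can be taken to refer to the \emph{same} pair of labels when $i, j$ are chosen among the coordinates where $\tup{x}'$ and $\tup{y}$ realise the offending 2-projection of $\tup{x}$, this is a self-loop, and the feasibility of at least one diagonal tuple makes it soft.

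The main obstacle I anticipate is the bookkeeping that makes the extracted edge a genuine \emph{soft self-loop} rather than merely an edge: one must choose the witness tuples and the coordinate pair so that (a) the two endpoints of the edge are the same vertex, i.e.\ $(x'_i, y_i) = (x'_j, y_j)$ up to the labelling used in the edge definition, which is why one works with a pair of coordinates where $\tup{x}$'s 2-projection is witnessed, and (b) one of $(a_1,a_2)$, $(b_1,b_2)$ is feasible for $\gamma_{i,j}$ --- this should follow from the ``moreover'' clause of Lemma~\ref{lmSwapping} applied in the reverse direction, or more simply from the fact that $\tup{x}$ itself witnesses $(x_i, x_j) \in \Projection{i,j}{\rho}$ so the corresponding binary tuple is feasible. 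Care is also needed because $\clGamma$ is only \emph{assumed} 2-decomposable nowhere a priori; the induction in Lemma~\ref{lmSwapping} for the ``moreover'' part needs 2-decomposability of $\closure{\{\rho\}}$, which we do not have, so I would use only the first (unconditional) conclusion of Lemma~\ref{lmSwapping} and obtain feasibility of a diagonal tuple directly from the structure of $\rho$ and the minimality hypothesis, rather than from that clause.
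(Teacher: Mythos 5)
Your overall strategy---contradiction via a minimal non-2-decomposable relation, extract a binary witness, contradict a property of $G_\Gamma$---is in the right spirit, but the specific contradiction you aim for does not materialize, and this is a genuine gap. A single application of Lemma~\ref{lmSwapping} to $\Feas(\rho)$ with two witness tuples $\tup{x}',\tup{y}\in\rho$ whose join misses $\rho$ produces an edge between the vertices $(x'_i,y_i)$ and $(y_j,x'_j)$ for \emph{some} $i\in I$, $j\in J$. These are two different coordinates carrying two a priori unrelated label pairs; there is no way to ``choose $i,j$ among the coordinates where the offending 2-projection is realised'' so as to force $(x'_i,y_i)=(y_j,x'_j)$, because $i$ and $j$ are handed to you by the lemma and the labels $y_i$, $y_j$ come from whichever witness tuples you picked. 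So you get an ordinary edge of $G_\Gamma$, not a self-loop---and an ordinary edge (even a soft one, between vertices of $M$) is not a contradiction: the graph is allowed to have plenty of those. Only soft \emph{self-loops} and soft edges inside $\overline{M}$ are forbidden.

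The missing idea is that the failure of 2-decomposability at arity $3$ yields not one edge but \emph{three}: pinning $\rho$ at each coordinate in turn to the corresponding entry of the missing tuple $\tup{x}$ gives binary relations witnessing edges $(x_2,y_2)-(x_3,y_3)$, $(x_3,y_3)-(x_1,y_1)$, and $(x_1,y_1)-(x_2,y_2)$, where $y_1,y_2,y_3$ come from the three projection witnesses. This triangle is an odd cycle, and Lemma~\ref{lmGraphProperties} (no odd cycle in $M$, no edge between $M$ and $\overline{M}$) forces all three vertices into $\overline{M}$; replacing one crisp pinning by a finite-valued unary weight then shows one of these edges is soft, contradicting ``no soft edge in $\overline{M}$.'' Without the odd-cycle step there is nothing to contradict. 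You would also need to supply the reduction from arity $r\geq 4$ to arity $3$ (minimise one coordinate, use the induction hypothesis to recover a full-length witness, then pin the remaining coordinates), which your sketch gestures at but does not carry out. On the positive side, your observation that the ``moreover'' clause of Lemma~\ref{lmSwapping} cannot be used here because 2-decomposability is not yet available is correct---and indeed the paper's proof avoids Lemma~\ref{lmSwapping} entirely, working directly with pinnings.
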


\begin{proof}
Let $\rho \in \clGamma$ be an $r$-ary relation. By definition, $\tup{x} \in \rho$
implies $\bigwedge_{1\leq i,j\leq r} (x_i, x_j) \in \Projection{i,j}{\rho}$. We
prove the converse implication by induction on $r$. If $r \leq 2$, relation
$\rho$ is trivially 2-decomposable. Let $r = 3$. Suppose for the
sake of contradiction that $(x_1, x_2, x_3) \not\in \rho$ even though $(y_1,
x_2, x_3), (x_1, y_2, x_3), (x_1, x_2, y_3) \in \rho$ for some $y_1, y_2, y_3
\in D$. Let $\rho_1 \in \clGamma$ be the binary relation obtained from $\rho$ by
pinning it at the first coordinate to label $x_1$; we have $(x_2, y_3), (y_2,
x_3) \in \rho_1$, $(x_2, x_3) \not\in \rho_1$, and thus graph $G_\Gamma$
contains edge $(x_2, y_2) - (x_3, y_3)$. Analogously, the graph contains edges
$(x_3, y_3) - (x_1, y_1)$ and $(x_1, y_1) - (x_2, y_2)$. This is an odd cycle,
so it must hold $(x_1, y_1), (x_2, y_2), (x_3, y_3) \in \overline{M}$. Let
$\gamma$ be a unary weighted relation with $\gamma(x_1) = 0, \gamma(y_1) = 1$
and $\gamma(z) = \infty$ for all $z \in D \setminus \{ x_1, y_1 \}$. By adding
$\gamma$ to $\rho$ at the first coordinate and then minimising over it we show
that edge $(x_2, y_2) - (x_3, y_3)$ is soft, which is a contradiction.

It remains to prove the lemma for $r \geq 4$. Let $\rho_1 \in \clGamma$ be the
relation obtained from $\rho$ by minimisation over the first coordinate. Relation $\rho_1$ is
2-decomposable by the induction hypothesis, so $(x_2, \dots, x_r) \in \rho_1$,
and hence $(y_1, x_2, \dots, x_r) \in \rho$ for some $y_1 \in D$. Analogously,
we have $(x_1, y_2, x_3, \dots, x_r), (x_1, x_2, y_3, x_4, \dots, x_r) \in \rho$
for some $y_2, y_3 \in D$. Pinning $\rho$ at every coordinate $k \geq 4$ to its
respective label $x_k$ gives a ternary 2-decomposable relation $\rho'$ such
that $(x_i, x_j) \in \Projection{i,j}{\rho'}$ for all $i, j \in \{1, 2, 3\}$.
Therefore, $(x_1, x_2, x_3) \in \rho'$ and $\tup{x} \in \rho$.
\end{proof}

The following lemma involves an extension of the definition of an edge in
$G_\Gamma$ to non-binary weighted relations.

\begin{lemma}
\label{lmEdgeRAry}
Let $\gamma \in \clGamma$ be an $r$-ary weighted relation and $I, J \subseteq
\{1, \dots, r\}$ a partition of its coordinates. If $\tup{x}, \tup{y} \in
\Feas(\gamma)$ and
\begin{equation}
\gamma(\tup{x}) + \gamma(\tup{y}) <
\gamma(\joinIJ{x}{y}) + \gamma(\joinIJ{y}{x}) \,,
\end{equation}
then graph $G_\Gamma$ contains edge $(x_i, y_i) - (y_j, x_j)$ for some $i \in I,
j \in J$. If at least one of $\joinIJ{x}{y}, \joinIJ{y}{x}$ belongs to
$\Feas(\gamma)$, the edge is soft.
\end{lemma}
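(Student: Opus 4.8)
The plan is to reduce Lemma~\ref{lmEdgeRAry} directly to Lemma~\ref{lmSwapping}, using the crucial fact (Lemma~\ref{lmGamma2Decomposable}) that every relation in $\clGamma$ is 2-decomposable. First I would apply Lemma~\ref{lmSwapping} to $\gamma$ with the given partition $I,J$ and the tuples $\tup{x},\tup{y}$ satisfying the strict inequality~\eqref{eqSwappingIneq}. This produces coordinates $i\in I$, $j\in J$ and a binary weighted relation $\gamma_{i,j}\in\closure{\{\gamma\}}$ with $(x_i,x_j),(y_i,y_j)\in\Feas(\gamma_{i,j})$ and
\begin{equation}
\gamma_{i,j}(x_i,x_j)+\gamma_{i,j}(y_i,y_j)<\gamma_{i,j}(x_i,y_j)+\gamma_{i,j}(y_i,x_j)\,.
\end{equation}
Since $\closure{\{\gamma\}}\subseteq\clGamma$ (as $\gamma\in\clGamma$ and $\clGamma$ is closed under all the operations defining $\closure{\{\gamma\}}$), we have $\gamma_{i,j}\in\clGamma$. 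Relabelling, this $\gamma_{i,j}$ witnesses edge $(x_i,y_i)-(y_j,x_j)$ in $G_\Gamma$ via Inequality~\eqref{eqEdgeDefinition} with $a_1=x_i$, $b_1=y_i$, $a_2=y_j$, $b_2=x_j$, provided the two ``off-diagonal'' tuples $(a_1,b_2)=(x_i,x_j)$ and $(b_1,a_2)=(y_i,y_j)$ are feasible for $\gamma_{i,j}$ — which is exactly what Lemma~\ref{lmSwapping} guarantees.

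For the softness claim, suppose without loss of generality that $\joinIJ{x}{y}\in\Feas(\gamma)$. Here the second part of Lemma~\ref{lmSwapping} applies: since every relation in $\closure{\{\gamma\}}\subseteq\clGamma$ is 2-decomposable by Lemma~\ref{lmGamma2Decomposable}, the hypothesis $\joinIJ{x}{y}\in\Feas(\gamma)$ yields $(x_i,y_j)\in\Feas(\gamma_{i,j})$. In the edge notation this is the tuple $(a_1,a_2)$, so one of the two ``diagonal'' tuples is feasible, and hence the edge $(x_i,y_i)-(y_j,x_j)$ is soft by definition. (If instead $\joinIJ{y}{x}\in\Feas(\gamma)$, the same reasoning gives $(y_i,x_j)=(b_1,b_2)\in\Feas(\gamma_{i,j})$, again making the edge soft.)

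The main thing to check carefully is the bookkeeping: one must verify that $\Feas$ being 2-decomposable for all of $\clGamma$ is inherited by $\closure{\{\gamma\}}$ (immediate, since $\closure{\{\gamma\}}\subseteq\clGamma$), and that the side condition $\Feas(\gamma)$ being 2-decomposable needed to invoke the ``Moreover'' part of Lemma~\ref{lmSwapping} is met — it is, again by Lemma~\ref{lmGamma2Decomposable}. The only genuine subtlety, and the step I expect to require the most attention, is matching the orientation of coordinates in the edge $(x_i,y_i)-(y_j,x_j)$ against the order of arguments in~\eqref{eqEdgeDefinition}: the $J$-coordinate appears ``reversed'' as $(y_j,x_j)$ rather than $(x_j,y_j)$, and one has to confirm that this is precisely the convention under which $\gamma_{i,j}$ witnesses the edge, rather than its mirror. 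Once the identification $(a_1,b_1,a_2,b_2)=(x_i,y_i,y_j,x_j)$ is fixed, everything else is a direct transcription of Lemma~\ref{lmSwapping} into the language of $G_\Gamma$.
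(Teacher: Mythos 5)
Your proposal is correct and follows essentially the same route as the paper: apply Lemma~\ref{lmSwapping} to obtain the binary witness $\gamma_{i,j}\in\closure{\{\gamma\}}\subseteq\clGamma$ for the edge $(x_i,y_i)-(y_j,x_j)$, and use the ``Moreover'' clause together with Lemma~\ref{lmGamma2Decomposable} for softness. The orientation bookkeeping you flag is handled exactly as you describe, so nothing is missing.
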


\begin{proof}
By Lemma~\ref{lmSwapping}, there are coordinates $i \in I, j \in J$ and a binary
weighted relation $\gamma_{i,j} \in \clGamma$ such that $(x_i, x_j), (y_i, y_j)
\in \Feas(\gamma_{i,j})$ and $\gamma_{i,j}(x_i, x_j) + \gamma_{i,j}(y_i, y_j) <
\gamma_{i,j}(x_i, y_j) + \gamma_{i,j}(y_i, x_j)$, so graph $G_\Gamma$ contains
edge $(x_i, y_i) - (y_j, x_j)$.  If $\joinIJ{x}{y}$ or $\joinIJ{y}{x}$ belongs
to $\Feas(\gamma)$, then $(x_i, y_j)$ or $(y_i, x_j)$ belongs to
$\Feas(\gamma_{i,j})$ (as $\closure{\{\gamma\}}$ is 2-decomposable by
Lemma~\ref{lmGamma2Decomposable}), and hence the edge is soft.
\end{proof}

\begin{lemma}
\label{lmMBarSwappingEq}
Let $\gamma \in \clGamma$ be an $r$-ary weighted relation and $I, J \subseteq
\{1, \dots, r\}$ a partition of its coordinates. If $\tup{x}, \tup{y},
\joinIJ{x}{y}, \joinIJ{y}{x} \in \Feas(\gamma)$ and, for all $i \in I$, $(x_i,
y_i) \in \overline{M}$, then
\begin{equation}
\gamma(\tup{x}) + \gamma(\tup{y}) =
\gamma(\joinIJ{x}{y}) + \gamma(\joinIJ{y}{x}) \,.
\end{equation}
\end{lemma}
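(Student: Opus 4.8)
The plan is to argue by contradiction. Suppose $\gamma(\tup{x}) + \gamma(\tup{y}) \neq \gamma(\joinIJ{x}{y}) + \gamma(\joinIJ{y}{x})$. First I would reduce to the case where the left-hand side is strictly \emph{smaller} than the right-hand side. If instead the reverse strict inequality holds, I would replace the pair $\tup{x}, \tup{y}$ by the pair $\joinIJ{x}{y}, \joinIJ{y}{x}$: joining these two tuples back along the partition $I, J$ recovers $\tup{x}$ and $\tup{y}$ respectively, so the strict inequality flips into the desired direction; all four tuples involved remain feasible by assumption; and for every $i \in I$ the $i$-th coordinate of $\joinIJ{x}{y}$ is $x_i$ while that of $\joinIJ{y}{x}$ is $y_i$, so the hypothesis that $(x_i, y_i) \in \overline{M}$ for all $i \in I$ is preserved. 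Hence from now on I may assume $\gamma(\tup{x}) + \gamma(\tup{y}) < \gamma(\joinIJ{x}{y}) + \gamma(\joinIJ{y}{x})$.

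Next I would invoke Lemma~\ref{lmEdgeRAry} applied to $\gamma \in \clGamma$, the partition $I, J$, and the tuples $\tup{x}, \tup{y}$. This produces coordinates $i \in I$ and $j \in J$ such that $G_\Gamma$ contains the edge $(x_i, y_i) - (y_j, x_j)$. Moreover, since both $\joinIJ{x}{y}$ and $\joinIJ{y}{x}$ lie in $\Feas(\gamma)$ by hypothesis, the same lemma tells us this edge is \emph{soft}.

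Finally I would conclude using the structural facts about $G_\Gamma$ from Lemma~\ref{lmGraphProperties}. Since $i \in I$, the hypothesis gives $(x_i, y_i) \in \overline{M}$. Because $G_\Gamma$ has no edge between $M$ and $\overline{M}$, the other endpoint $(y_j, x_j)$ of our edge must also lie in $\overline{M}$. Thus $(x_i, y_i) - (y_j, x_j)$ is a soft edge with both endpoints in $\overline{M}$, contradicting the fact (also part of Lemma~\ref{lmGraphProperties}) that $\overline{M}$ contains no soft edge. This contradiction establishes the claimed equality.

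I do not expect any genuine obstacle here: the argument is essentially a one-line combination of Lemma~\ref{lmEdgeRAry} (which converts the modularity violation into a soft edge of $G_\Gamma$) with Lemma~\ref{lmGraphProperties} (which forbids soft edges inside $\overline{M}$ and edges crossing between $M$ and $\overline{M}$). The only point needing a little care is checking that the initial ``without loss of generality'' swap of the two diagonal pairs leaves intact every hypothesis of the lemma — in particular the membership $(x_i,y_i)\in\overline{M}$ for all $i\in I$ — so that the symmetrisation is legitimate.
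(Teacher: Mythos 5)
Your proof is correct and follows essentially the same route as the paper's: reduce by symmetry to the strict inequality $\gamma(\tup{x})+\gamma(\tup{y})<\gamma(\joinIJ{x}{y})+\gamma(\joinIJ{y}{x})$, apply Lemma~\ref{lmEdgeRAry} to obtain a soft edge incident to some $(x_i,y_i)\in\overline{M}$, and derive a contradiction with Lemma~\ref{lmGraphProperties}. The extra care you take in verifying that the ``without loss of generality'' swap preserves all hypotheses is a welcome elaboration of a step the paper leaves implicit.
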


\begin{proof}
Suppose for the sake of contradiction that the equality does not hold. Without
loss of generality, we may assume that $\gamma(\tup{x}) + \gamma(\tup{y}) <
\gamma(\joinIJ{x}{y}) + \gamma(\joinIJ{y}{x})$. By Lemma~\ref{lmEdgeRAry}, graph
$G_\Gamma$ contains a soft edge incident to $(x_i, y_i)$ for some $i \in I$,
which contradicts Lemma~\ref{lmGraphProperties}.
\end{proof}

Graph $G_\Gamma$ does not have any odd cycle on vertices $M$. Therefore, there
is a partition of $M$ into two independent sets $M_1, M_2$. (In fact, it
can be shown that every connected component of $G_\Gamma$ restricted to $M$
is a complete bipartite graph but we do not need this fact here.)
Note that $(a, b)
\in M_1$ if, and only if, $(b, a) \in M_2$, as every vertex $v \in M$ is adjacent
to $\overline{v}$. We define multimorphism $\STP$ as follows:
\begin{subnumcases}{\STP(x, y) =}
\label{eqSTP_M1}
(x, y) & \text{if $(x, y) \in M_1$,} \\
\label{eqSTP_M2}
(y, x) & \text{if $(x, y) \in M_2$,} \\
\label{eqSTP_otherwise}
(x, y) & \text{otherwise.}
\end{subnumcases}
By definition, $\STP$ is commutative on $M$.

\begin{theorem}
\label{thmSTP}
$\STP$ is a multimorphism of $\Gamma$.
\end{theorem}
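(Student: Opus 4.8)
The plan is to prove that $\STP = \mmorp{f,f}$ (where $f = \STP$ as a binary operation applied to each coordinate, using $f(x,y)$ and $f(y,x)$ as the two outputs) satisfies Inequality~\eqref{eqMMorpIneq} for every $\gamma \in \Gamma$; since $\Gamma \subseteq \clGamma$, it suffices to prove it for every $\gamma \in \clGamma$, and in fact we want the stronger statement that $\STP$ is a multimorphism of every weighted relation in $\clGamma$. First I would observe that $\STP$ is conservative by construction (on each pair it returns a permutation), so for a weighted relation $\gamma$ and tuples $\tup{x}, \tup{y} \in \Feas(\gamma)$, writing $\tup{u} = f(\tup{x},\tup{y})$ and $\tup{v} = f(\tup{y},\tup{x})$ componentwise, we have $\{u_i, v_i\} = \{x_i, y_i\}$ at every coordinate $i$; I must show $\gamma(\tup{u}) + \gamma(\tup{v}) \le \gamma(\tup{x}) + \gamma(\tup{y})$ and in particular that $\tup{u}, \tup{v} \in \Feas(\gamma)$.

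The key reduction is to the binary case. Partition the coordinates $\{1, \dots, r\}$ according to where $\tup{x}$ and $\tup{y}$ agree and where they disagree, and among the disagreement coordinates, let $I$ be the set of coordinates $i$ with $(x_i, y_i) \in M_1 \cup \overline{M}$ handled one way and those in $M_2$ handled another — more precisely, note that $\STP$ swaps exactly at coordinates where $(x_i,y_i) \in M_2$ and keeps order elsewhere. So $\tup{u}$ and $\tup{v}$ are obtained from $\tup{x}$ and $\tup{y}$ by "swapping" precisely on the coordinate set $J = \{i : (x_i, y_i) \in M_2\}$; that is, $\tup{u} = \joinIJ{x}{y}$ and $\tup{v} = \joinIJ{y}{x}$ for the partition $I = \{1,\dots,r\} \setminus J$, $J$ as above. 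Now I would argue by contradiction: if $\gamma(\tup{x}) + \gamma(\tup{y}) < \gamma(\joinIJ{x}{y}) + \gamma(\joinIJ{y}{x})$, then by Lemma~\ref{lmEdgeRAry} there is an edge $(x_i, y_i) - (y_j, x_j)$ in $G_\Gamma$ for some $i \in I$, $j \in J$; but $j \in J$ means $(x_j, y_j) \in M_2 \subseteq M$, hence $(y_j, x_j) \in M_1 \subseteq M$, so this edge has one endpoint in $M$. By Lemma~\ref{lmGraphProperties} there is no edge between $M$ and $\overline{M}$, so $(x_i, y_i) \in M$; but then the edge $(x_i,y_i)-(y_j,x_j)$ lies entirely within $M$, connecting two vertices that — I need to check — lie in the same part of the bipartition, contradicting that $M_1, M_2$ are independent sets. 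Indeed $(x_i, y_i) \in I$ means $(x_i, y_i) \notin M_2$; combined with $(x_i,y_i) \in M$ this gives $(x_i, y_i) \in M_1$, and $(y_j, x_j) \in M_1$ as well, so the edge is inside $M_1$ — contradiction. I also need the feasibility clause: $\tup{u}, \tup{v} \in \Feas(\gamma)$. Since every relation in $\clGamma$ is 2-decomposable (Lemma~\ref{lmGamma2Decomposable}), applying the argument to $\Feas(\gamma)$ and using Lemma~\ref{lmSwapping}'s moreover clause (or directly Lemma~\ref{lmEdgeRAry} applied to the crisp relation $\Feas(\gamma)$), any potential failure of feasibility of $\joinIJ{x}{y}$ would likewise produce a (soft) edge from $M$ to somewhere, yielding the same contradiction; so feasibility is preserved.

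One subtlety I would be careful about: the "otherwise" branch~\eqref{eqSTP_otherwise} covers pairs where $x = y$ (then swapping is vacuous) and pairs $(x,y)$ with $(x,y) \in \overline{M}$. On $\overline{M}$ coordinates $\STP$ keeps the order, so those coordinates are in $I$, not $J$. Lemma~\ref{lmMBarSwappingEq} then gives that, restricted to the $\overline{M}$-coordinates among $I$, the relevant swap changes nothing in objective value — but actually I don't even need to invoke it for the inequality, since the contradiction argument above already handles all of $I$ uniformly: the only way to get a violating edge is an endpoint $(x_i, y_i)$ with $i \in I$, and whether $(x_i,y_i) \in M_1$ or $(x_i, y_i) \in \overline{M}$, both lead to contradictions (the $M_1$ case as above; the $\overline{M}$ case because then $(y_j, x_j) \in M_1 \subseteq M$ forces an $M$–$\overline{M}$ edge). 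The main obstacle is getting the bookkeeping of the three cases of the $\STP$ definition exactly right so that the coordinate partition $I, J$ used in Lemma~\ref{lmEdgeRAry} matches the actual action of $\STP$, and then checking that in each case the resulting edge violates one of the structural properties established in Lemma~\ref{lmGraphProperties} (no odd cycle / no $M$–$\overline{M}$ edge). Once the partition is set up correctly, the contradiction is immediate and no computation is needed.
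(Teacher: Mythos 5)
Your proposal is correct and follows essentially the same route as the paper: partition the coordinates into $J$ (where case~\eqref{eqSTP_M2} applies) and $I$ (the rest), observe that $\STP$ maps $\tup{x},\tup{y}$ to $\joinIJ{x}{y},\joinIJ{y}{x}$, and derive a contradiction from Lemma~\ref{lmEdgeRAry} together with the structural facts of Lemma~\ref{lmGraphProperties}. Your extra bookkeeping (spelling out the $M_1$ versus $\overline{M}$ subcases and the feasibility of the output tuples) is subsumed in the paper's shorter argument, since a failure of feasibility is itself a violation of Inequality~\eqref{eqMMorpIneq} handled by the same contradiction.
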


\begin{proof}
Let $\gamma \in \Gamma$ be an $r$-ary weighted relation and $\tup{x}, \tup{y}
\in \Feas(\gamma)$. Suppose for the sake of contradiction that
\eqref{eqMMorpIneq} does not hold. We partition set $\{1, \dots, r\}$ into
$I$ and $J$: Set $J$ consists of all coordinates $j$ such that case
\eqref{eqSTP_M2} applies to $(x_j, y_j)$; set $I$ covers the other two cases.
For any $i \in I$, either $x_i = y_i$ or $(x_i, y_i) \in M_1 \cup \overline{M}$.
For any $j \in J$, $(x_j, y_j) \in M_2$ and hence $(y_j, x_j) \in M_1$.  $\STP$
maps $\tup{x}, \tup{y}$ to $\joinIJ{x}{y}, \joinIJ{y}{x}$, so we have
$\gamma(\tup{x}) + \gamma(\tup{y}) < \gamma(\joinIJ{x}{y}) +
\gamma(\joinIJ{y}{x})$. By Lemma~\ref{lmEdgeRAry}, graph $G_\Gamma$ contains edge
$(x_i, y_i) - (y_j, x_j)$ for some $i \in I, j \in J$, which contradicts Lemma~\ref{lmGraphProperties}.
\end{proof}

The following definition corresponds to the ``$\mu$ function''
from~\cite[Section~6]{kz13:jacm}. 

\begin{definition}
For any $a, b, c \in D$, we say that $ab|c$ holds if $a, b, c$ are all different labels
and there exist $(s, t) \in \overline{M}$ such that binary relation $\{ (a, s),
(b, s), (c, t) \}$ belongs to $\clGamma$.
\end{definition}

The intuition is that if $ab|c$ holds, then any minority operation on
$\overline{M}$ must map any permutation of $\{a, b, c\}$ to $c$ in order to be a
polymorphism of $\Gamma$.

\begin{lemma}
\label{lmMuProperties}
For any $a, b, c \in D$, at most one of $ab|c$, $ca|b$, $bc|a$ holds. If $ab|c$,
then $(a, c), (b, c) \in \overline{M}$.
\end{lemma}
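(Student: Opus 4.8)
The plan is to establish both parts by exhibiting explicit relations in $\clGamma$ and then invoking Lemma~\ref{lmEdgeRedefined} together with the standing assumption that $G_\Gamma$ has no soft self-loop. The common building block is: \emph{if $ab|c$ holds, then the relation $\delta_1:=\{(a,c),(b,c),(c,a),(c,b)\}$ (the ``disequality between the two-element class $\{a,b\}$ and the singleton class $\{c\}$'') belongs to $\clGamma$.} Indeed, fix a witness $(s,t)\in\overline{M}$ and $\rho:=\{(a,s),(b,s),(c,t)\}\in\clGamma$; since $(s,t)\in\overline{M}$, Lemma~\ref{lmEdgeRedefined} gives $\sigma:=\{(s,t),(t,s)\}\in\clGamma$, and $\clGamma$ is closed under transposing binary relations because it contains the binary equality relation and is closed under join (Definition~\ref{defWRelOperations}). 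Writing $\bowtie$ for the join of two binary relations composed as a directed path, one computes $\rho\bowtie\sigma\bowtie\rho^{T}=\delta_1$; this short calculation uses only the incidence pattern of $\rho$ and the fact $s\neq t$, so it is insensitive to whether $s$ or $t$ happen to lie in $\{a,b,c\}$.

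The second assertion then follows at once: restricting $\delta_1$ to $\{a,c\}$ at both coordinates yields $\{(a,c),(c,a)\}\in\clGamma$, which by Lemma~\ref{lmEdgeRedefined} is a self-loop at $(a,c)$, so $(a,c)\in\overline{M}$; restricting to $\{b,c\}$ instead gives $(b,c)\in\overline{M}$. For the first assertion I argue by contradiction. If two of $ab|c$, $ca|b$, $bc|a$ held, then after relabelling $a,b,c$ (the three statements are permuted among themselves by permutations of the labels, so $S_3$ acts transitively on pairs of them) I may assume $ab|c$ and $ca|b$ both hold. The building block then places $\delta_1=\{(a,c),(b,c),(c,a),(c,b)\}$ and, symmetrically, $\delta_2=\{(a,b),(c,b),(b,a),(b,c)\}$ into $\clGamma$. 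The key computation is the threefold join $\delta_1\bowtie\delta_2\bowtie\delta_1$, which works out to $\{(a,c),(b,c),(c,c),(c,a),(c,b)\}$ — that is, $\delta_1$ enlarged by the diagonal pair $(c,c)$. Restricting this relation to $\{a,c\}$ at both coordinates leaves $\{(c,c),(a,c),(c,a)\}\in\clGamma$, which by Lemma~\ref{lmEdgeRedefined} certifies a \emph{soft} self-loop at $(a,c)$, contradicting the standing assumption. Hence at most one of $ab|c$, $ca|b$, $bc|a$ can hold.

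The routine parts are the join computations themselves and the verification that the building block is robust to coincidences among $a,b,c,s,t$ (each join depends only on the incidence patterns of its operands and on $s\neq t$). The step I expect to be the genuine obstacle is identifying the right way to combine $\delta_1$ and $\delta_2$: naive combinations (intersecting their feasibility, or a single join $\delta_1\bowtie\delta_2$) only recover self-loops already obtained in the second assertion, whereas routing through the intermediate path $c\to b\to c\to a$ inside $\delta_1\bowtie\delta_2\bowtie\delta_1$ is precisely what produces the extra diagonal pair $(c,c)$ and thereby forces softness. For the second assertion there is also an equivalent shortcut: $\rho$ restricted at its first coordinate to $\{a,c\}$ already witnesses the edge $(a,c)-(t,s)$ (by Lemma~\ref{lmEdgeRedefined}), and combining it with the self-loop at $(t,s)$ as in the proof of Lemma~\ref{lmGraphProperties} yields the self-loop at $(a,c)$ directly.
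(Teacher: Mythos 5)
Your proof is correct; I checked the join computations and they come out as you claim ($\rho\bowtie\sigma\bowtie\rho^{T}=\{(a,c),(b,c),(c,a),(c,b)\}$, and $\delta_1\bowtie\delta_2\bowtie\delta_1=\delta_1\cup\{(c,c)\}$, whose restriction to $\{a,c\}$ is exactly the soft-self-loop certificate $\{(c,c),(a,c),(c,a)\}$ of Lemma~\ref{lmEdgeRedefined}); the robustness to coincidences among $a,b,c,s,t$ holds because the middle coordinate of each composition is pinned to $\{s,t\}$ with $s\neq t$. The paper's proof is in the same spirit --- join the witnessing relations and contradict the absence of soft self-loops via Lemma~\ref{lmEdgeRedefined} --- but is organized differently. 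For the ``at most one'' part it performs a \emph{single} join of the two witnesses $\rho_1=\{(c,s_1),(a,s_1),(b,t_1)\}$ and $\rho_2=\{(b,s_2),(c,s_2),(a,t_2)\}$ over their shared first coordinate, obtaining $\{(s_1,s_2),(s_1,t_2),(t_1,s_2)\}$, i.e.\ a soft edge between the two $\overline{M}$-witnesses, and then invokes Lemma~\ref{lmGraphProperties} (no soft edge in $\overline{M}$); for the second part it restricts the witness to get the edge $(a,c)-(t,s)$ and uses the absence of edges between $M$ and $\overline{M}$ (this is exactly your closing ``shortcut''). Your route instead pulls the witness's self-loop down onto the triple $\{a,b,c\}$ via $\delta_1$, which lets you read off hard self-loops at $(a,c)$ and $(b,c)$ directly and, given a second witness, manufacture a soft self-loop outright. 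What this buys is independence from Lemma~\ref{lmGraphProperties} --- only the standing no-soft-self-loop assumption and Lemma~\ref{lmEdgeRedefined} are needed --- at the cost of longer chains of joins; the paper's one-join argument is shorter but leans on the structure theory of $G_\Gamma$ already established.
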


\begin{proof}
Suppose that both $ca|b$ and $bc|a$ hold. Then there are $(s_1, t_1), (s_2, t_2)
\in \overline{M}$ and binary relations $\rho_1, \rho_2 \in \clGamma$ such that
$\rho_1 = \{ (c, s_1), (a, s_1), (b, t_1) \}$, $\rho_2 = \{ (b, s_2), (c, s_2),
(a, t_2) \}$. We construct their join $\rho$ as $\rho(x, y) = \min_{z \in
D} (\rho_1(z, x) + \rho_2(z, y))$. We have $\rho \in \clGamma$ and $\rho = \{
(s_1, s_2), (s_1, t_2), (t_1, s_2) \}$, which implies a soft edge in
$\overline{M}$ and hence a contradiction.

If $ab|c$, then there are $(s, t) \in \overline{M}$ such that $\{ (a, s), (b,
s), (c, t) \} \in \clGamma$. By restricting this relation at the first
coordinate to labels $\{a, c\}$ we get edge $(a, c) - (t, s)$ and thus $(a, c)
\in \overline{M}$; analogously by restricting to $\{b, c\}$ we get $(b, c) \in
\overline{M}$.
\end{proof}

We define multimorphism $\MJN$ as follows:
\begin{subnumcases}{\MJN(x, y, z) =}
\label{eqMJN_z}
(x,y,z) & \text{if $x=y\wedge(y,z)\in\overline{M}$ or $xy|z$,} \\
\label{eqMJN_y}
(z,x,y) & \text{if $z=x\wedge(x,y)\in\overline{M}$ or $zx|y$,} \\
\label{eqMJN_x}
(y,z,x) & \text{if $y=z\wedge(z,x)\in\overline{M}$ or $yz|x$,} \\
\label{eqMJN_otherwise}
(x,y,z) & \text{otherwise.}
\end{subnumcases}
Note that the operations of $\MJN$ are majorities and a minority on
$\overline{M}$. Also note that in the subcase $x=y\wedge(y,z)\in\overline{M}$ of
case~\eqref{eqMJN_z}, the output has to be $(x,y,z)$ for $\MJN$ to be an MJN
multimorphism of $\Gamma$ on $\overline{M}$ (and similarly for the first subcase of case~\eqref{eqMJN_y}
and case~\eqref{eqMJN_x}). It is the other cases where there is some freedom and
where we differ from~\cite{kz13:jacm}.

\begin{theorem}
\label{thmMJN}
$\MJN$ is a multimorphism of $\Gamma$.
\end{theorem}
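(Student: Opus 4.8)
The plan is to establish the stronger fact that $\MJN$ satisfies Inequality~\eqref{eqMMorpIneq} with \emph{equality} on every $\gamma\in\Gamma$; this simultaneously shows that the three operations of $\MJN$ are polymorphisms, since an infeasible output tuple would make the left-hand side of~\eqref{eqMMorpIneq} infinite and hence strictly larger than the (finite) right-hand side. So fix an $r$-ary $\gamma\in\Gamma$ and $\tup{x},\tup{y},\tup{z}\in\Feas(\gamma)$, and partition $\{1,\dots,r\}$ into the set $A$ of coordinates at which case~\eqref{eqMJN_z} or~\eqref{eqMJN_otherwise} applies (so $\MJN$ acts coordinatewise as the identity), the set $B$ of coordinates at which case~\eqref{eqMJN_y} applies, and the set $C$ of coordinates at which case~\eqref{eqMJN_x} applies. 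Reading off the definition, $\MJN(\tup{x},\tup{y},\tup{z})=(\tup{u},\tup{v},\tup{w})$ with $\tup{u}=\tup{x}_A\join\tup{z}_B\join\tup{y}_C$, $\tup{v}=\tup{y}_A\join\tup{x}_B\join\tup{z}_C$, and $\tup{w}=\tup{z}_A\join\tup{y}_B\join\tup{x}_C$.

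I would first record the membership facts needed from the graph. For every $j\in B$ the two subcases of case~\eqref{eqMJN_y}, together with Lemma~\ref{lmMuProperties} in the subcase $z_jx_j|y_j$, give $x_j\neq y_j$ and $(x_j,y_j),(z_j,y_j)\in\overline{M}$; symmetrically, for every $j\in C$ one gets $x_j\neq y_j$ and $(x_j,y_j),(x_j,z_j)\in\overline{M}$. (I also use that $v\in\overline{M}$ iff $\overline v\in\overline{M}$: the edge $v-\overline v$ always exists and no edge joins $M$ to $\overline{M}$, both by Lemma~\ref{lmGraphProperties}; so $(z_j,y_j)\in\overline{M}$ gives $(y_j,z_j)\in\overline{M}$, and so on.)

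The heart of the argument is to pass from $(\tup{x},\tup{y},\tup{z})$ to $(\tup{u},\tup{v},\tup{w})$ by three successive ``modular swaps'', each exchanging two of the current tuples on a coordinate set where the two labels involved lie in $\overline{M}$, so that Lemma~\ref{lmMBarSwappingEq} applies and keeps the sum of the two $\gamma$-values it touches fixed: (i) exchange the first and second tuples on $B\cup C$, valid since $(x_j,y_j)\in\overline{M}$ there; (ii) exchange the resulting first and third tuples on $B$, which at that point carry labels $y_j$ and $z_j$ with $(y_j,z_j)\in\overline{M}$; (iii) exchange the resulting second and third tuples on $C$, which then carry $x_j$ and $z_j$ with $(x_j,z_j)\in\overline{M}$. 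A short bookkeeping check confirms that the triple obtained is exactly $(\tup{u},\tup{v},\tup{w})$, and summing the three equalities furnished by Lemma~\ref{lmMBarSwappingEq} and cancelling the common terms yields $\gamma(\tup{u})+\gamma(\tup{v})+\gamma(\tup{w})=\gamma(\tup{x})+\gamma(\tup{y})+\gamma(\tup{z})$, i.e.\ Inequality~\eqref{eqMMorpIneq} with equality.

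The main obstacle is discharging the feasibility hypotheses of Lemma~\ref{lmMBarSwappingEq}: each of the three applications requires its four tuples to lie in $\Feas(\gamma)$, so one must show that the six ``mixed'' tuples produced along the way — namely $\tup{x}_A\join\tup{y}_{B\cup C}$, $\tup{y}_A\join\tup{x}_{B\cup C}$, $\tup{u}$, $\tup{z}_A\join\tup{y}_B\join\tup{z}_C$, $\tup{v}$, $\tup{w}$ — are feasible. I would do this via 2-decomposability of $\Feas(\gamma)\in\clGamma$ (Lemma~\ref{lmGamma2Decomposable}): feasibility of each mixed tuple reduces to checking that every pair of its coordinates lies in the corresponding binary projection $\Projection{i,j}{\Feas(\gamma)}\in\clGamma$, and the only nontrivial pairs are those of the form $(x_i,y_j)$ (and symmetric variants) with $i\in A$ and $j\in B\cup C$, or between coordinates of $B$ and $C$. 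If such a pair were missing, then, using that $\Projection{i,j}{\Feas(\gamma)}$ already contains the corresponding projections of $\tup{x},\tup{y},\tup{z}$ together with the loop relation $\{(x_j,y_j),(y_j,x_j)\}\in\clGamma$ witnessing $(x_j,y_j)\in\overline{M}$ (or, at coordinates where some $ab|c$ holds, the relation defining $ab|c$), and closing $\clGamma$ under restriction, minimisation, join and addition of unary weighted relations, one obtains a binary relation exhibiting either a soft edge inside $\overline{M}$ or an edge between $M$ and $\overline{M}$ — contradicting Lemma~\ref{lmGraphProperties}. Carrying out this case analysis — tracking which label occupies which role after each partial swap, and separating the ``$=$'' subcase from the ``$ab|c$'' subcase at coordinates of $B$ and $C$ — is the technical core; once it is in place, the three-line computation above completes the proof.
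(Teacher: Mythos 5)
Your high-level skeleton is sound and genuinely different from the paper's: where the paper argues by induction on arity with a minimal counterexample (showing that cases~\eqref{eqMJN_y} and~\eqref{eqMJN_x} cannot occur in such a counterexample, using a pinning step to invoke minimality and then two applications of Lemma~\ref{lmMBarSwappingEq}), you compute the equality directly via three successive modular swaps. I checked your bookkeeping: the three swaps do produce $(\tup{u},\tup{v},\tup{w})$, the $\overline{M}$-membership hypotheses of Lemma~\ref{lmMBarSwappingEq} are satisfied at each stage (including the requirement that the two labels differ at every swapped coordinate), and summing and cancelling the three finite intermediate values gives the equality, which in turn yields the multimorphism property. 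This is arguably a cleaner way to organise the final accounting, and it avoids the induction entirely.

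The gap is that the feasibility of the six intermediate tuples --- which you correctly identify as the crux --- is not proved but only gestured at. Via Lemma~\ref{lmGamma2Decomposable} it reduces to showing $\{x_i,y_i,z_i\}\times\{x_j,y_j,z_j\}\subseteq\PrGamma$ for $i\in B$, $j\in A\cup C$ (and symmetrically for $i\in C$), and this claim is the technical heart of the paper's proof, occupying most of its length. It is not a one-step consequence of ``a missing pair gives a forbidden edge'': the paper's argument first establishes the biconditionals $(x_i,y_j)\in\PrGamma\iff(y_i,x_j)\in\PrGamma$ and $(z_i,y_j)\in\PrGamma\iff(y_i,z_j)\in\PrGamma$ (each via a soft edge in $\overline{M}$), then runs a four-way case analysis on which of $(x_i,y_j)$, $(z_i,y_j)$ lie in $\PrGamma$; the hardest case requires showing that $x_j,y_j,z_j$ are pairwise distinct, extracting the bijection relation $\{(x_i,x_j),(y_i,y_j),(z_i,z_j)\}$ from $\PrGamma$ by domain restriction, and joining it with the witness of $z_ix_i|y_i$ to deduce $z_jx_j|y_j$ --- contradicting, via Lemma~\ref{lmMuProperties} and the fact that case~\eqref{eqMJN_y} does not apply at $j$, the assumed case assignment. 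Your sketch names the right ingredients (the loop relation witnessing $\overline{M}$-membership, the $ab|c$ witness relations, closure under join and restriction) but does not carry out this analysis, and without it the three applications of Lemma~\ref{lmMBarSwappingEq} are not licensed. So as written the proposal is an outline with its central lemma unproved; once that claim is established (as in the paper), your three-swap computation does complete the proof.
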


We will actually prove that \eqref{eqMMorpIneq} in Definition~\ref{def:mult} holds with \emph{equality}. 

\begin{proof}

Suppose for the sake of contradiction this is not true for some $r$-ary weighted
relation $\gamma \in \clGamma$ and $\tup{x}, \tup{y}, \tup{z} \in
\Feas(\gamma)$; we choose $\gamma$ so that it has the minimum arity among such
counterexamples. We denote the $r$-tuples to which $\MJN$ maps $(\tup{x},
\tup{y}, \tup{z})$ by $(\tup{f}, \tup{g}, \tup{h})$.

First we show that case \eqref{eqMJN_y} does not occur. Let $I$ be the set of
coordinates $i$ such that case \eqref{eqMJN_y} applies to $(x_i, y_i, z_i)$ and
let $J$ cover the remaining cases. Suppose that $I$ is non-empty, and note that
$\tup{f}_I = \tup{z}_I, \tup{g}_I = \tup{x}_I, \tup{h}_I = \tup{y}_I$. For every
$i \in I$, it holds $(x_i, y_i), (z_i, y_i) \in \overline{M}$ (directly or by
Lemma~\ref{lmMuProperties}), and either $z_i = x_i$ or $z_ix_i|y_i$.

We claim that $\{x_i, y_i, z_i\} \times \{x_j, y_j, z_j\} \subseteq \PrGamma$
for all $i \in I, j \in J$.
We already have $(x_i, x_j), (y_i, y_j), (z_i, z_j)
\in \PrGamma$. It holds
\begin{align}
(x_i, y_j) \in \PrGamma &\iff (y_i, x_j) \in \PrGamma\,, \\
(z_i, y_j) \in \PrGamma &\iff (y_i, z_j) \in \PrGamma \,,
\end{align}
otherwise there would be a soft edge in $\overline{M}$ (i.e.\ soft edge $(x_i,
y_i) - (y_j, x_j)$ and $(z_i, y_i) - (y_j, z_j)$ respectively).

If $(x_i, y_j), (z_i, y_j) \not\in \PrGamma$, then there are edges $(x_i, y_i) -
(y_j, x_j), (z_i, y_i) - (y_j, z_j)$, and hence $(x_j, y_j), (z_j, y_j) \in
\overline{M}$. Because case \eqref{eqMJN_y} does not apply at coordinate $j$, it
holds $z_j \neq x_j$, and therefore labels $x_j, y_j, z_j$ are all distinct.
But then $(x_i, z_j) \not\in \PrGamma$, otherwise we would get
$\{ (x_i, z_j), (x_i, x_j), (y_i, y_j) \} \in \clGamma$ (obtained by domain
restriction of $\PrGamma$), and thus $z_jx_j|y_j$ would hold. Analogously, we
have $(z_i, x_j) \not\in \PrGamma$. This implies $z_i \neq x_i$, and hence
$z_ix_i|y_i$ holds. By domain restriction of $\PrGamma$ we obtain bijection
relation $\{ (x_i, x_j), (y_i, y_j), (z_i, z_j) \} \in \clGamma$; joining it
with a binary relation showing that $z_ix_i|y_i$ gives us $z_jx_j|y_j$, which is
a contradiction.

If $(x_i, y_j) \in \PrGamma$ and $(z_i, y_j) \not\in \PrGamma$, then we have
$z_i \neq x_i$, $z_ix_i|y_i$, and $(z_j, y_j) \in \overline{M}$. It must also
hold $(x_i, z_j) \not\in \PrGamma$, otherwise there would be a soft edge
incident to vertex $(z_j, y_j)$. But then we have $\{ (x_i, y_j), (y_i, y_j),
(z_i, z_j) \} \in \clGamma$, which implies $x_iy_i|z_i$ and contradicts
Lemma~\ref{lmMuProperties}. The case when $(x_i, y_j) \not\in \PrGamma$ and
$(z_i, y_j) \in \PrGamma$ can be ruled out by an analogous argument.

Therefore, we have $(x_i, y_j), (z_i, y_j) \in \PrGamma$. It must also hold
$(x_i, z_j), (z_i, x_j) \in \PrGamma$, otherwise there would be a soft edge in
$\overline{M}$ (incident to vertex $(x_i, y_i)$ and $(z_i, y_i)$ respectively).
Hence, we have shown that $\{x_i, y_i, z_i\} \times \{x_j, y_j, z_j\} \subseteq
\PrGamma$.

Because $\Feas(\gamma)$ is 2-decomposable by Lemma~\ref{lmGamma2Decomposable},
we have $\joinIJ{u}{v} \in \Feas(\gamma)$ for any $\tup{u}, \tup{v} \in
\{\tup{x}, \tup{y}, \tup{z}\}$. It must hold
\begin{equation}
\label{eqY_FGH}
\gamma(\joinIJ{y}{x}) + \gamma(\joinIJ{y}{y}) + \gamma(\joinIJ{y}{z}) =
\gamma(\joinIJ{y}{f}) + \gamma(\joinIJ{y}{g}) + \gamma(\joinIJ{y}{h}) \,,
\end{equation}
otherwise we would obtain a smaller counterexample by pinning $\gamma$ at every
coordinate $i \in I$ to its respective label $y_i$. This gives $\joinIJ{y}{f},
\joinIJ{y}{g}, \joinIJ{y}{h} \in \Feas(\gamma)$; by an analogous argument we get
$\joinIJ{u}{v} \in \Feas(\gamma)$ for any $\tup{u} \in \{\tup{x}, \tup{y},
\tup{z} \}$ and $\tup{v} \in \{\tup{f}, \tup{g}, \tup{h}\}$. By
Lemma~\ref{lmMBarSwappingEq}, it holds
\begin{align}
\label{eqXG}
\gamma(\joinIJ{x}{x}) + \gamma(\joinIJ{y}{g}) &=
\gamma(\joinIJ{x}{g}) + \gamma(\joinIJ{y}{x}) \,, \\
\label{eqZF}
\gamma(\joinIJ{z}{z}) + \gamma(\joinIJ{y}{f}) &=
\gamma(\joinIJ{z}{f}) + \gamma(\joinIJ{y}{z}) \,.
\end{align}
By adding \eqref{eqY_FGH}, \eqref{eqXG}, and \eqref{eqZF} we get
\begin{equation}
\gamma(\joinIJ{x}{x}) + \gamma(\joinIJ{y}{y}) + \gamma(\joinIJ{z}{z}) =
\gamma(\joinIJ{z}{f}) + \gamma(\joinIJ{x}{g}) + \gamma(\joinIJ{y}{h}) \,,
\end{equation}
and hence \eqref{eqMMorpIneq} holds as equality (note that $\tup{f}_I =
\tup{z}_I, \tup{g}_I = \tup{x}_I, \tup{h}_I = \tup{y}_I$). This is a
contradiction; therefore case \eqref{eqMJN_y} does not apply at any coordinate.

Suppose that case~\eqref{eqMJN_x} applies at some coordinate $i$. $\MJN$ maps
$(\tup{y}, \tup{x}, \tup{z})$ to $(\tup{g}, \tup{f}, \tup{h})$, which gives us
another smallest counterexample to the theorem. However, at coordinate $i$ is
now applied case~\eqref{eqMJN_y}, which was proved impossible.

Finally, we have that only cases~\eqref{eqMJN_z} and~\eqref{eqMJN_otherwise} may
occur in a smallest counterexample. But then $\MJN$ maps $(\tup{x}, \tup{y},
\tup{z})$ to $(\tup{x}, \tup{y}, \tup{z})$, and hence the stated equality holds.
\end{proof}

\section{Conclusions}

We have studied the computational complexity of planar VCSPs. For conservative
valued constraint languages on arbitrary finite domains, we have given a
complete complexity classification. For valued constraint language on Boolean
domains, we have given a necessary condition for tractability. The obvious open
problem is to give a complexity classification of Boolean valued constraint
languages, following a classification of crisp Boolean constraint
languages~\cite{Dvorak15:icalp,Kazda17:soda}. Another line of work is to
consider larger domains in the non-conservative setting. As discussed in
Section~\ref{sec:intro}, this might be difficult given the Four Colour Theorem.
Finally, planar restrictions correspond to forbidding $K_5$ and $K_{3,3}$ as
minors. A possible avenue of research is to consider other forbidden minors in
the incidence graph of the VCSP instance.


\begin{thebibliography}{10}

\bibitem{Barahona82:max-cut}
Francisco Barahona.
\newblock On the computational complexity of ising spin glass models.
\newblock {\em Journal of Physics A: Mathematical and General},
  15(10):3241--3253, 1982.

\bibitem{Barto14:survey}
Libor Barto.
\newblock Constraint satisfaction problem and universal algebra.
\newblock {\em {A}{C}{M} {S}{I}{G}{L}{O}{G} News}, 1(2):14--24, 2014.

\bibitem{Bulatov05:classifying}
Andrei Bulatov, Andrei Krokhin, and Peter Jeavons.
\newblock Classifying the {C}omplexity of {C}onstraints using {F}inite
  {A}lgebras.
\newblock {\em {SIAM} Journal on Computing}, 34(3):720--742, 2005.

\bibitem{Bulatov11:conservative}
Andrei~A. Bulatov.
\newblock Complexity of conservative constraint satisfaction problems.
\newblock {\em ACM Transactions on Computational Logic}, 12(4), 2011.
\newblock Article 24.

\bibitem{Cai10:focs-planar}
Jin{-}yi Cai, Pinyan Lu, and Mingji Xia.
\newblock Holographic algorithms with matchgates capture precisely tractable
  planar {\#}{CSP}.
\newblock In {\em Proceedings of the 51st Annual {IEEE} Symposium on
  Foundations of Computer Science (FOCS'10)}, pages 427--436. {IEEE} Computer
  Society, 2010.

\bibitem{Carbonnel16:constraints}
Cl{\'{e}}ment Carbonnel and Martin~C. Cooper.
\newblock Tractability in constraint satisfaction problems: a survey.
\newblock {\em Constraints}, 21(2):115--144, 2016.

\bibitem{cccjz13:sicomp}
David~A. Cohen, Martin~C. Cooper, P\'aid\'i Creed, Peter Jeavons, and Stanislav
  \Zivny.
\newblock An algebraic theory of complexity for discrete optimisation.
\newblock {\em SIAM Journal on Computing}, 42(5):915--1939, 2013.

\bibitem{Cohen08:Generalising}
David~A. Cohen, Martin~C. Cooper, and Peter~G. Jeavons.
\newblock Generalising submodularity and {H}orn clauses: {T}ractable
  optimization problems defined by tournament pair multimorphisms.
\newblock {\em Theoretical Computer Science}, 401(1-3):36--51, 2008.

\bibitem{Cohen06:complexitysoft}
David~A. Cohen, Martin~C. Cooper, Peter~G. Jeavons, and Andrei~A. Krokhin.
\newblock The {C}omplexity of {S}oft {C}onstraint {S}atisfaction.
\newblock {\em Artificial Intelligence}, 170(11):983--1016, 2006.

\bibitem{Creignouetal:siam01}
Nadia Creignou, Sanjeev Khanna, and Madhu Sudan.
\newblock {\em Complexity {C}lassification of {B}oolean {C}onstraint
  {S}atisfaction {P}roblems}, volume~7 of {\em SIAM Monographs on Discrete
  Mathematics and Applications}.
\newblock SIAM, 2001.

\bibitem{Dalmau04:side}
V\'{\i}ctor Dalmau and Peter Jonsson.
\newblock The complexity of counting homomorphisms seen from the other side.
\newblock {\em Theoretical Computer Science}, 329(1-3):315--323, 2004.

\bibitem{Dvorak15:icalp}
Zden\v{e}k Dvo\v{r}\'ak and Martin Kupec.
\newblock On {P}lanar {B}oolean {C}{S}{P}.
\newblock In {\em Proceedings of the 42nd International Colloquium on Automata,
  Languages and Programming (ICALP'15)}, volume 9134 of {\em Lecture Notes in
  Computer Science}, pages 432--443. Springer, 2015.

\bibitem{Farnqvist07:isaac}
Tommy F{\"{a}}rnqvist and Peter Jonsson.
\newblock Bounded tree-width and csp-related problems.
\newblock In {\em Proceedings of the 18th International Symposium on Algorithms
  and Computation (ISAAC)}, volume 4835 of {\em Lecture Notes in Computer
  Science}, pages 632--643. Springer, 2007.

\bibitem{Feder98:monotone}
Tom\'as Feder and Moshe~Y. Vardi.
\newblock The {C}omputational {S}tructure of {M}onotone {M}onadic {S{N}{P}} and
  {C}onstraint {S}atisfaction: {A} {S}tudy through {D}atalog and {G}roup
  {T}heory.
\newblock {\em {SIAM} Journal on Computing}, 28(1):57--104, 1998.

\bibitem{fz16:toct}
Peter Fulla and Stanislav \Zivny.
\newblock {{A Galois Connection for Valued Constraint Languages of Infinite
  Size}}.
\newblock {\em ACM Transactions on Computation Theory}, 8(3), 2016.
\newblock Article No. 9.

\bibitem{fz16:mfcs}
Peter Fulla and Stanislav \Zivny.
\newblock On {P}lanar {V}alued {C}{S}{P}s.
\newblock In {\em Proceedings of the 41st International Symposium on
  Mathematical Foundations of Computer Science (MFCS'16)}, pages 39:1--39:14,
  2016.

\bibitem{Garey77:siam}
M.~R. Garey and David~S. Johnson.
\newblock The rectilinear steiner tree problem in {NP} complete.
\newblock {\em {SIAM} Journal of Applied Mathematics}, 32:826--834, 1977.

\bibitem{Garey79:intractability}
Michael~R. Garey and David~S. Johnson.
\newblock {\em Computers and {I}ntractability: {A} {G}uide to the {T}heory of
  {N{P}}-{C}ompleteness}.
\newblock W.H. Freeman, 1979.

\bibitem{Grohe07:jacm}
Martin Grohe.
\newblock The complexity of homomorphism and constraint satisfaction problems
  seen from the other side.
\newblock {\em Journal of the {ACM}}, 54(1):1--24, 2007.

\bibitem{Guo13:icalp}
Heng Guo and Tyson Williams.
\newblock The complexity of planar {B}oolean {\#}{CSP} with complex weights.
\newblock In {\em Proceedings of the 40th International Colloquium on Automata,
  Languages and Programming ({ICALP}'13)}, volume 7965 of {\em Lecture Notes in
  Computer Science}, pages 516--527. Springer, 2013.

\bibitem{Hadlock75:sicomp}
F.~Hadlock.
\newblock Finding a maximum cut of a planar graph in polynomial time.
\newblock {\em {SIAM} Journal on Computing}, 4(3):221--225, 1975.

\bibitem{Hell08:survey}
Pavol Hell and Jaroslav Ne\v{s}et\v{r}il.
\newblock Colouring, constraint satisfaction, and complexity.
\newblock {\em Computer Science Review}, 2(3):143--163, 2008.

\bibitem{Hopcroft74:planarity}
John~E. Hopcroft and Robert~Endre Tarjan.
\newblock Efficient planarity testing.
\newblock {\em Journal of the ACM}, 21(4):549--568, 1974.

\bibitem{hkp14:sicomp}
Anna Huber, Andrei Krokhin, and Robert Powell.
\newblock Skew {b}isubmodularity and {v}alued {C}{S}{P}s.
\newblock {\em SIAM Journal on Computing}, 43(3):1064--1084, 2014.

\bibitem{Jeavons98:consist}
P.~Jeavons, D.~Cohen, and M.~C. Cooper.
\newblock Constraints, {C}onsistency and {C}losure.
\newblock {\em Artificial Intelligence}, 101(1--2):251--265, 1998.

\bibitem{Kazda17:soda}
Alexandr Kazda, Vladimir Kolmogorov, and Michal Rol{\'{\i}}nek.
\newblock Even delta-matroids and the complexity of planar {B}oolean
  {C}{S}{P}s.
\newblock In {\em Proceedings of the 28th Annual {A}{C}{M}-{S}{I}{A}{M}
  Symposium on Discrete Algorithms (SODA'17)}, 2017.

\bibitem{Khanna96:stoc}
Sanjeev Khanna and Rajeev Motwani.
\newblock Towards a syntactic characterization of {PTAS}.
\newblock In {\em Proceedings of the 28th Annual {ACM} Symposium on the Theory
  of Computing (STOC'96)}, pages 329--337, 1996.

\bibitem{Kolmogorov15:focs}
Vladimir Kolmogorov, Andrei~A. Krokhin, and Michal Rol\'inek.
\newblock The complexity of general-valued {C}{S}{P}s.
\newblock In {\em Proceedings of the 56th {A}nnual {I}{E}{E}{E} {S}ymposium on
  {F}oundations of {C}omputer {S}cience ({F}{O}{C}{S}'15)}. IEEE Computer
  Society, 2015.

\bibitem{Kolmogorov15:isaac}
Vladimir Kolmogorov, Michal Rol{\'{\i}}nek, and Rustem Takhanov.
\newblock Effectiveness of structural restrictions for hybrid {C}{S}{P}s.
\newblock In {\em Proceedings of the 26th International Symposium on Algorithms
  and Computation (ISAAC'15)}, volume 9472 of {\em Lecture Notes in Computer
  Science}, pages 566--577. Springer, 2015.

\bibitem{ktz15:sicomp}
Vladimir Kolmogorov, Johan Thapper, and Stanislav \Zivny.
\newblock The power of linear programming for general-valued {C}{S}{P}s.
\newblock {\em SIAM Journal on Computing}, 44(1):1--36, 2015.

\bibitem{kz13:jacm}
Vladimir Kolmogorov and Stanislav \Zivny.
\newblock The complexity of conservative valued {C}{S}{P}s.
\newblock {\em Journal of the ACM}, 60(2), 2013.
\newblock Article No. 10.

\bibitem{Kozik15:icalp}
Marcin Kozik and Joanna Ochremiak.
\newblock Algebraic properties of valued constraint satisfaction problem.
\newblock In {\em Proceedings of the 42nd International Colloquium on Automata,
  Languages and Programming (ICALP'15)}, volume 9134 of {\em Lecture Notes in
  Computer Science}, pages 846--858. Springer, 2015.

\bibitem{kz17:survey}
Andrei Krokhin and Stanislav \Zivny.
\newblock The complexity of valued {C}{S}{P}s.
\newblock In Andrei Krokhin and Stanislav \Zivny, editors, {\em Complexity and
  approximability of {C}onstraint {S}atisfaction {P}roblems}, volume~7 of {\em
  Dagstuhl Follow-Ups}, pages 233--266. Schloss Dagstuhl - Leibniz-Zentrum fuer
  Informatik, 2017.

\bibitem{Meeks16:dam}
Kitty Meeks.
\newblock The challenges of unbounded treewidth in parameterised subgraph
  counting problems.
\newblock {\em Discrete Applied Mathematics}, 198:170--194, 2016.

\bibitem{Moret88:sigact}
Bernard M.~E. Moret.
\newblock Planar {NAE3SAT} is in {P}.
\newblock {\em {SIGACT} News}, 19(2):51--54, 1988.

\bibitem{Mulzer08:jacm}
{Wolfgang} Mulzer and {G}{\"u}nter {Rote}.
\newblock Minimum-weight {Triangulation} is {NP}-hard.
\newblock {\em Journal of the {ACM}}, 55(2):11:1--11:29, 1998.

\bibitem{Rossi06:handbook}
Francesca Rossi, Peter van Beek, and Toby Walsh, editors.
\newblock {\em The {H}andbook of {C}onstraint {P}rogramming}.
\newblock Elsevier, 2006.

\bibitem{Schrijver03:CombOpt}
Alexander Schrijver.
\newblock {\em Combinatorial {O}ptimization: {P}olyhedra and {E}fficiency},
  volume~24 of {\em Algorithms and Combinatorics}.
\newblock Springer, 2003.

\bibitem{Takhanov10:stacs}
Rustem Takhanov.
\newblock A {D}ichotomy {T}heorem for the {G}eneral {M}inimum {C}ost
  {H}omomorphism {P}roblem.
\newblock In {\em Proceedings of the 27th International Symposium on
  Theoretical Aspects of Computer Science (STACS'10)}, pages 657--668, 2010.

\bibitem{Takhanov15:arxiv}
Rustem Takhanov.
\newblock {Hybrid {(V)CSP}s and algebraic reductions}.
\newblock Technical report, June 2015.
\newblock arXiv:1506.06540.

\bibitem{tz17:sicomp}
Johan Thapper and Stanislav \Zivny.
\newblock The power of {S}herali-{A}dams relaxations for general-valued
  {C}{S}{P}s.
\newblock {\em SIAM Journal on Computing}.
\newblock To appear, arXiv:1606.02577.

\bibitem{tz15:sidma}
Johan Thapper and Stanislav \Zivny.
\newblock Necessary {C}onditions on {T}ractability of {V}alued {C}onstraint
  {L}anguages.
\newblock {\em {SIAM} Journal on Discrete Mathematics}, 29(4):2361--2384, 2015.

\bibitem{tz15:icalp}
Johan Thapper and Stanislav \Zivny.
\newblock Sherali-{A}dams relaxations for valued {C}{S}{P}s.
\newblock In {\em Proceedings of the 42nd International Colloquium on Automata,
  Languages and Programming (ICALP'15)}, volume 9134 of {\em Lecture Notes in
  Computer Science}, pages 1058--1069. Springer, 2015.

\bibitem{tz16:jacm}
Johan Thapper and Stanislav \Zivny.
\newblock The complexity of finite-valued {C}{S}{P}s.
\newblock {\em Journal of the ACM}, 63(4), 2016.
\newblock Article No. 37.

\end{thebibliography}

\newcommand{\noopsort}[1]{}\newcommand{\Zivny}{\noopsort{ZZ}\v{Z}ivn\'y}

\end{document}